\author{Florian Stijven}
\theoremstyle{definition}
\newtheorem{definition}{Definition}[section]
\newtheorem{remark}{Remark}
\newtheorem{assumption}{Assumption}[section]
\newtheorem{theorem}{Theorem}[section]
\newtheorem{example}{Example}[section]
\title{Evaluation of Surrogate Endpoints Based on Meta-Analysis With Surrogate Indices}
\date{27 July 2026}
\newif\ifuniqueAffiliation
\author{ 
	\href{https://orcid.org/0000-0002-4574-8261}{\includegraphics[scale=0.06]{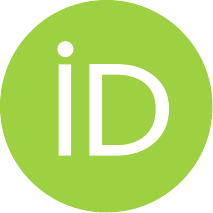}\hspace{1mm}Florian Stijven}
	\\
	I-BioStat \\
	KU Leuven\\
	Leuven, Belgium \\
	\texttt{florian.stijven@kuleuven.be} \\
	\and
	\href{https://orcid.org/0000-0002-2662-9427}{\includegraphics[scale=0.06]{orcid.pdf}\hspace{1mm}Peter B.~Gilbert} \\
	Vaccine and Infectious Disease Division \\
	Fred Hutchinson Cancer Center \\
	Seattle, WA, USA \\
	\texttt{pgilbert@fredhutch.org} \\
}
\newbox{\orcid}\sbox{\orcid}{\includegraphics[scale=0.06]{orcid.pdf}} 
\author[1]{%
	\href{https://orcid.org/0000-0000-0000-0000}{\usebox{\orcid}\hspace{1mm}David S.~Hippocampus\thanks{\texttt{hippo@cs.cranberry-lemon.edu}}}%
}
\author[1,2]{%
	\href{https://orcid.org/0000-0000-0000-0000}{\usebox{\orcid}\hspace{1mm}Elias D.~Striatum\thanks{\texttt{stariate@ee.mount-sheikh.edu}}}%
}
\affil[1]{Department of Computer Science, Cranberry-Lemon University, Pittsburgh, PA 15213}
\affil[2]{Department of Electrical Engineering, Mount-Sheikh University, Santa Narimana, Levand}
\begin{document}
	\maketitle
	\begin{abstract}
		The meta-analytic (MA) framework is the gold standard for evaluating putative surrogate endpoints but it is not well-suited for complex surrogates. 
		We address this limitation by considering real-valued summaries of complex surrogates as alternative, univariate putative surrogates. We focus on the surrogate index as a specific summary measure with desirable properties. 

		We first formalize the data-generating mechanism underlying the MA framework, making explicit the assumptions required for valid inferences in any evaluation of trial-level surrogacy. These assumptions are often left implicit in the MA framework.
		Building on this formalization, we show that, under certain conditions, the surrogate index maximizes the trial-level surrogacy among all real-valued summaries. 
		When the surrogate index is estimated, we show that valid inference about the trial-level surrogacy of this estimated surrogate index is possible.

		The proposed approach can be implemented using standard software and is illustrated using COVID-19 vaccine efficacy trials, where antibody markers are evaluated as candidate surrogate endpoints.
	\end{abstract}

	\keywords{Surrogate endpoints \and Meta-analysis \and Surrogate index \and Data-adaptive target parameter \and COVID-19}

	\tableofcontents

	\section{Introduction}

	In clinical trials, the primary endpoint is ideally a clinical endpoint, defined as an endpoint that ``describes or reflects how an individual feels, functions, or survives'' \parencite{fda2016}. However, clinical endpoints are often costly, time-consuming, and difficult to measure, prompting trialists to use so-called surrogate endpoints as primary endpoints. Surrogate endpoints are biomarkers or intermediate outcomes that may substitute for clinical endpoints. While using a surrogate endpoint can increase a trial's efficiency, it should be evaluated before it replaces a clinical endpoint. \textcite{prentice1989surrogate} was the first to propose statistical criteria for evaluating surrogate endpoints. Since then, numerous approaches have been developed, including the proportion explained \parencite{freedman1992statistical}, principal surrogacy \parencite{frangakis2002principal}, the information-theoretic causal inference approach \parencite{alonso2015relationship}, and the meta-analytic (MA) framework \parencite{buyse2000validation, alonso2016applied}. 
	\textcite{joffe2009related} provide an overview of these approaches, including causal extensions of \textcite{prentice1989surrogate}, and how they are related to each other. 

	The meta-analytic framework is the gold standard for evaluating surrogate endpoints \parencite{fleming2012biomarkers, IOM2011, weir2022informed, vanderweele2013surrogate}. 
	However, the canonical MA framework has an important limitation: 
	it is generally restricted to the evaluation of univariate surrogates. Although there is some work on multivariate surrogates---see, e.g., \textcite{gabriel2019optimizing, alonso2004validation}---such methods can only be applied for low-dimensional surrogates (e.g., bi- or trivariate surrogates) or when additional structure can be imposed on the multivariate surrogate.
	This is inconsistent with clinical reality, where promising surrogates may be complex.
	
	In recent work, \textcite{athey2025surrogate} and \textcite{gilbert2025surrogate} developed methods for estimating treatment effects on an unobserved clinical endpoint in a randomized trial, given only a surrogate and baseline covariates from that trial and using observational data where the clinical endpoint was measured together with the surrogate and baseline covariates.
	These methods are based on the so-called \textit{surrogate index}: a real-valued function of baseline covariates and the surrogate that is defined through a regression.
	The goal in \textcite{athey2025surrogate,gilbert2025surrogate} differs from surrogate evaluation and relies on strong identifying assumptions (e.g., the surrogate should satisfy Prentice's ``full mediation'' condition). Nonetheless, the ability of the surrogate index to estimate treatment effects on an unobserved clinical endpoint highlights its potential utility within the MA framework as a putative surrogate. Furthermore, because the surrogate index is defined via a regression function, it can be estimated using flexible machine-learning techniques and it can therefore reduce a complex (e.g., high-dimensional) surrogate to a univariate putative surrogate.

	In this paper, we extend the meta-analytic framework to evaluate complex surrogates by considering an (estimated) surrogate index as the candidate surrogate endpoint. 
	Our main contributions are: 
	\begin{enumerate}
		\item We derive conditions under which valid inference about the trial-level surrogacy of an estimated surrogate index is possible. These conditions still allow flexible machine-learning estimators for the surrogate index, making the meta-analytic evaluation feasible for complex surrogates that cannot be handled by existing methods.
		\item We formalize the data-generating mechanism underlying the MA framework and make explicit the untestable assumptions required for inference about trial-level surrogacy---assumptions that are often left implicit in the meta-analytic literature.
	\end{enumerate}

	The remainder of this paper is structured as follows: First, we introduce the data-generating mechanism underlying the MA framework in Section \ref{sec:ma-framework} together with a general discussion of relevant models and parameters. In Section \ref{sec:ma-surrogate-index}, we introduce parameters, and corresponding identifying assumptions, that are relevant for evaluating surrogates in the MA framework. Some of these parameters are data-adaptive, inference about such parameters is discussed in Section \ref{sec:inference-data-adaptive-estimand}. 
	Section \ref{sec:simulations} summarizes the results of simulations. 
	The proposed approach is applied to a set of COVID-19 vaccine efficacy trials in Section \ref{sec:data-application} where antibody markers are evaluated as putative surrogates for symptomatic infection.
	We end with some concluding remarks in Section \ref{sec:conclusions}.
	
	\section{Individual-Participant Data in a Meta-Analysis}\label{sec:ma-framework}

	In this section, we formalize the data-generating mechanism underlying the MA framework for evaluating surrogate endpoints. This data-generating mechanism is applicable to the canonical MA framework but is more general and formal than what is typically presented \parencite[e.g,][]{buyse2000validation, alonso2016applied}. This degree of generality is required for the novel approach proposed in this paper. We first introduce the data-generating mechanism for the individual-participant data from a single trial and then extend this to a population of trials. We then discuss models and target parameters from a general perspective. 


	\subsection{Data from a Single Trial}\label{sec:single-trial-data}

	We assume there exists a set of trials $\mathbb{T}$, which contains all possible trials in the area of interest. We use the notation $T \in \mathbb{T}$ to refer to a single trial from this set. The object $T$ contains all information about the corresponding trial, some of which is observable (e.g., trial protocol, treatment class, and sample size) and some of which is unobservable (e.g., the true observed-data distribution in that trial).

	\begin{example}[set of trials $\mathbb{T}$]
		In the data application in Section \ref{sec:data-application}, the set of trials $\mathbb{T}$ is the set of all phase 3 COVID-19 vaccine trials that could have been conducted in 2020--2021. 
	\end{example}

	In any trial $T \in \mathbb{T}$, the observable participant-level random variables are $(X, Z, S, Y)$, where $X \in \mathcal{X}$ is a set of baseline covariates; $Z \in \mathcal{Z} := \{0, 1\}$ is the assigned treatment where $Z = 0$ indicates control treatment and $Z = 1$ indicates experimental treatment; $S \in \mathcal{S}$ is the surrogate endpoint; and $Y \in \mathcal{Y} \subseteq \mathbb{R}$ is the clinical endpoint. 
	While the control and experimental treatments may differ across trials, we assume there is always an unambiguous control and experimental treatment.
	Unless mentioned otherwise, we also assume that there are no missing data on $(X, Z, S, Y)$ and that no intercurrent events occur that preclude the measurement of $S$ or $Y$.

	We let $\mathcal{M}^*$ be a model 
	for the distribution of $(X, Z, S, Y)$ for all $T \in \mathbb{T}$. We assume simple randomization in each trial throughout this paper; hence, $X \perp Z$ for all distributions in $\mathcal{M}^*$.
	We let $\mathcal{P}: \mathbb{T} \to \mathcal{M}^*$ be the mapping that extracts the observed-data distribution from a trial, further denoted by $P^T := \mathcal{P}(T)$.

	The trial's sample size is information contained in the object $T$. Hence, we let $n: \mathbb{T} \to \mathbb{N}$ be the mapping that extracts the sample size from $T$. Given that trial $T$ is included in the meta-analysis, the observed data for trial $T$ are $n(T)$ i.i.d.~samples from $P^{T}$. The corresponding empirical within-trial distribution is denoted by $\mathbb{P}^{T}_{n} \in \mathcal{M}^*_{\text{NP}}$, where $\mathcal{M}^*_{\text{NP}}$ is a non-parametric model for the distribution of $(X, Z, S, Y)$ that contains all possible empirical distributions. 
	Note that, for a fixed trial $T$, $\mathcal{P}(T)$ and $n(T)$ are non-random elements in, respectively, $\mathcal{M}^*$ and $\mathbb{N}$. By contrast, for the same fixed trial, $P^{T}_{n}$ is a random element in $\mathcal{M}^*_{\text{NP}}$ whose distribution is determined by $P^T$ and $n(T)$.

	\subsection{Data from a Population of Trials}\label{sec:ma-setting-notation}
	
	In the MA framework, we have individual-participant data on $(X, Z, S, Y)$ from $N$ clinical trials. A clinical trial is treated as the unit of analysis in this framework; hence, $T$ is treated as a random element. We now describe the data-generating mechanism where trials are sampled from a population of trials. 
	
	\subsubsection{Population of Trials}\label{sec:population-of-trials}

	The observed data are sampled hierarchically in the MA framework (as also visualized in Figure \ref{fig:sampling-mechanism-diagram}): 
	\begin{enumerate}
		\item Trials are distributed according to the \textit{full-data} distribution $P^F \in \mathcal{M}^F$ where $\mathcal{M}^F$ is a model for the distribution of trials, which is left unspecified. From now on, we consider $T$ as a random element distributed according to $P^F$. 
		
		We further assume that the $N$ available trials are i.i.d.~sampled from $P^F$ and denote the $j$'th sampled trial by $T_j$.
		\item We are only interested in particular characteristics of the trials $T$, i.e., particular mappings from $\mathbb{T}$. Because we now treat $T$ as a random element with distribution $P^F$, $P^T := \mathcal{P}(T)$ and $n := n(T)$ are also random elements whose distribution is determined by $P^F$.
		
		For the $j$'th sampled trial, we let $P^{T_j}$ and $n_j$ be the within-trial observed-data distribution and sample size, respectively.
		\item While $n$ is an observable random variable, $P^T$ is not observable; we only observe the empirical distribution $\mathbb{P}^{T}_{n}$.
		
		We again add subscript $j$ to refer to the $j$'th trial: $\mathbb{P}^{T_j}_{n_j}$ is the observed empirical distribution in the $j$'th trial. We further denote the observed data for participant $i$ in trial $j$ as $(X_{ji}, Z_{ji}, S_{ji}, Y_{ji})'$.
	\end{enumerate}

	Putting everything together, the observed-data distribution is denoted by $P \in \mathcal{M}$ where $\mathcal{M}$ is a model for the distribution of the observed trial-level data $(n, \mathbb{P}^T_n) \in \mathbb{N} \times \mathcal{M}^*_{\text{NP}}$. Furthermore, the observed-data distribution $P$ is determined by the full-data distribution $P^F$. This is represented by the mapping $h: \mathcal{M}^F \to \mathcal{M}$ where $h(P^F)$ is the observed-data distribution implied by $P^F$.
		
	We assume that the observed data $\left( n_j, \mathbb{P}^{T_j}_{n_j} \right)_{j=1}^N$ are $N$ i.i.d.~samples from $P_0 := h(P^F_0)$ where the subscript $0$ refers to the true distribution.
	In addition, we let $\mathbb{P}_N \in \mathcal{M}_{\text{NP}}$ be the empirical measure based on the observed trial-level data:
	\begin{equation*}
		\mathbb{P}_N(A) := \frac{1}{N} \sum_{j = 1}^N \mathbbm{1} \! \left\{ (n_j, \mathbb{P}_{n_j}^{T_j}) \in A \right\} \text{ for measurable } A \subseteq \mathbb{N} \times \mathcal{M}_{\text{NP}}^{*}.
	\end{equation*}

	\begin{remark}[random trial $T$]
		For the methods presented later on, the trial random element $T$ is superfluous because the observed data and target parameters only depend on $T$ through $P^T$ and $n$. Still, introducing $T$ clarifies the data-generating mechanism and its i.i.d.~assumption, and facilitates criticism of the meta-analysis. This matters because all ensuing inferences rely on the plausibility of the observed trials being an i.i.d.~sample from a population of trials.
	\end{remark}

	\begin{example}[i.i.d.~sampling of trials]
		In the data application in Section \ref{sec:data-application}, $P^F$ is the distribution of all phase 3 COVID-19 vaccine trials that could have been conducted in 2020--2021. Although this population is vague, it is required for inferences in the MA framework and determines where surrogacy conclusions generalize. If $S$ is a good trial-level surrogate in this population, it may justify using $S$ as a primary endpoint in future trials sampled from $P^F$, but not in trials outside this population (e.g., with different vaccine mechanisms or circulating variants).
	\end{example}

	\begin{figure}
		\centering
		\begin{tikzpicture}[edge from parent/.style={draw,-latex}]
			\tikzstyle{level 1}=[sibling distance=27.5mm, level distance = 13mm] 
			\tikzstyle{level 2}=[sibling distance=10mm, level distance = 13mm] 
			\node (full) {$P^F$}
			child {node {$T_1$}
				child {node (a1) {$P^{T_1}$} edge from parent[dashed]}
				child {node [draw] (a2) {$n_1$} edge from parent[dashed]}
			}
			child {node {$\ldots$} edge from parent[draw=none]
				child {node (b1) {$\ldots$} edge from parent[draw=none]}
			}
			child {node {$T_N$}
				child {node (c1) {$P^{T_N}$} edge from parent[dashed]}
				child {node [draw] (c2) {$n_N$} edge from parent[dashed]}
			};
			
			\coordinate (CENTERa) at ($(a1)!0.5!(a2)$);
			\coordinate (CENTERc) at ($(c1)!0.5!(c2)$);
			
			\node (a3) at (CENTERa) [yshift=-13mm, draw] {$(X_{1i}, Z_{1i}, S_{1i}, Y_{1i})_{i = 1}^{n_1}$};
			\node (b3) at (b1) [yshift=-13mm] {$\ldots$};
			\node (c3) at (CENTERc) [yshift=-13mm, draw] {$(X_{Ni}, Z_{Ni}, S_{Ni}, Y_{Ni})_{i = 1}^{n_N}$};
			
			\draw [->, -latex] (a1) -- (a3);
			\draw [->, -latex, dashed] (a2) -- (a3);
			\draw [->, -latex] (c1) -- (c3);
			\draw [->, -latex, dashed] (c2) -- (c3);
			
			\node (a4) at (a3) [yshift=-13mm, draw] {$\mathbb{P}^{T_1}_{n_1}$};
			\node (b4) at (b3) [yshift=-13mm] {$\ldots$};
			\node (c4) at (c3) [yshift=-13mm, draw] {$\mathbb{P}^{T_N}_{n_N}$};
			
			\node (empirical) at (b4) [yshift=-13mm, draw] {$P_N$};
			
			\node [right of =full, xshift=50mm] (fullmodel) {$P^F \in \mathcal{M}^F$};
			\node (distribution) at (fullmodel|-c2) {$P^{T_j} \in \mathcal{M}^*$};
			\node (trialempdistribution) at (fullmodel|-c4) {$\mathbb{P}^{T_j}_{n_j} \in \mathcal{M}^*_{NP}$};
			\node (trialempdistribution) at (fullmodel|-empirical) {$\mathbb{P}_N \in \mathcal{M}_{NP}, \; P = h(P^F) \in \mathcal{M}$};
			
			\draw [->, -latex, dashed] (a3) -- (a4);
			\draw [->, -latex, dashed] (c3) -- (c4);
			
			\draw [->, -latex, dashed] (a4) -- (empirical);
			\draw [->, -latex, dashed] (c4) -- (empirical);
		\end{tikzpicture}
		\caption{Data-generating mechanism underlying the meta-analytic framework. 
		Full arrows represent sampling.
			Dashed arrows represent mappings.  
			Objects surrounded by a rectangle represent observed data. $\mathcal{M}^F$: full-data model; $\mathcal{M}^*$: within-trial model; $\mathcal{M}^*_{\text{NP}}$: non-parametric within-trial model; $\mathcal{M}$: observed-data model; $\mathcal{M}_{\text{NP}}$: non-parametric observed-data model.}\label{fig:sampling-mechanism-diagram}
	\end{figure}
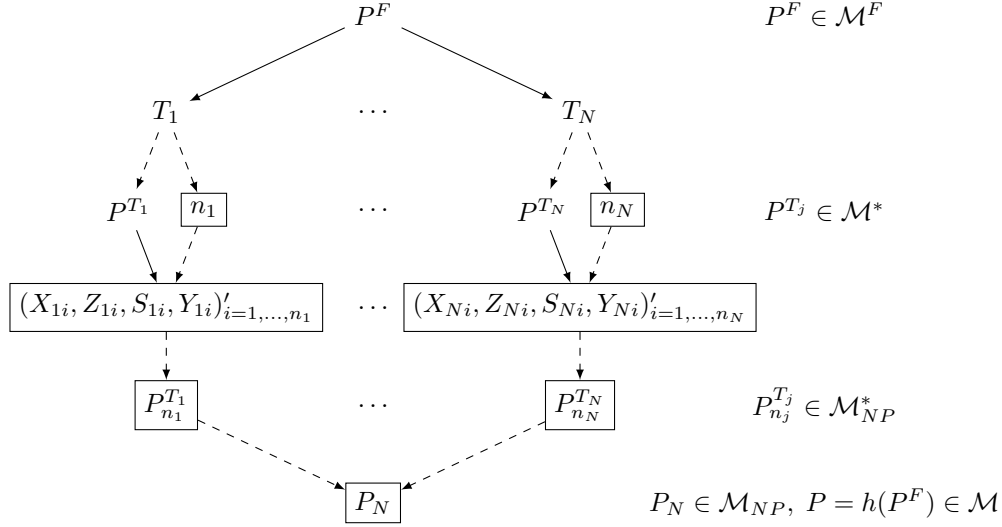

	\subsubsection{Trial-Level Treatment Effects}\label{sec:trial-level-treatment-effects}
	
	Two trial-level random variables are of special interest in a meta-analysis: the (unobserved) true trial-level treatment effects and the (observed) estimated trial-level treatment effects. Figure \ref{fig:trial-level-sampling-mechanism-diagram} illustrates how these two variables fit into the data-generating mechanism. 
	We extend the canonical MA framework by considering trial-level treatment effects on $g(X, S)$ for an arbitrary function $g: \mathcal{X} \times \mathcal{S} \to \mathbb{R}$, where $g(X, S)$ is termed the transformed surrogate. The canonical MA framework is a special case where $g(X, S) = S$ (and typically $\mathcal{S} \subset \mathbb{R}$). 
	Below, we introduce functionals and variables indexed by $g$; we omit the index $g$ if $g(X, S) = S$.
	
	\paragraph{True Treatment Effects}

	We let $\alpha^g: \mathcal{M}^* \to \mathbb{R}$ and $\beta: \mathcal{M}^* \to \mathbb{R}$ be functionals that map the within-trial distribution $P^T$ to the true treatment effects on the transformed surrogate and the clinical endpoint, respectively. In what follows, we will consider functionals of the following form, which only depend on conditional first moments of the within-trial distribution:
	\begin{equation}\label{eq:alpha-beta}
		\begin{split}
			\alpha^g(P^T) & := f_{\alpha} \left[ E_{P^T} \! \left\{ g(X, S) \mid Z = 1 \right\} , E_{P^T} \! \left\{ g(X, S) \mid Z = 0 \right\} \right], \text{ and} \\
			\beta(P^T) & := f_{\beta} \left[ E_{P^T} \! \left\{ Y \mid Z = 1 \right\}, E_{P^T} \! \left\{ Y \mid Z = 0 \right\} \right],
		\end{split}
	\end{equation}
	where $f_{\alpha}, f_{\beta}: \mathbb{R}^2 \to \mathbb{R}$ are contrast functions and $E_{P^T}$ is the expectation under $P^T$.
	
	Further on, we will interchangeably use $\alpha^g$ and $\beta$ to refer to the above functionals or to the corresponding random variables $\alpha^g := \alpha^g(P^T)$ and $\beta := \beta(P^T)$ for $T \sim P^F$. Note that these random variables are not observable.
	
	
	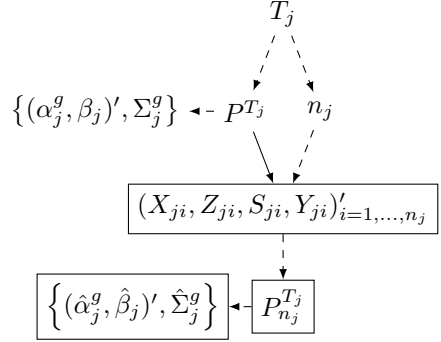
\begin{figure}
		\centering
		\begin{tikzpicture}[edge from parent/.style={draw,-latex}]
			\tikzstyle{level 1}=[sibling distance=10mm, level distance = 13mm] 
			\node (Tj) {$T_j$}
			child {
				node (PTj) {$P^{T_j}$} edge from parent[dashed]
			}
			child{
				node (nj) {$n_j$} edge from parent[dashed]
			};
			
			\coordinate (center) at ($(PTj)!0.5!(nj)$);
			
			\node (data) at (center) [yshift=-13mm, draw] {$(X_{ji}, Z_{ji}, S_{ji}, Y_{ji})'_{i = 1, \dots, n_j}$};
			\node (empirical) at (data) [yshift=-13mm, draw] {$\mathbb{P}^{T_j}_{n_j}$};
			
			\draw [->, -latex] (PTj) -- (data);
			\draw [->, -latex, dashed] (nj) -- (data);
			
			\node (alphabeta) [left of =PTj, xshift=-10mm] {$\left\{ (\alpha^g_j, \beta_j)', \Sigma^g_j \right\}$};
			\node (alphabetahat) [left of =empirical, xshift=-10mm, draw] {$\left\{ (\Hat{\alpha}^g_j, \Hat{\beta}_j)', \Hat{\Sigma}^g_j \right\}$};
			
			\draw [->, -latex, dashed] (data) -- (empirical);
			
			\draw [->, -latex, dashed] (PTj) -- (alphabeta);
			\draw [->, -latex, dashed] (empirical) -- (alphabetahat);
		\end{tikzpicture}
		\caption{True and estimated treatment effects as functionals of the true and empirical within-trial distributions, respectively. 
			Dashed arrows represent mappings. Full arrows represent sampling. 
			Objects surrounded by a rectangle represent the observed data.
			The dependence of functionals on the sample size is ommitted for clarity.}\label{fig:trial-level-sampling-mechanism-diagram}
	\end{figure}

	\paragraph{Estimated Treatment Effects}
	
	We let $\Hat{\alpha}^g: \mathbb{N} \times \mathcal{M}^*_{\text{NP}} \to \mathbb{R}$ and $\Hat{\beta}: \mathbb{N} \times \mathcal{M}^*_{\text{NP}} \to \mathbb{R}$ be functionals that map the empirical within-trial distribution $\mathbb{P}^{T}_{n}$ to the estimated treatment effects on the transformed surrogate and the clinical endpoint, respectively. 
	As before, we interchangeably use $\Hat{\alpha}^g$ and $\Hat{\beta}$ to refer to the functionals or to the corresponding random variables $\Hat{\alpha}^g := \Hat{\alpha}^g(n, \mathbb{P}^{T}_{n})$ and $\Hat{\beta} := \Hat{\beta}(n, \mathbb{P}^{T}_{n})$ for $(n, \mathbb{P}^T_n) \sim P$. Note that these random variables are observable. 

	A related mapping is $\Hat{\Sigma}^g: \mathbb{N} \times \mathcal{M}^*_{\text{NP}} \to \mathbb{R}^{2 \times 2}$, which maps the empirical within-trial distribution to the estimated covariance matrix of the root-$n$ standardized estimated treatment effects on the transformed surrogate and the clinical endpoint. As before, we interchangeably use $\Hat{\Sigma}^g$ to refer to the mapping or to the corresponding random matrix $\Hat{\Sigma}^g := \Hat{\Sigma}^g(n, \mathbb{P}^{T}_{n})$ for $(n, \mathbb{P}^T_n) \sim P$. Note that this random matrix is also observable, but that this is the \textit{estimated} covariance matrix and not the true covariance matrix, which we denote by $\Sigma^g: \mathbb{N} \times \mathcal{M}^* \to \mathbb{R}^{2 \times 2}$ and define as 
	\begin{equation*}
		\Sigma^g(n, P^T) := n \cdot \operatorname{Var}_{P^T, n} \! \left(\Hat{\alpha}^g, \Hat{\beta}  \right) ,
	\end{equation*}
	where $\operatorname{Var}_{P^T, n}$ is the sampling variance of the estimators when $n$ i.i.d.~samples are drawn from the true within-trial distribution $P^T$.
	The distinction between $\Hat{\Sigma}^g$ and $\Sigma^g$ will become important later on.

	
	\subsection{Models}

	We will work under a non-parametric observed-data model $\mathcal{M}$ unless stated otherwise. Hence, we will not make any testable assumptions, but we will need to make untestable identifying assumptions to identify certain target parameters. Although these assumptions cannot be tested with the observed data, some can be argued to hold approximately using known properties of certain estimators. 
	By contrast, in the canonical MA framework, the observed-data model $\mathcal{M}$ is often parametric and the target parameters can be identified without making additional identifying assumptions, as shown in Example \ref{example:parametric-trial-level-model} below. Semi-parametric models are also common in the canonical MA framework, where the identification relies on additional identifying assumptions, as shown in Example \ref{example:burzykowski} below.
	
	\begin{example}[linear mixed model]\label{example:parametric-trial-level-model}
		Consider a setting without baseline covariates and $\mathcal{S} = \mathbb{R}$. One could assume that the data in each trial are bivariate normal:
		\begin{equation*}
			(S_{ji}, Y_{ji})' \mid Z_{ji} = z \sim \mathcal{N}(\boldsymbol{\mu}_{j}^z, C_j),
		\end{equation*}
		where $\boldsymbol{\mu}_{j}^z$ is the trial-specific mean vector for treatment $z$ and $C_j$ is the trial-specific covariance matrix. 
		The within-trial model $\mathcal{M}^*$ is parametric here. 
		One could also make parametric assumptions about the distribution of the trial-specific parameters, for example,
		\begin{equation*}
			\begin{pmatrix}
				\boldsymbol{\mu}_j^0 \\
				\boldsymbol{\mu}_j^1
			\end{pmatrix}
			\sim \mathcal{N}(\boldsymbol{\theta}, D) \quad \text{and} \quad C_j = C.
		\end{equation*}
		Here, $\mathcal{M}$ is a parametric model with parameters $\left\{ \boldsymbol{\theta}, D, C \right\}$. This is in fact a linear mixed model, with random effects $\boldsymbol{\mu}_j^0$ and $\boldsymbol{\mu}^1_j$. This model has been proposed for evaluating surrogate endpoints \parencite{alonso2016applied, buyse2000validation}. The distribution of $(\alpha, \beta)'$, defined as mean differences, follows immediately as $(\alpha_j, \beta_j)' = \boldsymbol{\mu}_j^1 - \boldsymbol{\mu}_j^0 \sim \mathcal N\left( A \boldsymbol{\theta}, A D A^\top \right)$ for $A$ an appropriate contrast matrix.
	\end{example}


	\begin{example}[two-stage approach]\label{example:burzykowski}
		\textcite{burzykowski2001validation} assume that $\mathcal{M}^*$ is a semi-parametric Cox proportional hazards model.
		Let $(\Hat{\alpha}_j, \Hat{\beta}_j)'$ be the estimated log hazard ratios with corresponding estimated covariance matrix $\Hat{\Sigma}_j$.  \textcite{burzykowski2001validation} further assume that the true trial-level log hazard ratios $(\alpha_j, \beta_j)'$ are bivariate normal: $(\alpha_j, \beta_j)' \sim \mathcal N(\boldsymbol{\theta}, D)$.
		The parameters $\boldsymbol{\theta}$ and $D$ are not identified from the observed data without additional assumptions, however. \textcite{burzykowski2001validation}, therefore, assume that the sampling distribution $(\Hat{\alpha}_j - \alpha_j, \Hat{\beta}_j - \beta_j)'$ in a given trial $j$ is bivariate normal with mean zero and covariance matrix $\Sigma_j$, where additionally $\Hat{\Sigma}_j = \Sigma_j$. 
	\end{example}


	\subsection{Parameters and Estimands}

	As in the causal-inference and missing-data literature, we distinguish between full-data parameters as functionals of the full-data distribution, and statistical parameters as functionals of the observed-data distribution \parencite{hernan}. The values obtained by evaluating these functionals in the true distributions ($P^F_0$ and $P_0$, respectively) are called estimands.
	Full-data parameters are of primary interest but are not identified with the observed data. Therefore, untestable identifying assumptions are needed to equate statistical parameters with full-data parameters.
	After making these identifying assumptions, estimation of the full-data parameters becomes a statistical problem.

	We discuss general full-data and statistical parameters next. The specific parameters targeted by our methods will be introduced in Section \ref{sec:ma-surrogate-index}.

	\subsubsection{Full-Data Parameters}

	A full-data parameter is represented by the following mapping: $\Psi^F: \mathcal{M}^F \to \mathbb{R}^p$.
	The full-data parameters in meta-analyses only depend on the full-data distribution $P^F$ through its implied distribution of the trial-level treatment effects. For instance, the variance of $\beta$ is a measure of between-trial heterogeneity:
	\begin{equation}\label{eq:full-data-estimand-variance}
		\Psi^F(P^F) = E_{P^F} \! \left\{\beta - E_{P^F} \beta \right\}^2.
	\end{equation}
	Note that the above parameter is well defined under a non-parametric full-data model whenever $E_{P^F}\beta^2 < \infty$. Other parameters may only be defined under parametric assumptions, as illustrated in the next example.

	\begin{example}
		Assuming the linear mixed model from Example \ref{example:parametric-trial-level-model}, a full-data parameter could be the parameters of the random effects distribution: $\Psi^F(P^F) = (\boldsymbol{\theta}, D)$. Note that $\boldsymbol{\theta}$ and $D$ are implicitly functionals of $P^F$ and the parameter $\Psi^F$ is only defined under the parametric full-data model $\mathcal{M}^F$ that satisfies the linear mixed model assumptions.
	\end{example}

	\subsubsection{Statistical Parameters}

	A statistical parameter is represented by the following mapping: $\Psi: \mathcal{M} \to \mathbb{R}^p$.
	Most statistical parameters are only meaningful if they equal a relevant full-data parameter. This typically requires identifying assumptions (i.e., restrictions on the full-data model $\mathcal{M}^F$). More specifically, we say that the full-data parameter $\Psi^F$ is identified by the observed-data distribution if there exists a statistical parameter such that $\Psi^F(P^F) = \Psi(h(P^F))$ for all $P^F \in \mathcal{M}^F$.
	Continuing with the full-data parameter in (\ref{eq:full-data-estimand-variance}), we let $\hat{\sigma}_{\Hat{\beta}}^2$ be the element of $\Hat{\Sigma}^g$ coresponding to $\Hat{\beta}$. Then, the following statistical parameter:
	\begin{equation}\label{eq:statistical-estimand-variance}
		\Psi(P) = E_{P} \! \left\{ \Hat{\beta} - E_{P} \Hat{\beta} \right\}^2 - E_{P} \left( n^{-1} \hat{\sigma}_{\Hat{\beta}}^2 \right)
	\end{equation}
	identifies the full-data parameter if the following identifying assumptions hold for all $P^F \in \mathcal{M}^F$:
	\begin{equation*}
		E_{P^F}\beta = E_{h(P^F)} \Hat{\beta} \quad \text{and} \quad E_{P^F} \! (\Hat{\beta} - \beta)^2 = E_{h(P^F)} \! \left( n^{-1} \hat{\sigma}_{\Hat{\beta}}^2 \right),
	\end{equation*}
	which follows from adding and subtracting $\beta$ inside the first expectation in (\ref{eq:statistical-estimand-variance}).

	\section{Meta-Analysis with an (Optimal) Transformed Surrogate}\label{sec:ma-surrogate-index}

	The main measure of surrogacy in the canonical MA framework is the (squared) Pearson correlation between the treatment effects on the surrogate and the clinical endpoint. This is the \textit{trial-level correlation}, which quantifies the \textit{trial-level surrogacy} \parencite[see, e.g.,][]{sargent2005disease, weir2022informed, buyse2008individual}. 
	
	In the previous section, we introduced the treatment effect on the transformed surrogate, $\alpha^g$, for an arbitrary function $g: \mathcal{X} \times \mathcal{S} \to \mathbb{R}$. 
	One could define the optimal function in terms of best performance as a trial-level surrogate as the function that maximizes the correlation between $\alpha^g$ and $\beta$: $g_{\text{opt}} := \arg \max_g \text{Corr}(\alpha^g, \beta)$ where $\text{Corr}(\alpha^g, \beta)$ is the Pearson correlation between $\alpha^g$ and $\beta$.
	However, implementing this is not practically feasible because the number of independent trials is usually very small (e.g., there are six independent trials in our application) and, hence, overfitting will occur. 
	An alternative approach to finding a (near-)optimal $g$ is needed.

	As mentioned in the introduction, \textcite{athey2025surrogate} proposed a method that allows for inference about the treatment effect on $Y$ in a given trial where only $(X, Z, S)$ were observed. This method relies on two key identifying assumptions, which we will informally translate to the MA framework: (i) surrogacy and (ii) comparability. 
	We only present these assumptions and the corresponding identification arguments informally because they serve as a justification of our choice for an ``optimal" $g$, but our approach remains valid (but may no longer be optimal) under violations of surrogacy and comparability.
	\begin{assumption}[surrogacy]
		$\forall \; T \in \mathbb{T}: Y \perp Z \mid X, S \quad \text{for} \quad (X, Z, S, Y) \sim P^T$.
	\end{assumption}
	The comparability assumption, stated below, involves the surrogate index $g_0$ which we define as
	\begin{equation*}
		g_0(x, s) := E \! \left\{ w(x, s; T) E_{P^T}  (Y \mid X = x, S = s )\right\},
	\end{equation*}
	where $w(x, s; T)$ is an arbitrary trial-level weight function such that $E \! \left\{ w(x, s; T) \right\} = 1$ for all $(x, s) \in \mathcal{X} \times \mathcal{S}$.
	The surrogate index defined above is thus a pooled regression function where the pooling across trials and treatment groups depends on the weight function. 
	Two special cases are $w(x, s; T) = 1$ for all $T$ and $w(x, s; T) = n(T) / E\{n(T)\}$. The former corresponds to pooling the individual-participant data across all trials while reweighting participants inversely by their trial size (i.e., each trial receives the same weight), whereas the latter corresponds to pooling without reweighting (i.e., each participant receives the same weight).
	In the remainder of this paper, we consider pooling without reweighting.
	\begin{assumption}[comparability]
		$\forall \; T \in \mathbb{T}: E_{P^T}(Y \mid X = x, S = s) = g_0(x, s)$ for $P^T$ almost every $(x, s)$.
	\end{assumption}
	Under these two assumptions, the mean of $Y$ under treatment $z$ in an arbitrary trial $T$ can be identified from the distribution of $g_0(S, X)$ in the same trial:
	\begin{align*}
		E_{P^T} \! \left\{ Y \mid Z = z \right\} & = E_{P^T} \! \left\{ E_{P^T}(Y \mid X, Z = z, S) \mid Z = z \right\} &&~\mbox{(iterated expectation)} \\
		& = E_{P^T} \! \left\{ E_{P^T}(Y \mid X, S) \mid Z = z  \right\} &&~\mbox{(surrogacy)} \\
		& = E_{P^T} \! \left\{ g_0(X, S) \mid Z = z \right\} &&~\mbox{(comparability)}
	\end{align*}
	If surrogacy and comparability hold, $g_0$ is the optimal function in the sense that $\text{Corr}(\alpha^{g_0}, \beta) = 1$ for treatment effect measures that only depend on the first conditional moments as in (\ref{eq:alpha-beta}).
	Even if surrogacy and comparability do not hold---which is expected in most applications---$g_0(S, X)$ is well defined and is expected to be a good summary of $(X, S)$ as long as surrogacy and comparability hold approximately.
	The key difference between $g_{\text{opt}}$ defined before and $g_0$, however, is that $g_0$ can be estimated by regressing $Y$ onto $X$ and $S$ by pooling the individual-participant data from all trials. Overfitting will be less of an issue because the number of participant-level observations is much larger than the number of trials. 
	Nevertheless, some caution regarding overfitting is warranted; see Appendix \ref{appendix:overfitting-of-the-surrogate-index-estimator} for further discussion.

	In the remainder of this section, we first discuss full-data and statistical parameters, and corresponding identifying assumptions. 
	We then discuss inference about these parameters when (i) $g$ is known a priori and (ii) when $g$ has been estimated with the observed data $\left( n_j, \mathbb{P}_{n_j}^{T_j} \right)_{j = 1}^{N}$.

	\subsection{Parameters and Estimands}

	For every $g$, the parameters of interest are features of the distribution of $(\alpha^g, \beta)$. Since this distribution is indexed by $g$, the parameters are themselves also indexed by $g$ and the question arises which parameter is most relevant.
	We divide the set of parameters into three classes according to the nature of $g$:
	First, $g$ may be selected a priori (e.g., estimated with external data). We discuss this as a starting point for the more difficult data-adaptive parameters.  
	Second, one may be interested in the trial-level surrogacy of the unknown surrogate index $g_0$. We will not target such parameters because (i) inference about parameters indexed by $g_0$ is only possible under strong conditions on the rate at which the \textit{estimated surrogate index}, denoted by $g_N$, converges to $g_0$ and (ii) the practical relevance of this parameter is questionable since $g_0$ will remain unknown.
	Third, one may be interested in the trial-level surrogacy of the estimated surrogate index $g_N$. We prefer this parameter over the previous one because (i) inference is possible under weaker conditions and (ii) $g_N$ is estimated from the data and $g_N(X, S)$ is thus usable as a surrogate endpoint.

	\subsubsection[A Priori Selected g]{A Priori Selected $g$}\label{sec:full-data-estimands-a-priori-g}

	
	The canonical MA framework assumes a bivariate normal model for $(\alpha^g, \beta)'$ where $g(X, S) = S \in \mathbb{R}$. While this could hold for a single fixed $g$, extending it to the current setting would require bivariate normality for every $g \in \mathcal{G}$, which is impossible when $\mathcal{G}$ is sufficiently large. 
	We therefore avoid parametric assumptions on the distribution of $(\alpha^g, \beta)'$. The bivariate normal model remains useful as a reference because its parameters are easy to interpret.

	Following \textcite{vansteelandt2022assumption}, we define full-data parameters that coincide with the bivariate normal model parameters when those assumptions hold, while remaining interpretable when they do not.
	For a given $g$, this yields the following full-data parameters:
	\begin{equation}\label{eq:full-data-estimand}
		\begin{pmatrix}
			\Psi^{F, g}_{1}(P^F) \\
			\Psi^{F, g}_{2}(P^F)
		\end{pmatrix} =
		\begin{pmatrix}
			E_{P^F} \alpha^g \\
			E_{P^F} \beta 
		\end{pmatrix} \quad \text{and} \quad
		\begin{pmatrix}
			\Psi^{F, g}_{3}(P^F) & \Psi^{F, g}_{5}(P^F) \\
			\Psi^{F, g}_{5}(P^F) & \Psi^{F, g}_{4}(P^F)
		\end{pmatrix} =
		\operatorname{Var}_{P^F} \left\{ \begin{pmatrix}
			\alpha^g \\
			\beta
		\end{pmatrix}\right\},
	\end{equation}
	where we let $\Psi^{F, g}(P^F) := (\Psi^{F, g}_{1}(P^F), \dots, \Psi^{F, g}_{5}(P^F))'$.
	Note that we are mainly interested in the trial-level correlation between $\alpha^g$ and $\beta$. However, it is more convenient to work with the above target parameters because they are very simple functionals of the full-data distribution and the trial-level correlation is a function of these parameters anyhow. 

	
	Since we do not observe $(\alpha^g, \beta)'$ directly, the full-data parameters are not identifiable without identifying assumptions. 
	We further let $\mathcal{G}$ be a set of real-valued functions on $\mathcal{X} \times \mathcal{S}$ and let $E_{P^T, n}$ be the expectation over the empirical distributions $\mathbb{P}^T_n$ generated by sampling $n = n(T)$ i.i.d.~observations from $P^T$.
	Under the following two identifying assumptions, the full-data parameters are identified with the observed-data distribution.
	\begin{assumption}[unbiased estimation of trial-level treatment effects]\label{assumption:unbiased-trt-effect}
		The trial-level treatment effect estimators are unbiased in the sense that $\forall \; T \in \mathbb{T}, g \in \mathcal{G}: \; E_{P^T, n} \! \left\{ (\Hat{\alpha}^g, \Hat{\beta})' \right\} = \left(\alpha^g(P^T), \beta(P^T)\right)'$.
	\end{assumption}
	\begin{assumption}[unbiased estimation of within-trial sampling variability]\label{assumption:unbiased-sampling-variance}
		The sampling variability of the trial-level treatment effect estimators is unbiasedly estimated in the sense that $\forall \; T \in \mathbb{T}, g \in \mathcal{G}: \; E_{P^T, n}(\Hat{\Sigma}^g ) = \Sigma^g(n, P^T)$.
	\end{assumption}	
	\begin{restatable}{proposition}{propositionidmeanvariance}
		\label{proposition:identification-mean-variance}
		Define the statistical parameter $\Psi^g := (\Psi^g_1, \dots, \Psi^g_5)'$ as follows:
		\begin{equation}\label{eq:identification-mean-variance}
			\begin{pmatrix}
				\Psi^g_1(P) \\
				\Psi^g_2(P)
			\end{pmatrix} = 
			\begin{pmatrix}
				E_P \Hat{\alpha}^g \\
				E_P \Hat{\beta}
			\end{pmatrix} \quad \text{and} \quad
			\begin{pmatrix}
				\Psi^g_3(P) & \Psi^g_5(P) \\
				\Psi^g_5(P) & \Psi^g_4(P)
			\end{pmatrix} =
			\operatorname{Var}_P \left\{ \begin{pmatrix}
				\Hat{\alpha}^g \\
				\Hat{\beta}
			\end{pmatrix}\right\} - E_P \left( n^{-1} \Hat{\Sigma}^g \right),
		\end{equation}
		where $P \in \mathcal{M}$ is an observed-data distribution. 
		Under Assumptions \ref{assumption:unbiased-trt-effect} and \ref{assumption:unbiased-sampling-variance}, the full-data parameter (\ref{eq:full-data-estimand}) is identified in the sense that $\Psi^{F, g}(P^F) = \Psi^g(h(P^F))$ for all $P^F \in \mathcal{M}^F$ and $g \in \mathcal{G}$. 
	\end{restatable}
	Note that Assumptions \ref{assumption:unbiased-trt-effect} and \ref{assumption:unbiased-sampling-variance} are stated for a set of possible transformations $\mathcal{G}$. If one is interested in only one specific transformation $g$, we can set $\mathcal{G} = \{g\}$. Later on when the transformation will be estimated, these assumptions will be required to hold for $\mathcal{G}$ being a large class of functions. 

	\begin{remark}[relaxing unbiasedness assumptions]
		Assumptions \ref{assumption:unbiased-trt-effect} and \ref{assumption:unbiased-sampling-variance} impose strict unbiasedness on the trial-level treatment effect estimators. This can be weakened by adopting an alternative asymptotic regime in which within-trial sample sizes increase with the number of sampled trials $N$ (Appendix \ref{appendix:alternative-asymptotic-framework}). This change does not alter the estimation procedure. It only replaces exact unbiasedness in Assumptions \ref{assumption:unbiased-trt-effect} and \ref{assumption:unbiased-sampling-variance} with the requirement that within-trial biases converge to zero sufficiently fast as the within-trial sample size increases. In most meta-analyses, where $n \gg N$, these rate conditions are mild.
		Appendix \ref{appendix:simulation-results} includes simulations that illustrate this point.
	\end{remark}

	\subsubsection[Estimated Surrogate Index gN]{Estimated Surrogate Index $g_N$}

	The parameters introduced above are indexed by $g \in \mathcal{G}$, and as argued before, the surrogate index $g_0$ is a good choice for $g$. However, $g_0$ is unknown and needs to be estimated.
	If the surrogate index is estimated with the observed data, the target parameters become data-adaptive: These parameters are indexed by the estimated surrogate index $g_N$: $\Psi^{F, g_N}$ and $\Psi^{g_N}$, where $\Psi^{F, g_N}$ is the full-data parameter and $\Psi^{g_N}$ is the corresponding statistical parameter defined as before but with $g$ replaced by $g_N$.

	To obtain valid inferences about the data-adaptive parameter $\Psi^{g_N}$, we will impose Donsker conditions on $g_N$ in Section \ref{sec:inference-data-adaptive-estimand}.
	This limits the complexity of the allowed surrogate index estimators even though many machine learning methods can still be used. Methods for inference about data-adaptive target parameters that avoid Donsker conditions through cross-fitting exist (see, e.g., \textcite[Section 7]{van2015targeted}), but they cannot target $\Psi^{g_N}$. Instead, they target different types of data-adaptive parameters that are more difficult to interpret as the cross-fitting scheme itself defines the parameter \parencite{hubbard2016statistical}. We, therefore, do not further consider cross-fitting.

	\subsection{Estimators}

	\subsubsection{Plug-In Estimator}

	We consider the plug-in estimator for the (non-data-adaptive) statistical parameter $\Psi^g$ in (\ref{eq:identification-mean-variance}), denoted by $\Hat{\Psi}^g(P) = (\Hat{\Psi}_1^g(P), \dots, \Hat{\Psi}_5^g(P))'$ for $P \in \mathcal{M}_{\text{NP}}$. We consider the same estimator for the data-adaptive parameter $\Psi^{g_N}$ by simply replacing $g$ with $g_N$.
	This estimator, evaluated in the empirical distribution $\mathbb{P}_N$, has the following closed-form expression:
	\begin{equation}\label{eq:plug-in-estimator}
		\begin{pmatrix}
			\Hat{\Psi}_1^g(\mathbb{P}_N) \\
			\Hat{\Psi}_2^g(\mathbb{P}_N)
		\end{pmatrix} = 
		\begin{pmatrix}
			E_{\mathbb{P}_N} \Hat{\alpha}^g \\
			E_{\mathbb{P}_N} \Hat{\beta}
		\end{pmatrix} \quad \text{and} \quad
		\begin{pmatrix}
			\Hat{\Psi}_3^g(\mathbb{P}_N) & \Hat{\Psi}_5^g(\mathbb{P}_N) \\
			\Hat{\Psi}_5^g(\mathbb{P}_N) & \Hat{\Psi}_4^g(\mathbb{P}_N)
		\end{pmatrix} =
		\operatorname{Var}_{\mathbb{P}_N}\left\{ \begin{pmatrix}
			\Hat{\alpha}^g \\
			\Hat{\beta}
		\end{pmatrix}\right\} - E_{\mathbb{P}_N} \! \left( n^{-1} \Hat{\Sigma}^g \right).					
	\end{equation}
	This plug-in estimator solves the estimating equation $N^{-1} \sum_{j=1}^N \phi_{\boldsymbol{\theta}}^g(n_j, \mathbb{P}^{T_j}_{n_j}) = \boldsymbol{0}$, where $\phi_{\boldsymbol{\theta}}^g: \mathbb{N} \times \mathcal{M}_{\text{NP}}^* \to \mathbb{R}^5$ is indexed by the parameter $\boldsymbol{\theta}$. 
	In the simulations and data application, a finite-sample adjusted version of the plug-in estimator is used, where the variance of $(\Hat{\alpha}^g, \Hat{\beta})$ is estimated by dividing by $N - 1$ instead of $N$.
	Further details on the estimator and its estimating functions are given in Appendix \ref{appendix:plug-in-estimator}. For a non-data-adaptive target (fixed $g$), the estimator is asymptotically linear and inference can rely on the sandwich variance estimator. In the next section, we show that inference is also possible for the data-adaptive parameter $\Psi^{g_N}$ under conditions on $g_N$.


	\subsubsection{Efficiency}

	Assumptions \ref{assumption:unbiased-trt-effect} and \ref{assumption:unbiased-sampling-variance} restrict the full-data model $\mathcal{M}^F$ and, therefore, the induced observed-data model $\mathcal{M} := \{ h(P^F) : P^F \in \mathcal{M}^F \}$. Under regularity conditions, the model $\mathcal{M}$ is locally non-parametric at the true $P_0 = h(P^F_0)$. Consequently, any two regular asymptotically linear estimators are asymptotically equivalent and attain the semiparametric efficiency bound \parencite{kennedy2017semiparametric}. In particular, the plug-in estimator \eqref{eq:plug-in-estimator} is efficient and any regular asymptotically linear estimator will be asymptotically equivalent to it.

	\begin{restatable}{proposition}{propositionnonparametricmamodel}
		\label{proposition:non-parametric-ma-model}
		Under Assumptions \ref{assumption:unbiased-trt-effect} and \ref{assumption:unbiased-sampling-variance}, fix $g$ and assume that
		\begin{enumerate}
			\item[(i)] the support of $(\alpha^g, \beta) \mid (\Sigma^g, n)$ contains a nonempty open set in $\mathbb{R}^2$ $P^F_0$-a.s., and
			\item[(ii)] $\Sigma^g$ is positive definite $P^F_0$-a.s.
		\end{enumerate}
		Then the induced model for the observed data $\bigl(\Hat{\alpha}^g, \Hat{\beta}, \Hat{\Sigma}^g, n\bigr)$ is locally non-parametric at $P_0 = h(P^F_0)$.
	\end{restatable}

	Appendix \ref{appendix:semi-parametric-efficiency-analysis} further discusses how this result implies that intuitive weighting schemes (e.g., inverse-variance weighting) are valid only under additional restrictions on the observed-data model.

	\section{Inference About Trial-Level Surrogacy of the Estimated Surrogate Index}\label{sec:inference-data-adaptive-estimand}

	In this section, we discuss inference for data-adaptive target parameters, with particular focus on $\Psi^{g_N}$.
	We first introduce general notation that is not specific to the MA framework and summarize relevant existing results. We then apply these results, together with a new technical theorem in Appendix \ref{appendix:discussion-conditions-theorem-bootstrap-for-data-adaptive-estimands}, to the MA framework and the estimated surrogate index.

	\subsection{Notation}\label{sec:notation-inference}

	We first introduce general notation that is used throughout this section.
	Let $P_0 \in \mathcal{M}$ denote the true observed-data distribution and $\mathbb{P}_N \in \mathcal{M}_{\text{NP}}$ the empirical distribution based on $N$ i.i.d.~observations $O \in \Omega$ from $P_0$.
	For any integrable $f: \Omega \to \mathbb{R}$, we write $P_0 f := \int_{\Omega} f(o) dP_0(o)$ and $\mathbb{P}_N f := \int_{\Omega} f(o) dP_N(o)$. Convergence in distribution is denoted by $\overset{d}{\to}$.
	Let $g \in \mathbb{D}$ index the statistical parameter $\Psi^g: \mathcal{M} \to \mathbb{R}^p$, where $(\mathbb{D}, d)$ is a metric space.
	Let $g_N \in \mathbb{D}$ be data-adaptive (i.e., a function of $\mathbb{P}_N$); the corresponding target parameter is $\Psi^{g_N}$.


	\subsection{General Data-Adaptive Target Parameters}\label{sec:inference-estimated-g}

	Theorem 3 from \textcite{hubbard2016statistical}, reproduced below with slight changes in notation, shows when and how asymptotically valid inferences are possible for general data-adaptive target parameters $\Psi^{g_N}$. 
	\begin{theorem}\label{theorem:asymptotic-linearity-data-adaptive-estimand}
		Assume $\Hat{\Psi}^g(P_N)$ is an asymptotically linear estimator of $\Psi^g(P_0)$ at $P_0$ with influence function $IC^g(P_0): \Omega \to \mathbb{R}$ \textit{uniformly in the choice of the parameter} $g$ in the following sense:
		\begin{equation*}
			\Hat{\Psi}^{g_N}(\mathbb{P}_N) - \Psi^{g_N}(P_0) = (\mathbb{P}_N - P_0) IC^{g_N}(P_0) + R_N
		\end{equation*}
		where $R_N = o_P(N^{-1 / 2})$. In addition, assume $P_0 \left\{ IC^{g_N}(P_0) - IC^{g_0}(P_0) \right\}^2 = o_P(1)$ and $IC^{g_N}(P_0)$ is an element of a $P_0$-Donsker class with probability tending to 1. Then,
		\begin{equation}\label{eq:uniform-asymptotic-linearity}
				\Hat{\Psi}^{g_N}(\mathbb{P}_N) - \Psi^{g_N}(P_0) = (P_N - P_0) IC^{g_0}(P_0) + o_P(N^{-1 / 2})
		\end{equation}
		and thus $N^{1 / 2} \left\{ \Hat{\Psi}^{g_N}(\mathbb{P}_N) - \Psi^{g_N}(P_0) \right\}$ is asymptotically normally distributed with mean zero and variance $\sigma^2 = P_0 IC^{g_0}(P_0)^2$.
	\end{theorem}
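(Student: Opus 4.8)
The plan is to start from the assumed uniform asymptotic linearity expansion and reduce it to an expansion around the \emph{fixed} influence function $IC^{g_0}(P_0)$, after which the conclusion follows from the classical central limit theorem. First I would add and subtract $IC^{g_0}(P_0)$ inside the empirical process term to write
\begin{equation*}
	(P_N - P_0) IC^{g_N}(P_0) = (P_N - P_0) IC^{g_0}(P_0) + (P_N - P_0)\left\{ IC^{g_N}(P_0) - IC^{g_0}(P_0) \right\}.
\end{equation*}
Substituting this into the hypothesis and absorbing the $R_N = o_P(N^{-1/2})$ remainder, establishing the claimed expansion (\ref{eq:uniform-asymptotic-linearity}) reduces entirely to showing that the cross term satisfies $(P_N - P_0)\left\{ IC^{g_N}(P_0) - IC^{g_0}(P_0) \right\} = o_P(N^{-1/2})$.

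The main obstacle, and the heart of the argument, is controlling this cross term, which I would handle with a standard stochastic-equicontinuity argument from empirical process theory. The two remaining hypotheses are tailored precisely for this: $IC^{g_N}(P_0)$ lies in a fixed $P_0$-Donsker class $\mathcal{F}$ with probability tending to one, and $P_0\{IC^{g_N}(P_0) - IC^{g_0}(P_0)\}^2 = o_P(1)$, i.e.\ $IC^{g_N}(P_0) \to IC^{g_0}(P_0)$ in $L_2(P_0)$. By the Donsker property, the empirical process $\{\sqrt{N}(P_N - P_0) f : f \in \mathcal{F}\}$ is asymptotically equicontinuous with respect to the $L_2(P_0)$ seminorm; combined with the $L_2$ convergence of the (random) index $IC^{g_N}(P_0)$ to $IC^{g_0}(P_0)$, this yields $\sqrt{N}(P_N - P_0)\{IC^{g_N}(P_0) - IC^{g_0}(P_0)\} = o_P(1)$, and dividing by $\sqrt{N}$ gives exactly the required $o_P(N^{-1/2})$ bound. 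A subtlety worth flagging is that $g_N$ is data-dependent, so $IC^{g_N}(P_0)$ is a \emph{random} function rather than a fixed one; the Donsker condition is what legitimises treating it as if it ranged over a fixed class, preventing the over-fitting that would otherwise invalidate the expansion. This is the step where the complexity restriction on the surrogate-index estimator genuinely bites.

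With the cross term shown to be negligible, the expansion (\ref{eq:uniform-asymptotic-linearity}) holds with a non-random influence function, and the final step is immediate. Since $IC^{g_0}(P_0)$ is a fixed, square-integrable function with mean zero under $P_0$ (that is, $P_0 IC^{g_0}(P_0) = 0$, a defining property of an influence function), the classical central limit theorem gives $\sqrt{N}(P_N - P_0) IC^{g_0}(P_0) \overset{d}{\to} \mathcal{N}(0, P_0 IC^{g_0}(P_0)^2)$. Multiplying the expansion through by $\sqrt{N}$ and absorbing the remaining $o_P(1)$ term via Slutsky's theorem then yields that $N^{1/2}\{\Hat{\Psi}^{g_N}(P_N) - \Psi^{g_N}(P_0)\}$ is asymptotically normal with mean zero and variance $\sigma^2 = P_0 IC^{g_0}(P_0)^2$, completing the proof.
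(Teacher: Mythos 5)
Your proof is correct, and it is exactly the standard argument underlying this result: the paper itself gives no proof of Theorem \ref{theorem:asymptotic-linearity-data-adaptive-estimand}, which it reproduces from Hubbard et al.~(2016), and their proof uses precisely your decomposition together with the asymptotic-equicontinuity step (van der Vaart's Lemma 19.24) to kill the cross term $(P_N - P_0)\left\{ IC^{g_N}(P_0) - IC^{g_0}(P_0) \right\}$, followed by the CLT and Slutsky. The only technicality worth noting is that the equicontinuity argument requires $IC^{g_0}(P_0)$ to lie in (the $L_2(P_0)$-closure of) the Donsker class, which here is automatic since it is the $L_2$ limit of elements of that class.
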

	In each specific application with data-adaptive target parameters, one must verify that the conditions of this theorem hold. This is straightforward when the target parameter is a mean of a data-adaptively determined function as then $R_N = 0$, but it is more difficult for general data-adaptive target parameters.
	Theorem \ref{theorem:bootstrap-for-data-adaptive-estimands} in Appendix \ref{appendix:discussion-conditions-theorem-bootstrap-for-data-adaptive-estimands} specializes Theorem \ref{theorem:asymptotic-linearity-data-adaptive-estimand} to parameters defined as solutions to estimating equations without nuisance parameters. The resulting conditions are easier to verify in practice and, moreover, Theorem \ref{theorem:bootstrap-for-data-adaptive-estimands} includes a bootstrap result.

	\subsection{Plug-in Estimator for Data-Adaptive Target Parameter in MA Framework}

	Proposition \ref{proposition:sufficiency-conditions-theorem} in Appendix \ref{appendix:discussion-conditions-theorem-bootstrap-for-data-adaptive-estimands} provides a set of conditions (further referred to as conditions (a--f)) that are sufficient for the conditions in Theorem \ref{theorem:bootstrap-for-data-adaptive-estimands} to hold for the plug-in estimator (\ref{eq:plug-in-estimator}) of the data-adaptive target parameter $\Psi^{g_N}$ defined in (\ref{eq:identification-mean-variance}).
	We discuss the key conditions from Appendix \ref{sec:plug-in-estimator-statistical-estimand} only briefly here. See Appendix \ref{sec:plug-in-estimator-statistical-estimand} for the full set of conditions and a more detailed discussion.

	Condition (a) requires that the estimated surrogate index $g_N$ converges to a fixed function in the supremum norm.
	\begin{enumerate}
		\item[(a)] There exists a $g^* \in \mathcal{G}$ such that $\lVert g_N - g^* \rVert_{\infty} \overset{P}{\to} 0$ and $g_N \in \mathcal{G}$ with probability tending to one. 
	\end{enumerate}
	Though weak and satisfied by many regression methods, this condition has important finite-sample implications for overfitting and positivity violations, discussed in Appendix \ref{appendix:overfitting-of-the-surrogate-index-estimator}. 

	Conditions (c--d) restrict the complexity of the trial-specific treatment effect estimators and the estimator for $g_N$: $\Hat{\alpha}^g$ should be smooth in $g$ and $\mathcal{G}$ should not be too complex. 
	Conditions (c--d) can be shown using many strategies and are for instance satisfied when using parametric estimators for $g_N$. Otherwise, if we impose mild smoothness conditions on the estimator $(n, \mathbb{P}^T_n) \mapsto \Hat{\alpha}^g(n, \mathbb{P}^T_n)$, conditions (c-d) are satisfied whenever $\mathcal{G}$ is a $P_{a, 0}$-Donsker class, where $P_{a, 0}$ is the distribution on $\mathcal{X} \times \mathcal{S}$ induced by sampling $(n, \mathbb{P}^T_n)$ from $P_0$ and then sampling a random value from $\mathbb{P}^T_n$. This allows one to use many machine learning methods to estimate $g_N$. 

	Conditions (b) and (e--f) are minor regularity conditions which are expected to hold in any realistic application.

	Under conditions (a--f), $N^{1/2}\left( \Hat{\Psi}^{g_N}(\mathbb{P}_N) - \Psi^{g_N}(P_0) \right)$ is asymptotically normal with mean zero and covariance matrix 
	\begin{equation*}
		\Omega^{g^*}_0 = \left( \dot{\Phi}^{g^*}(\boldsymbol{\theta}^{g^*}_0) \right)^{-1} P_0 \! \left( \phi^{g^*}_{\boldsymbol{\theta}^{g^*}_0} \right)^{\otimes 2} \left( \dot{\Phi}^{g^*}(\boldsymbol{\theta}^{g^*}_0) \right)^{-1},
	\end{equation*}
	where $\boldsymbol{\theta}^{g^*}_0 := \Psi^{g^*}(P_0)$ and $\dot \Phi^{g}$ is the matrix of partial derivatives of $\boldsymbol{\theta} \mapsto \Phi^g(\boldsymbol{\theta}) := P_0 \phi^g_{\boldsymbol{\theta}}$.
	This covariance matrix can be consistently estimated by the sandwich estimator:
	\begin{equation}\label{eq:sandwich-estimator}
		\Hat{\Omega}^{g_N}_N = \left(\mathbb{P}_N \dot{\phi}_{\Hat{\boldsymbol{\theta}^{g_N}_N}}^{g_N}\right)^{-1} \mathbb{P}_N \! \left( \phi_{\Hat{\boldsymbol{\theta}^{g_N}_N}}^{g_N} \right)^{\otimes 2} \left(\mathbb{P}_N \dot{\phi}_{\Hat{\boldsymbol{\theta}^{g_N}_N}}^{g_N}\right)^{-1},
	\end{equation}
	where $\Hat{\boldsymbol{\theta}}^{g_N}_N := \Hat{\Psi}^{g_N}(\mathbb{P}_N)$ and $\dot \phi^g_{\boldsymbol{\theta}}(o)$ is the matrix of partial derivatives of $\boldsymbol{\theta} \mapsto \phi^g_{\boldsymbol{\theta}}(o)$. 
	In the simulations and data application, the sandwich estimator is finite-sample adjusted by multiplying it with $N/(N - 1)$ and confidence intervals based on the normal approximation and sandwich estimator are computed using the $t$ distribution with $N - 1$ degrees of freedom.

	We can alternatively construct confidence intervals using the bootstrap, which is valid by Theorem \ref{theorem:bootstrap-for-data-adaptive-estimands}.
	Because the number of independent trials $N$ is often small in practice, the multinomial bootstrap yields a highly discrete bootstrap distribution, which may harm finite-sample performance. We therefore use the Bayesian bootstrap \parencite{rubin1981bayesian}, a multiplier bootstrap in which the weights $W$ are sampled independently from a unit exponential distribution. In the simulations and data application, bootstrap confidence intervals are constructed using the bias-corrected and accelerated (BCa) method \parencite{efron1987better}.

	\section{Simulations}\label{sec:simulations}

	We conducted simulations to evaluate the finite-sample performance of the proposed methods. We considered two scenarios: a proof-of-concept scenario and a more realistic vaccine-trials scenario. 
	The proof-of-concept scenario uses a simple data-generating mechanism that demonstrates the advantage of using the estimated surrogate index as putative surrogate. Here, the trial-level correlations for the estimated surrogate indices (i.e., the data-adaptive parameters) were much closer to 1 than the trial-level correlations for the original surrogates. 
	The vaccine-trials scenario has a more realistic data-generating mechanism. We summarize the main findings of the simulations below and disucss the simulations in full detail in Appendix \ref{appendix:simulation-results}.

	Confidence intervals constructed using the bootstrap and the sandwich estimator showed reduced coverage in small-sample settings ($N = 6$); however, coverage improves as the number of trials increases. 
	For $N = 12$ and $N = 24$, the sandwich confidence intervals had near-nominal coverage, while the BCa confidence intervals had slight undercoverage. For $N = 6$, the sandwich confidence intervals do not always have closer to nominal coverage than the BCa confidence intervals. 
	These results indicate that there is no best method for constructing confidence intervals for $N = 6$ and we, therefore, recommend to use different types of confidence intervals simultaneously to increase robustness of the conclusions.

	We also studied confidence intervals based on a parametric Bayesian model.  Although it is not asymptotically valid for the data-adaptive target parameters because the parametric assumptions are violated, it can be more stable in small samples and is often used in practice.
	The resulting credible intervals had good coverage in small samples, confirming the above intuitions.

	The small-sample difficulties are not driven by the targeting of data-adaptive parameter alone, because similar issues arise when estimating the trial-level correlations of the original surrogates. 
	This is not suprising as all inferences are based on asymptotics, where the number of trials grows to infinity.

	\section{Application to COVID-19 vaccine efficacy trials}\label{sec:data-application}
	
	\subsection{Data Description}\label{sec:data-description}

	The surrogate endpoint analysis included five Phase 3 randomized, placebo-controlled COVID-19 vaccine efficacy trials: P3001 (Moderna), P3002 (AstraZeneca), P3003 (Janssen), P3004 (Novavax), and P3005 (Sanofi). The Sanofi trial consisted of two stages: stage 1 studied a monovalent vaccine and stage 2 studied a bivalent vaccine. These two stages, moreover, included a considerable number of non-naive subjects (i.e., subjects who have been previously infected with SARS-CoV-2), while the other trials included only naive subjects. We, therefore, split the Sanofi trial into four separate trials: (i) stage 1, naive only; (ii) stage 1, non-naive only; (iii) stage 2, naive only; and (iv) stage 2, non-naive only. We further refer to the latter as the Sanofi trials, ignoring that they are strictly speaking trial subunits.

	Two antibody markers were measured in these trials as putative surrogates: (i) log10 IgG binding antibody concentration against the D614 index strain Spike protein (IgG Spike) and (ii) log10 50\% inhibitory dilution neutralizing antibody titer (nAb ID50) against the D614G reference strain. These markers were measured at different time points post-enrollment: 57 days for Moderna, 57 days for AstraZeneca, 29 days for Janssen, 35 days for Novavax, and 43 days for Sanofi. The per-protocol cohort included participants who received all planned vaccinations without protocol deviations and, in the first four trials, who were SARS-CoV-2 negative at the last vaccination visit. The per-protocol cohort in the Sanofi trial includes subjects who have been previously infected as defined by an antibody or antigen test positive by the last vaccination visit. Eligibility for correlates analyses in the first four trials required participants to remain SARS-CoV-2 negative up to 6 days post marker-measurement visit, and for the Sanofi trials to not be infected between the last vaccination and 6 days post marker-measurement visit.

	All trials stratified randomization by age and prospective risk category (based on CDC guidelines for severe COVID-19 risk factors at the time of enrollment). Additionally, a harmonized case-cohort design with two-phase sampling was implemented across trials to measure IgG Spike and nAb ID50. The clinical endpoint was virologically confirmed symptomatic SARS-CoV-2 infection occurring between 7 and 80 days post marker-measurement visit. Table \ref{table:application} summarizes (i) the study cohort sizes and clinical endpoint counts by randomization arm and (ii) the number of subjects with antibody data by case and vaccine status.

	\begin{table}
		\centering
		\caption{Phase 3 COVID-19 vaccine efficacy trial data included in the surrogate endpoint evaluation. The numbers between brackets indicate the number of subjects from the corresponding group with antibody marker measurements. For the Sanofi trials, the first number inside the brackets indicates the number of IgG Spike measurements and the second one indicates the number of nAb ID50 measurements. Naive subjects in the placebo groups have a known negligible antibody response to SARS-CoV-2 equal to the lowest possible value (see Appendix \ref{appendix:naive-subject-markers}). The number of antibody marker measurements is therefore omitted in those groups.
		Note that the counts for the cases include all subjects with an event during the entire trial period (which includes follow-up beyond 80 days post marker-measurement visit). Non-cases are defined as per-protocol participants with no evidence of SARS-CoV-2 infection (i.e., never tested nucleic acid amplification/PCR positive) after enrollment up to the end of follow-up}
		\label{table:application} 
		\resizebox{\linewidth}{!}{%
		\begin{tabular}{lcccccc}	
			\toprule
			& \multicolumn{3}{c}{Placebo}  &  \multicolumn{3}{c}{Vaccine} \\ 
			\cmidrule(lr){2-4} \cmidrule(lr){5-7}
			Phase 3 Trial  & n &  Cases  & Non-Cases & n & Cases & Non-Cases \\ 
			\midrule
			Moderna &   13,682    &  656   &  13,026 & 13,979 & 47 (36) & 13,932 (1,000)  \\      
			AstraZeneca &  6,212    &  55   &  6,157 & 13,492 & 45 (25) & 13,447 (438)   \\ 
			Janssen &   17,821   &  818   &  17,003 & 18,155 & 420 (373) & 17,735 (816)    \\
			Novavax &   7,250   &  44   &  7,206 & 16,143 & 12 (12) & 16,131 (714)   \\
			Sanofi (stage 1, naive) &   820   &  84   &  736 & 772 & 77 (69, 72) & 695 (178, 268)   \\ 
			Sanofi (stage 1, non-naive) &   3,297   &  114 (102, 109)   &  3,183 (198, 238) & 3,274 & 72 (62, 67) & 3,202 (165, 203)   \\ 
			Sanofi (stage 2, naive) &   292   &  26   &  266 & 275 & 19 (17, 17) & 256 (134, 179)   \\
			Sanofi (stage 2, non-naive) &   4,660   &  72 (56, 56)   &  4,588 (236, 413) & 4,798 & 70 (60, 61) & 4,728 (245, 456)   \\
			\bottomrule
		\end{tabular}%
		}
	\end{table}

	The nAb ID50 in these trials quantifies neutralization against the reference strain; however, this was not the only circulating strain in the trials.
	Moreover, the trials were conducted in different regions and in different periods, leading to between-trial variability in circulating strains. 
	Ideally, the neutralization titer should be measured against the circulating strains to ensure that the surrogate endpoint matches the clinical endpoint (i.e., symptomatic infection with the circulating strains).
	This ``ideal'' surrogate endpoint is not available, however. Instead, we approximated this ideal surrogate endpoint using two information sources: (i) the strains circulating in the subject's region in the period that the subject was in the trial as obtained from GISAID \parencite{gisaid_variants} and (ii) the estimated geometric mean titer (GMT) ratios (reference strain to circulating strain) from the literature. 
	The so obtained variable is termed the \textit{adjusted nAb ID50}.
	Details on this adjustment are provided in Appendix \ref{appendix:adjustment-neutralization-titer}.

	\subsection[Estimating the surrogate index gN]{Estimating the surrogate index $g_N$}

	For each potential surrogate (IgG Spike, nAb ID50, and adjusted nAb ID50), we consider two estimators for the surrogate index based on data pooled across trials and treatment arms: a Cox proportional hazards (Cox PH) model and a SuperLearner. We briefly discuss these estimators next and provide more details in Appendix \ref{appendix:surrogate-index-estimators}.

	The Cox PH model includes, in addition to the potential surrogate, the following baseline covariates: age at randomization, sex, BMI category (BMI < 18.5, 18.5 $\le$ BMI < 25, 25 $\le$ BMI < 30, BMI $\ge$ 30), and an indicator of whether the subject was at high risk of severe COVID-19 \parencite{polack2020safety}.
	We stratified the Cox PH model by trial to account for different forces of infection across the trials. Predictions of the clinical endpoint (i.e., probability of symptomatic infection by 80 days post marker-measurement visit) are obtained by using the estimated regression coefficients and the estimated baseline hazard function from the Janssen trial.
	We did not use the baseline hazard function estimated from the subject’s own trial, because that would make the estimated surrogate index trial-dependent. We chose the Janssen trial because it was the largest trial. 
	Nonetheless, we expect the choice of baseline hazard function to have little impact on the trial-level surrogacy results. Changing the baseline hazard affects absolute risk predictions in the same way for both treatment arms, which has therefore little impact on the corresponding treatment effects.

	SuperLearner is an ensemble method that combines multiple candidate prediction algorithms into a single predictive model. It uses cross-validation to evaluate each candidate's performance and aggregates their predictions via a meta-learner that estimates optimal weights for combining predictions from the different candidates \parencite{van2007super}.
	To estimate the surrogate index, we use a SuperLearner with a set of logistic regression models (varying the included predictors and interaction terms) and a generalized additive logistic regression model as candidate prediction algorithms. The same covariates as in the Cox PH model are included in the candidate algorithms, but we additionally included the logit of the estimated trial-level COVID-19 probability in the control groups as a trial-level covariate to account for different forces of infection across trials. This probability is the estimated cumulative incidence of COVID-19 by 80 days post marker-measurement visit in the control group of each trial.
	The surrogate is included in each candidate algorithm.
	We further used a modified cross-validation scheme to avoid trial-level overfitting. Specifically, we use leave-one-trial-out cross-validation: each candidate prediction algorithm is fitted on data from all but one trial and then used to predict outcomes in the omitted trial. Repeating this process yields cross-validated predictions for all trials, which serve as inputs to the meta-learner. Finally, all candidate methods are refitted on the full data set, and their predictions are combined using the weights estimated by the meta-learner to produce final predictions.

	We have to address missingness in predictors (i.e., in the antibody marker) and in the outcome (i.e., censoring before 80 days) when estimating the surrogate index.
	For the Cox PH model, we weight obervations by the inverse probability of being sampled in the case-cohort design; censoring is automatically handled in the Cox PH model.
	For the SuperLearner, we weight observations by the product of the inverse probability of being sampled in the case-cohort design and the inverse probability of censoring weights (IPCWs), under a random censoring assumption. The IPCWs are estimated by the Kaplan--Meier estimator for the censoring time stratified by trial and treatment group.

	\subsection{Trial-Level Treatment Effects}

	The trial-level treatment effects on the clinical endpoint and the estimated surrogate index are estimated using standard methods. 
	The cumulative incidence rates for symptomatic infection by 80 days post marker-measurement visit are estimated with the Kaplan--Meier estimator.
	The estimated surrogate index means are estimated by first fitting a linear regression model with the estimated surrogate index as outcome and the treatment indicator, age, sex, BMI category, and the high-risk indicator as covariates. Note that the estimated surrogate index is missing whenever the antibody marker is missing. We, therefore, use the inverse probability of being sampled in the case-cohort design as weights in these linear regression models.
	The estimated means are obtained as the mean of the predicted values for all subjects in a given trial with the treatment indicator set to 1 (vaccine) and 0 (placebo), respectively. This approach is more efficient than simple sample means and is valid even if the linear model is misspecified \parencite{van2024covariate}. 

	We further consider the treatment effects on a transformed scale to ensure that they can take values on the entire real line. For the clinical endpoint, we consider the following transformation of vaccine efficacy (VE): $\log(1 - \operatorname{VE})$. 
	Similarly, for the estimated surrogate index, we consider the treatment effect on the following scale:
	\begin{equation*}
		\log \left( 1 - \widetilde{\operatorname{VE}} \right) \quad \text{for} \quad \widetilde{\operatorname{VE}} = 1 - \frac{E_{P^T}(g_N(X, S) \mid Z = 1)}{E_{P^T}(g_N(X, S) \mid Z = 0)}.
	\end{equation*}
	The within-trial covariance matrices of the estimated treatment effects are estimated using the non-parametric bootstrap (with 5,000 replications) in each trial. We keep the case-cohort weights fixed in the bootstrap samples; this is expected to lead to conservative standard errors. 

	\subsection{Results}

	We present the meta-analyses using all trials except the ones with non-naive subjects. We exclude the trials with non-naive subjects because the holistic set of immune responses differs substantially between naive and non-naive subjects and, consequently, trial-level surrogacy in a population of trials that mixes naive and non-naive subjects is not meaningful from an immunological perspective. 
	It is difficult to reach conclusive results about trial-level surrogacy because the number of independent trials is small. 
	We, therefore, also repeat the analyses on a pseudo-real data set with (i) 24 trials obtained by duplicating each of the original six trials four times, and (ii) the same six trials, but with a larger within-trial sample size.

	\subsubsection{Original Data}

	\paragraph{Trial-Level Surrogacy}

	The trial-level treatment effects on the three antibody markers (as differences in mean base-10 log antibody marker) and the estimated VEs are shown in Figure \ref{fig:ma-standard-naive-only}. This corresponds to a canonical MA for each marker separately. If we were to exclude the Sanofi trials, this plot would suggest a strong trial-level surrogacy. However, the Sanofi trials are clear outliers, which seems to suggest that IgG Spike and nAb ID50 are poor trial-level surrogates. The trial-level surrogacy for the adjusted nAb ID50 is less clear from Figure \ref{fig:ma-standard-naive-only}, however.
	The corresponding plot using the estimated surrogate indices is shown in Figure \ref{fig:ma-surrogate-index-naive-only}. 
	This plot leads to similar conclusions.

	Table \ref{table:trial-level-correlations-naive-only} presents the estimated trial-level correlation coefficients for both the untransformed antibody markers (corresponding to the canonical MA in Figure \ref{fig:ma-standard-naive-only}) and the estimated surrogate indices (corresponding to Figure \ref{fig:ma-surrogate-index-naive-only}), along with 95\% confidence intervals.
	Notably, the estimated trial-level correlations are higher for the estimated surrogate indices than the antibody markers. Furthermore, the estimated trial-level correlations for the adjusted nAb ID50 are higher than those for the unadjusted nAb ID50, indicating that the adjustment for the circulating strains improves the trial-level surrogacy.
	Nonetheless, the confidence intervals are wide, and the inferences should be interpreted with caution, as they are based on only six independent trials.

	\begin{figure}
		\centering
		\includegraphics[width=0.8\textwidth]{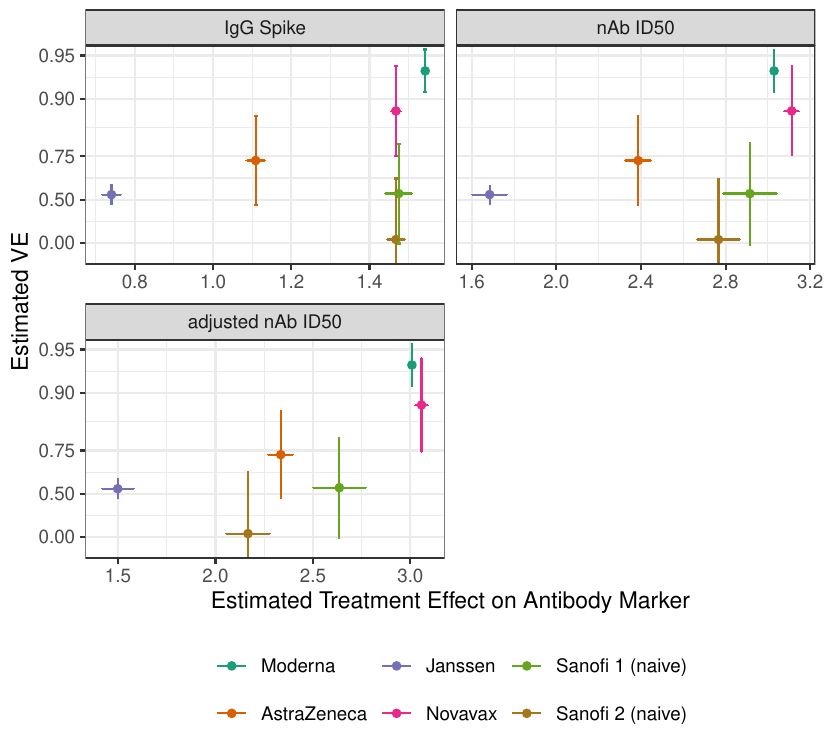}
		\caption{Estimated trial-level treatment effects on the antibody markers (difference in mean base-10 log units) and vaccine efficacies (VEs) with 95\% confidence intervals based on the bootstrap standard errors. These plots correspond to a canonical meta-analysis for each antibody marker separately. }
		\label{fig:ma-standard-naive-only}
	\end{figure}

	\begin{figure}
		\centering
		\includegraphics[width=0.8\textwidth]{"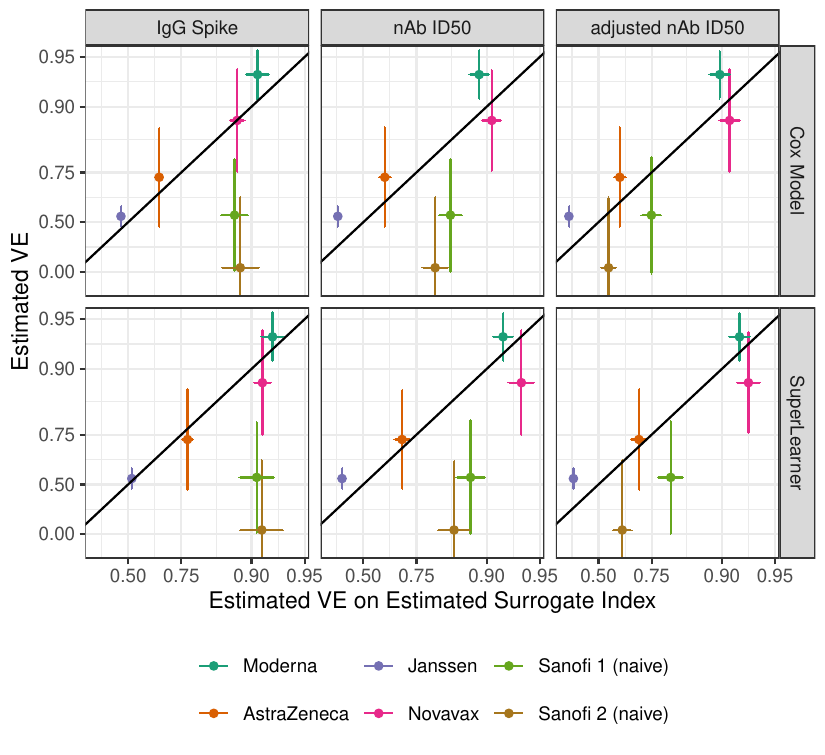"}
		\caption{Estimated trial-level treatment effects on the estimated surrogate indices and vaccine efficacies (VEs) with 95\% confidence intervals based on the bootstrap standard errors. The black line is the identity line.}
		\label{fig:ma-surrogate-index-naive-only}
	\end{figure}

\begin{table}
        \centering
        \caption{Estimated trial-level correlations with 95\% BCa bootstrap confidence intervals for the meta-analysis that includes all naive-cohort trials P3001-P3004.
        The Antibody Marker column indicates the meta-analyses where the antibody marker (without any transformation) is evaluated as a trial-level surrogate endpoint.}
        \label{table:trial-level-correlations-naive-only}
        \begin{tabular}{lccc}
            \toprule
            Surrogate & Antibody Marker & \multicolumn{2}{c}{Est.~Surrogate Index} \\
            & & Cox Model & SuperLearner \\
            \midrule
            IgG Spike & 0.33 (-0.37, 0.83) & 0.34 (-0.46, 0.85) & 0.33 (-0.45, 0.83) \\
            nAb ID50 & 0.47 (-0.28, 0.79) & 0.56 (-0.34, 0.83) & 0.57 (-0.32, 0.82) \\
            Adj.~nAb ID50 & 0.77 (0.12, 0.91) & 0.88 (0.34, 0.95) & 0.87 (0.25, 0.95) \\
            \bottomrule
        \end{tabular}
    \end{table}

	In Appendix \ref{appendix:bayesian-analysis}, we also present the results of a Bayesian analysis using weakly informative priors; this is the same Bayesian model as studied in the simulations. This alternative analysis leads to similar conclusions as the frequentist non-parametric analysis. We also repeated the Bayesian analysis assuming a proportional regression of $\beta$ on $\alpha^g$; this leads to considerably more precise results, but this gain in precision due to making additional parametric assumptions cannot be adequately justified given insufficient knowledge and empirical data for establishing the assumptions. 

	\paragraph{Prediction of Vaccine Efficacy in the Sanofi Trials of Non-Naive Subjects}

	The estimated treatment effects on the estimated surrogate index may serve as predictions for the treatment effects on the clinical endpoint (i.e., for the VE). We further refer to these estimated treatment effects as the predicted VEs. 
	The point estimates for the trial-level correlation suggest that these predictions should be accurate for the estimated surrogate indices based on adjusted nAb ID50 for trials sampled from the same population as the Moderna, AstraZeneca, Janssen, and Novavax trials, and the Sanofi trials of non-naive subjects (i.e., for trials sufficiently ``similar''). 
	To illustrate this, we show the corresponding predicted VEs for the Sanofi trials with non-naive subjects in Figure \ref{fig:ma-predictions}. These predictions are mainly for illustrative purposes, as the Sanofi non-naive trials are not sufficiently similar to the first six trials because, as mentioned before, the holistic set of immune responses differs substantially from that in naive subjects, caused in part by the fact that naive subjects were only exposed to the SARS-CoV-2 Spike protein (the protein included in the vaccine) whereas non-naive subjects, being infected, were exposed to all of the viral proteins. 
	Somewhat surprisingly, the predicted VEs are accurate for the estimated surrogate indices based on the adjusted nAb ID50. However, the confidence intervals for the estimated VEs are wide; hence, it is possible that the predicted VEs do not accurately predict the true VEs.

	\begin{figure}
		\centering
		\includegraphics[width=0.8\textwidth]{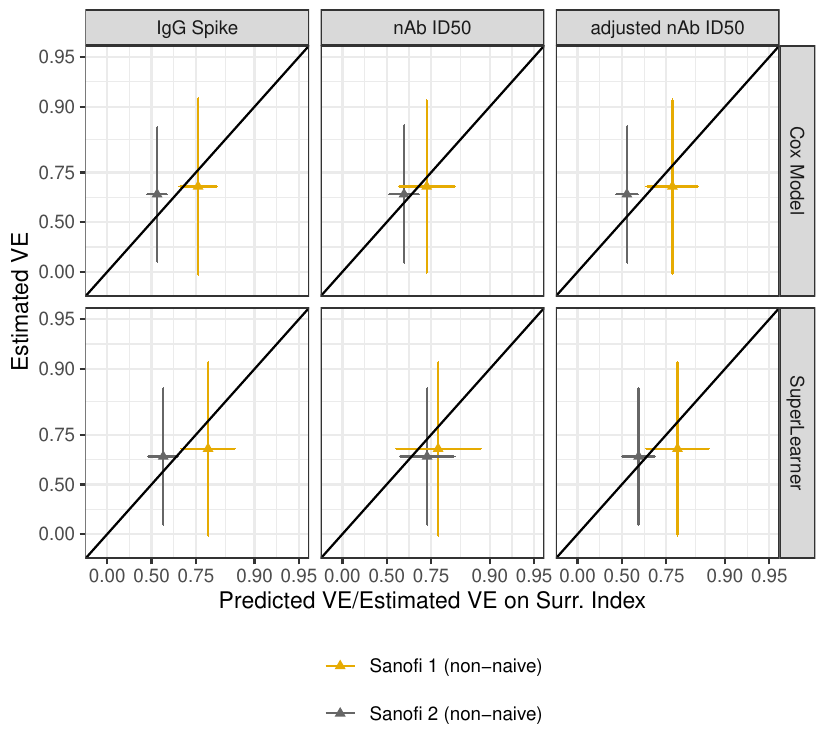}
		\caption{Predicted and estimated vaccine efficacies (VEs) for the non-naive Sanofi trials. The predicted vaccine efficacies are based on the estimated surrogate indices, estimated using the data from the six naive trials. The whiskers are the 95\% confidence intervals based on the bootstrap standard errors.
		The black line is the identity line.}
		\label{fig:ma-predictions}
	\end{figure}

	\subsubsection{Pseudo-Real Data}

	The results of the meta-analysis with the six naive trials are mostly inconclusive because the confidence intervals are wide. To further illustrate the potential of the proposed method, we repeated the analyses on a pseudo-real data set (i) with the same six naive trials as before, but dividing all within-trial variance matrix entries by four, and (ii) with 24 trials obtained by cloning each trial four times.
	The corresponding estimates and confidence intervals for the trial-level correlation coefficients are shown in Table \ref{table:trial-level-correlations-pseudo-real}.
	These results indicate that increasing the precision in each trial increases the precision of the trial-level correlation estimate only slightly; however, increasing the number of trials leads to a substantial increase in precision.

	\begin{table}
        \centering
        \caption{Estimated trial-level correlations with 95\% BCa bootstrap confidence intervals for the meta-analysis on the pseudo-real data.
        The Antibody Marker column indicates the meta-analyses where the antibody marker (without any transformation) is evaluated as a trial-level surrogate endpoint.}
        \label{table:trial-level-correlations-pseudo-real}
        \begin{tabular}{llccc}
            \toprule
            Pseudo-Real Data & Surrogate & Antibody Marker & \multicolumn{2}{c}{Est.~Surrogate Index} \\
            & & & Cox Model & SuperLearner \\
            \midrule
            \multirow{3}{*}{Increased Precision} 
                & IgG Spike & 0.32 (-0.29, 0.80) & 0.33 (-0.36, 0.81) & 0.31 (-0.36, 0.79) \\
                & nAb ID50 & 0.44 (-0.18, 0.76) & 0.54 (-0.21, 0.79) & 0.54 (-0.20, 0.79) \\
                & Adjusted nAb ID50 & 0.73 (0.18, 0.87) & 0.84 (0.43, 0.93) & 0.83 (0.38, 0.92) \vspace{0.1cm}\\
            \multirow{3}{*}{Increased Number of Trials} 
                & IgG Spike & 0.34 (0.04, 0.60) & 0.35 (0.01, 0.63) & 0.33 (0.02, 0.61) \\
                & nAb ID50 & 0.47 (0.23, 0.65) & 0.57 (0.31, 0.71) & 0.58 (0.32, 0.72) \\
                & Adjusted nAb ID50 & 0.77 (0.62, 0.85) & 0.89 (0.78, 0.93) & 0.88 (0.76, 0.92) \\
            \bottomrule
        \end{tabular}
    \end{table}

	\section{Conclusions and Discussion}\label{sec:conclusions}
	
	This paper extends the canonical MA framework in a manner that allows for complex surrogate endpoints. 
	While this approach may be useful in settings where the canonical MA framework can also be used---as, e.g., in the application in Section \ref{sec:data-application}---it may be especially useful in settings where the canonical MA framework cannot be used, such as in settings with complex surrogate endpoints. For instance, the approach as presented in this paper can be used to evaluate the trial-level surrogacy of the information available 3 years after randomization (including, e.g., fasting plasma glucose measurements, but also the presence of diabetes) as a trial-level surrogate for diabetes 4 years after randomization \parencite{agniel2024robust}. An open problem, however, is how to deal with intercurrent events that may preclude the measurement of the surrogate endpoint, such as death or dropout. 
	We addressed intercurrent events in the data application by restricting the analysis to a per-protocol cohort, but this is not entirely satisfactory as further discussed in \textcite{gilbert2025surrogate}.

	The proposed approach runs into the same limitations as the canonical MA framework, like the population of trials being ill defined and limited power due to only a small number of trials being available. Additionally, whereas the canonical MA framework can be used when appropriate summary-level statistics are available, the proposed approach always requires individual-participant data for all trials where the same covariates should be measured across trials. In the data application, such data were available because the trials were designed in a harmonized manner, but this is rare in practice.
	If the trial populations are similar, one could even apply this approach without any covariates; this may still lead to higher trial-level correlations than the canonical MA framework because treatment effects on $g_N(S)$ may still correlate better with treatment effects on the clinical endpoint than treatment effects on $S$.

	We have focused on a particular full-data parameter that reduces to the target parameter of the canonical MA framework when the parametric model is correctly specified. However, one may define different full-data parameters that are identical to the target parameter of the canonical MA framework when the parametric model is correctly specified, but that behave differently under misspecification. The full-data parameter we have considered is appealing because it is easy to interpret even when the parametric model is misspecified; however, it may be more difficult to estimate than alternative full-data parameters.
	For instance, one could define an alternative full-data parameter in terms of a projection onto the parametric model (used in the canonical MA framework) based on the Kullback--Leibler divergence. Such a projection parameter would be easier to estimate \parencite{van2023adaptive}, but it may be more difficult to interpret when the parametric model is misspecified.

	\section*{Acknowledgements}
  
	We gratefully acknowledge all data contributors, i.e., the Authors and their Originating laboratories responsible for obtaining the specimens, and their Submitting laboratories for generating the genetic sequence and metadata and sharing via the GISAID Initiative, on which this research is based.

	The authors thank Craig Magaret from the Fred Hutchinson Cancer Center for preparing the GISAID data for adjusting the neutralization titers.

	The authors also thank the participants and study teams that conducted the COVID-19 vaccine efficacy trials of the Moderna, Astra-Zeneca, Janssen, Novavax, and Sanofi SARS-CoV-2 vaccines that are analyzed in this article.  The authors thank the US Government Vaccine COVID-19 Vaccine Correlates of Protection Program from the Department of Health and Human Services [including the Administration for Strategic Preparedness and Response; Biomedical Advanced Research and Development Authority; and the National Institute of Allergy and Infectious Diseases (NIAID) of the National Institutes of Health], which co-conducted these clinical trials in public--private partnerships with the vaccine developers. The authors also thank the vaccine developers for these partnerships.

	This work was partially supported by NIAID, which provided grant funding to the HIV Vaccine Trials Network (award numbers UM1 AI68614, UM1 AI68635, UM1 AI68618). This work was also supported by NIAID award number R37AI054165 and by the Intramural Research Program of the NIAID Scientific Computing Infrastructure at Fred Hutch, under ORIP grant S10OD028685. The content is solely the responsibility of the authors and does not necessarily represent the official views of the National Institutes of Health.

	\section*{Supplementary Material}

	The code to reproduce the results in this paper is available at \url{https://github.com/florianstijven/ma-surrogate-index} (release v1.0.0).

	\section*{Data Availability}

	The individual participant data from the COVID-19 vaccine trials cannot be made available, but the GitHub repository includes a simulated data set that resembles the original data.

	The findings of this study (adjustment of nAb to circulating variants) are based on metadata associated with 5,123,874 sequences available on GISAID up to June 5, 2023, and accessible at \url{doi.org/10.55876/gis8.250401ez}.

	\printbibliography

@article{frangakis2002principal,
	title={Principal stratification in causal inference},
	author={Frangakis, Constantine E and Rubin, Donald B},
	journal={Biometrics},
	volume={58},
	number={1},
	pages={21--29},
	year={2002}
}

@article{alonso2004validation,
  title={Validation of surrogate markers in multiple randomized clinical trials with repeated measurements: canonical correlation approach},
  author={Alonso, Ariel and Geys, Helena and Molenberghs, Geert and Kenward, Michael G and Vangeneugden, Tony},
  journal={Biometrics},
  volume={60},
  number={4},
  pages={845--853},
  year={2004}
}

@article{efron1987better,
  title={Better bootstrap confidence intervals},
  author={Efron, Bradley},
  journal={Journal of the American statistical Association},
  volume={82},
  number={397},
  pages={171--185},
  year={1987},
  publisher={Taylor \& Francis}
}

@article{joffe2009related,
  title={Related causal frameworks for surrogate outcomes},
  author={Joffe, Marshall M and Greene, Tom},
  journal={Biometrics},
  volume={65},
  number={2},
  pages={530--538},
  year={2009}
}

@article{lin2023causal,
  title={Causal inference on distribution functions},
  author={Lin, Zhenhua and Kong, Dehan and Wang, Linbo},
  journal={Journal of the Royal Statistical Society Series B: Statistical Methodology},
  volume={85},
  number={2},
  pages={378--398},
  year={2023}
}

@misc{Team2022,
	author = {{USG COVID-19 Response Team / Coronavirus Prevention Network (CoVPN) Biostatistics Team} and Peter B. Gilbert and Youyi Fong and David Benkeser and Jessica Andriesen and Bhavesh Borate and Marco Carone and Lindsay N. Carpp and Ivan Diaz and Michael P. Fay and Andrew Fiore-Gartland and Nima S. Hejazi and Ying Huang and Yunda Huang and Ollivier Hyrien and Holly E. Janes and Michal Juraska and Kendrick Li and Alex Luedtke and Martha Nason and April K. Randhawa and Lars {v}an {d}er Laan and Brian Williamson and Wenbo Zhang and Dean Follmann},
	title = {{USG} {COVID-19} {Response} {Team} / {CoVPN} Vaccine Efficacy Trial Immune Correlates {Statistical} {Analysis} {Plan}},
	year = {2022},
	howpublished = {\url{https://figshare.com/articles/online_resource/CoVPN_OWS_COVID-19_Vaccine_Efficacy_Trial_Immune_Correlates_SAP/13198595}},
	note = {Version 0.4}
}

@article{vanderweele2013surrogate,
  title={Surrogate measures and consistent surrogates},
  author={VanderWeele, Tyler J},
  journal={Biometrics},
  volume={69},
  number={3},
  pages={561--565},
  year={2013}
}

@article{buyse2000validation,
  title={The validation of surrogate endpoints in meta-analyses of randomized experiments},
  author={Buyse, Marc and Molenberghs, Geert and Burzykowski, Tomasz and Renard, Didier and Geys, Helena},
  journal={Biostatistics},
  volume={1},
  number={1},
  pages={49--67},
  year={2000}
}

@article{rubin1981bayesian,
  title={The bayesian bootstrap},
  author={Rubin, Donald B},
  journal={The annals of statistics},
  pages={130--134},
  year={1981}
}

@article{degtiar2023review,
  title={A review of generalizability and transportability},
  author={Degtiar, Irina and Rose, Sherri},
  journal={Annual Review of Statistics and Its Application},
  volume={10},
  number={1},
  pages={501--524},
  year={2023}
}

@article{peng2026estimating,
  title={Estimating the Wasserstein barycenter of one-dimensional distributions under sparse sampling},
  author={Peng, James and Stijven, Florian and Wang, Linbo and Gilbert, Peter B},
  journal={arXiv preprint arXiv:2606.10096},
  year={2026}
}

@software{coyle2021sl3-rpkg,
	author = {Coyle, Jeremy R and Hejazi, Nima S and Malenica, Ivana and
	Phillips, Rachael V and Sofrygin, Oleg},
	title = {{sl3}: Modern Pipelines for Machine Learning and {Super
	Learning}},
	year = {2021},
	howpublished = {\url{https://github.com/tlverse/sl3}},
	note = {{R} package version 1.4.2},
	url = {https://doi.org/10.5281/zenodo.1342293},
	doi = {10.5281/zenodo.1342293}
}

@misc{gisaid_variants,
  author       = {{GISAID Initiative}},
  title        = {{Tracking of hCoV-19 Variants}},
  year         = {2025},
  howpublished = {\url{https://gisaid.org/hcov19-variants/}},
  note         = {Accessed: 2025-03-09}
}

@article{shu2017gisaid,
  author  = {Shu, Yuelong and McCauley, John},
  title   = {{GISAID}: Global initiative on sharing all influenza data – from vision to reality},
  journal = {Eurosurveillance},
  year    = {2017},
  volume  = {22},
  number  = {13},
  pages   = {30494}
}

@article{van2007super,
  title={Super learner},
  author={{v}an {d}er Laan, Mark J and Polley, Eric C and Hubbard, Alan E},
  journal={Statistical applications in genetics and molecular biology},
  volume={6},
  number={1},
  year={2007}
}

@article{li2023efficient,
  title={Efficient estimation under data fusion},
  author={Li, Sijia and Luedtke, Alex},
  journal={Biometrika},
  volume={110},
  number={4},
  pages={1041--1054},
  year={2023}
}

@article{kennedy2017semiparametric,
  title={Semiparametric theory},
  author={Kennedy, Edward H},
  journal={arXiv preprint arXiv:1709.06418},
  year={2017}
}

@article{polack2020safety,
  title={Safety and efficacy of the BNT162b2 mRNA Covid-19 vaccine},
  author={Polack, Fernando P and Thomas, Stephen J and Kitchin, Nicholas and Absalon, Judith and Gurtman, Alejandra and Lockhart, Stephen and Perez, John L and P{\'e}rez Marc, Gonzalo and Moreira, Edson D and Zerbini, Cristiano and others},
  journal={New England Journal of Medicine},
  volume={383},
  number={27},
  pages={2603--2615},
  year={2020}
}

@article{van2015targeted,
  title={Targeted learning of the mean outcome under an optimal dynamic treatment rule},
  author={{v}an {d}er Laan, Mark J and Luedtke, Alexander R},
  journal={Journal of Causal Inference},
  volume={3},
  number={1},
  pages={61--95},
  year={2015}
}

@article{buyse2008individual,
  title={Individual-and trial-level surrogacy in colorectal cancer},
  author={Buyse, Marc and Burzykowski, Tomasz and Michiels, Stefan and Carroll, Kevin},
  journal={Statistical methods in medical research},
  volume={17},
  number={5},
  pages={467--475},
  year={2008}
}

@article{weir2022informed,
  title={Informed decision-making: Statistical methodology for surrogacy evaluation and its role in licensing and reimbursement assessments},
  author={Weir, Christopher J and Taylor, Rod S},
  journal={Pharmaceutical Statistics},
  volume={21},
  number={4},
  pages={740--756},
  year={2022}
}

@book{IOM2011,
  author       = {{Institute of Medicine (US) Committee on Qualification of Biomarkers and Surrogate Endpoints in Chronic Disease}},
  title        = {Perspectives on Biomarker and Surrogate Endpoint Evaluation: Discussion Forum Summary},
  year         = {2011},
  publisher    = {National Academies Press (US)},
  address      = {Washington (DC)},
  chapter      = {7},
  note         = {Presentation by Thomas Fleming: Biomarkers and Surrogate Endpoints in Chronic Disease},
  url          = {https://www.ncbi.nlm.nih.gov/books/NBK209571/}
}

@article{fleming2012biomarkers,
  title={Biomarkers and surrogate endpoints in clinical trials},
  author={Fleming, Thomas R and Powers, John H},
  journal={Statistics in medicine},
  volume={31},
  number={25},
  pages={2973--2984},
  year={2012}
}

@article{sandwich,
    title = {Various Versatile Variances: An Object-Oriented Implementation of Clustered Covariances in {R}},
    author = {Achim Zeileis and Susanne K\"oll and Nathaniel Graham},
    journal = {Journal of Statistical Software},
    year = {2020},
    volume = {95},
    number = {1},
    pages = {1--36}
  }

@article{zeileis2004,
    title = {Econometric Computing with {HC} and {HAC} Covariance Matrix Estimators},
    author = {Achim Zeileis},
    journal = {Journal of Statistical Software},
    year = {2004},
    volume = {11},
    number = {10},
    pages = {1--17}
  }

@article{prentice1989surrogate,
	title={Surrogate endpoints in clinical trials: definition and operational criteria},
	author={Prentice, Ross L},
	journal={Statistics in medicine},
	volume={8},
	number={4},
	pages={431--440},
	year={1989}
}

@article{sargent2005disease,
  title={Disease-free survival versus overall survival as a primary end point for adjuvant colon cancer studies: individual patient data from 20,898 patients on 18 randomized trials},
  author={Sargent, Daniel J and Wieand, Harry S and Haller, Daniel G and Gray, Richard and Benedetti, Jacqueline K and Buyse, Marc and Labianca, Roberto and Seitz, Jean Francois and O'Callaghan, Christopher J and Francini, Guido and others},
  journal={Journal of Clinical Oncology},
  volume={23},
  number={34},
  pages={8664--8670},
  year={2005}
}

@book{kosorok2008introduction,
	title={Introduction to empirical processes and semiparametric inference},
	author={Kosorok, Michael R},
	volume={61},
	year={2008},
	publisher={Springer}
}

@book{hernan,
	series = {Chapman \& {Hall}/{CRC} {Monographs} on {Statistics} \& {Applied} Probability},
	title = {Causal {Inference}},
	isbn = {978-1-4200-7616-5},
	publisher = {CRC Press},
	author = {Hern{á}n, M.A. and Robins, J.M.},
	year = {2023},
}

@article{van2023adaptive,
  title={Adaptive debiased machine learning using data-driven model selection techniques},
  author={{{v}an {d}er Laan}, Lars and Carone, Marco and Luedtke, Alex and {v}an {d}er Laan, Mark},
  journal={arXiv preprint arXiv:2307.12544},
  year={2023}
}

@article{agniel2024robust,
  title={Robust evaluation of longitudinal surrogate markers with censored data},
  author={Agniel, Denis and Parast, Layla},
  journal={Journal of the Royal Statistical Society Series B: Statistical Methodology},
  pages={qkae119},
  year={2024}
}

@article{burzykowski2001validation,
  title={Validation of surrogate end points in multiple randomized clinical trials with failure time end points},
  author={Burzykowski, Tomasz and Molenberghs, Geert and Buyse, Marc and Geys, Helena and Renard, Didier},
  journal={Journal of the Royal Statistical Society Series C: Applied Statistics},
  volume={50},
  number={4},
  pages={405--422},
  year={2001}
}

@article{chen2025empirical,
  title={Empirical Bayes shrinkage (mostly) does not correct the measurement error in regression},
  author={Chen, Jiafeng and Gu, Jiaying and Kwon, Soonwoo},
  journal={arXiv preprint arXiv:2503.19095},
  year={2025}
}

@book{lehmann_testing_2008,
	address = {New York},
	edition = {3. ed., correct. at 4. print},
	series = {Springer texts in statistics},
	title = {Testing statistical hypotheses},
	isbn = {978-0-387-98864-1},
	language = {en},
	publisher = {Springer},
	author = {Lehmann, Erich L. and Romano, Joseph P.},
	year = {2008}
}

@Misc{RStan,
    title = {{RStan}: the {R} interface to {Stan}},
    author = {{Stan Development Team}},
    note = {R package version 2.32.7},
    year = {2025},
    url = {https://mc-stan.org/},
  }

@article{athey2025surrogate,
  title={The surrogate index: Combining short-term proxies to estimate long-term treatment effects more rapidly and precisely},
  author={Athey, Susan and Chetty, Raj and Imbens, Guido W and Kang, Hyunseung},
  journal={Review of Economic Studies},
  pages={rdaf087},
  year={2025}
}

@book{alonso2016applied,
	title={Applied surrogate endpoint evaluation methods with {SAS} and {R}},
	author={Alonso, Ariel and Bigirumurame, Theophile and Burzykowski, Tomasz and Buyse, Marc and Molenberghs, Geert and Muchene, Leacky and Perualila, Nolen Joy and Shkedy, Ziv and Van der Elst, Wim and others},
	year={2016},
	publisher={CRC Press New York}
}

@article{gail2000meta,
	title={On meta-analytic assessment of surrogate outcomes},
	author={Gail, Mitchell H and Pfeiffer, Ruth and Van Houwelingen, Hans C and Carroll, Raymond J},
	journal={Biostatistics},
	volume={1},
	number={3},
	pages={231--246},
	year={2000}
}

@article{korn2005assessing,
	title={Assessing surrogates as trial endpoints using mixed models},
	author={Korn, Edward L and Albert, Paul S and McShane, Lisa M},
	journal={Statistics in medicine},
	volume={24},
	number={2},
	pages={163--182},
	year={2005}
}

@article{mixmeta,
  title={An extended mixed-effects framework for meta-analysis},
  author={Sera, Francesco and Armstrong, Benedict and Blangiardo, Marta and Gasparrini, Antonio},
  journal={Statistics in Medicine},
  volume={38},
  number={29},
  pages={5429--5444},
  year={2019}
}

@article{hubbard2016statistical,
	title={Statistical inference for data adaptive target parameters},
	author={Hubbard, Alan E and Kherad-Pajouh, Sara and {v}an {d}er Laan, Mark J},
	journal={The International Journal of Biostatistics},
	volume={12},
	number={1},
	pages={3--19},
	year={2016}
}

@book{van2000asymptotic,
  author    = {A. W. van der Vaart},
  title     = {Asymptotic Statistics},
  year      = {1998},
  publisher = {Cambridge University Press},
  series    = {Cambridge Series in Statistical and Probabilistic Mathematics},
  address   = {Cambridge}
}

@article{van2024covariate,
	title={Covariate adjustment in randomized controlled trials: General concepts and practical considerations},
	author={Van Lancker, Kelly and Bretz, Frank and Dukes, Oliver},
	journal={Clinical Trials},
	volume={21},
	number={4},
	pages={399--411},
	year={2024}
}

@article{vansteelandt2022assumption,
	title={Assumption-lean inference for generalised linear model parameters},
	author={Vansteelandt, Stijn and Dukes, Oliver},
	journal={Journal of the Royal Statistical Society Series B: Statistical Methodology},
	volume={84},
	number={3},
	pages={657--685},
	year={2022}
}

@article{daniels1997meta,
	title={Meta-analysis for the evaluation of potential surrogate markers},
	author={Daniels, Michael J and Hughes, Michael D},
	journal={Statistics in medicine},
	volume={16},
	number={17},
	pages={1965--1982},
	year={1997}
}

@article{van2002advanced,
	title={Advanced methods in meta-analysis: multivariate approach and meta-regression},
	author={Van Houwelingen, Hans C and Arends, Lidia R and Stijnen, Theo},
	journal={Statistics in medicine},
	volume={21},
	number={4},
	pages={589--624},
	year={2002}
}

@article{gabriel2016comparing,
	title={Comparing biomarkers as trial level general surrogates},
	author={Gabriel, Erin E and Daniels, Michael J and Halloran, M Elizabeth},
	journal={Biometrics},
	volume={72},
	number={4},
	pages={1046--1054},
	year={2016}
}

@article{gilbert2025surrogate,
  title={A surrogate endpoint-based provisional approval causal roadmap, illustrated by vaccine development},
  author={Gilbert, Peter B and Peng, James and Han, Larry and Lange, Theis and Lu, Yun and Nie, Lei and Shih, Mei-Chiung and Waddy, Salina P and Wiley, Ken and Yann, Margot and others},
  journal={Biostatistics},
  volume={26},
  number={1},
  pages={kxaf018},
  year={2025},
  publisher={Oxford University Press}
}

@article{gabriel2019optimizing,
	title={Optimizing and evaluating biomarker combinations as trial-level general surrogates},
	author={Gabriel, Erin E and Sachs, Michael C and Daniels, Michael J and Halloran, M Elizabeth},
	journal={Statistics in medicine},
	volume={38},
	number={7},
	pages={1135--1146},
	year={2019}
}

@book{fda2016,
	title     = "{{BEST}} ({B}iomarkers, {E}ndpoint{S}, and other {T}ools)",
	author    = {{FDA-NIH Biomarker Working Group}},
	year      = {2016},
	publisher = {Food and Drug Administration (US)},
	address   = {Silver Spring, MD},
	url       = {https://www.ncbi.nlm.nih.gov/books/NBK326791/}
}

@article{freedman1992statistical,
	title={Statistical validation of intermediate endpoints for chronic diseases},
	author={Freedman, Laurence S and Graubard, Barry I and Schatzkin, Arthur},
	journal={Statistics in medicine},
	volume={11},
	number={2},
	pages={167--178},
	year={1992}
}

@article{alonso2015relationship,
	title={On the relationship between the causal-inference and meta-analytic paradigms for the validation of surrogate endpoints},
	author={Alonso, Ariel and Van der Elst, Wim and Molenberghs, Geert and Buyse, Marc and Burzykowski, Tomasz},
	journal={Biometrics},
	volume={71},
	number={1},
	pages={15--24},
	year={2015}
}
	
	\appendix
	\makeatletter
	\newcommand{\appendixclearpage}{%
		\ifdim\pagetotal>0pt\relax
			\clearpage
		\else
			\ifx\@deferlist\@empty
				\ifx\@toplist\@empty
					\ifx\@botlist\@empty
					\else\clearpage\fi
				\else\clearpage\fi
			\else\clearpage\fi
		\fi
	}
	\makeatother

	\appendixclearpage
	\section{Surrogate Index}\label{sec:si-framework}

	This section summarizes the surrogate-index framework as proposed by \parencite{athey2025surrogate}. This framework approaches the identification of the treatment effect on the clinical endpoint from a missing data perspective, or equivalently from a data fusion perspective \parencite{li2023efficient}. Note that this is not a surrogacy-evaluation approach; perfect surrogacy is part of the identifying assumptions. This framework was used to justify the optimal transformation of the surrogate and baseline covariates in Section \ref{sec:ma-surrogate-index}.

	\subsection{Setting and Notation}

	In the surrogate-index framework, the full data for subject $i$ consist of the following variables, of which only a subset is observed:
	\begin{itemize}
		\item $D \in \{0, 1\}$: Data-source indicator, with $D = 0$ for the experimental study and $D = 1$ for the observational study.
		\item $Z \in \{0, 1\}$: Binary treatment assignment. Treatment need not be randomized if conditional exchangeability holds, but we assume simple randomization throughout.
		\item $X \in \mathcal{X}$: Baseline covariates.
		\item $S \in \mathcal{S}$: Surrogate endpoint.
		\item $Y \in \mathcal{Y}$ and $(Y(0), Y(1))' \in \mathcal{Y} \times \mathcal{Y}$: Observed clinical endpoint and its potential outcomes under treatment $z$. We assume consistency, so $Y = Z \cdot Y(1) + (1 - Z) \cdot Y(0)$.
	\end{itemize}
	These data are assumed to be i.i.d.~sampled from some distribution $P_0^F$.

	The observed data come from two sources, each with its own missing data pattern:
	\begin{itemize}
		\item In the \textit{experimental study} ($D = 0$), a binary treatment $Z$ is assigned to all subjects after measuring the baseline covariates $X$. The surrogate endpoint $S$ is measured for all subjects, but the clinical endpoint $Y$ is missing for all subjects.
		\item In the \textit{observational study} ($D = 1$), treatment information may or may not be available. We set $Z = 0$ for all subjects with $D = 1$. The same baseline covariates $X$ and surrogate endpoint $S$ are measured as in the experimental data, but additionally, the clinical endpoint $Y$ is measured.
	\end{itemize}

	The observed data are thus $\left( D, X, Z, S, D Y \right)'$, which are assumed to be i.i.d.~sampled from some distribution $P_0$, which is implied by the full-data distribution $P_0^F$.

	The full-data parameter is the average treatment effect (ATE) in the experimental study:
	\begin{equation*}
		\beta_0 := f \left\{ E_{P_0^F} \left( Y(1) \mid D = 0 \right), E_{P_0^F} \left( Y(0) \mid D = 0 \right) \right\},
	\end{equation*}
	where $f: \mathbb{R}^2 \to \mathbb{R}$ is a contrast function such that $f(x, y) = 0$ if $x = y$ and $f(x, y) \ne 0$ otherwise.

	\subsection{Identification}

	We discuss the identification of $E_{P_0^F}(Y(z) \mid D = 0)$ from a mean-imputation perspective because it clarifies the connection between the surrogate-index method and missing-data problems. A natural first step would be to use $E_{P_0^F}(Y \mid D = 0, X, Z, S)$ to impute the missing clinical endpoint in the experimental study, but this regression function is not identified because $Y$ is not observed in the experimental study. Identification therefore starts with the assumption that the surrogate captures the entire treatment effect.
	\begin{assumption}[surrogacy]
		$Y \perp Z \mid D, X, S$.
	\end{assumption}
	By the surrogacy assumption, $E_{P_0^F}(Y \mid D = a, X = x, Z, S = s) = E_{P_0^F}(Y \mid D = a, X = x, S = s)$. The latter quantity is the surrogate index of \textcite{athey2025surrogate}.
	\begin{definition}[surrogate index]\label{def:surrogate-index}
		The surrogate index is the conditional expectation of the clinical endpoint given the surrogate outcome and the baseline covariates, conditional on the data source:
		\begin{equation*}
			g_0(a, x, s) := E_{P_0^F} \left( Y \mid D = a, X = x, S = s \right).
		\end{equation*}
	\end{definition}
	This definition of the surrogate index differs from the one given in the main text, but they agree mutatis mutandis under the comparability assumption.
	\begin{assumption}[comparability of samples]
		Let $\phi(x, s) = P(D = 0 \mid X = x, S = s)$ be the sampling score. The comparability assumption requires that:
		\begin{enumerate}
			\item $D \perp Y \mid X, S$
			\item $\phi(x, s) < 1 \; \forall \; x \in \mathcal{X}, s \in \mathcal{S}$.
		\end{enumerate}
	\end{assumption}

	Comparability identifies the surrogate index in the experimental study with the surrogate index in the observational study. The potential outcome means in the application trial are then identified by replacing the unobserved $Y$ with the surrogate index:
	\begin{align*}
		E_{P_0^F}(Y(z) \mid D = 0) & = E_{P_0^F}(Y \mid D = 0, Z = z) &~\mbox{(simple randomization + consistency)} \\
		& = E_{P_0} \left\{ E_{P_0^F}(Y \mid D = 0, Z = z, X, S) \mid D = 0, Z = z \right\} &~\mbox{(law of iterated expectations)} \\
		& = E_{P_0} \left\{ g_0(1, X, S) \mid D = 0, Z = z \right\} &~\mbox{(surrogacy + comparability)}
	\end{align*}



	\subsection{Limitations}

	\textcite{athey2025surrogate} and \textcite{gilbert2025surrogate} note that the surrogacy and comparability assumptions are very strong. These authors suggest sensitivity analyses to assess the impact of violations of surrogacy and comparability on the inferences. 
	However, an informative sensitivity analysis typically requires one to limit the magnitude of departures from the identifying assumptions.
	Moving away from sensitivity analyses, one could quantify departures from the identifying assumptions in the following error:
	\begin{equation}
		e_0 = \beta_0 - f\left[ E_{P_0} \left\{ g_0(1, X, S) \mid D = 0, Z = 1 \right\}, E_{P_0} \left\{ g_0(1, X, S) \mid D = 0, Z = 0 \right\} \right].
	\end{equation}

	Surrogacy and comparability together imply that $e_0 = 0$, and $e_0 \ne 0$ implies that at least one of these assumptions is violated. This error could be estimated if $Y$ were to become available in the experimental study, allowing one to assess the accuracy of the surrogate-index based prediction. However, $e_0 = 0$ does not imply that similar surrogate-index based predictions are possible in other experimental studies.

	In the context of multiple clinical trials from the same population of trials, one could estimate this error for each trial and estimate the distribution of these errors. This distribution would reflect violations from surrogacy and comparability in this population of trials. Something similar has been suggested by \textcite[p.~27]{athey2025surrogate}:
	\begin{quote}
		``Building on this application, it would be useful to systematically establish surrogate indices that match the long-term treatment effects estimated in other experiments and quasi-experiments. Over time, this would allow researchers to collectively build a public library of surrogate indices for long-term outcomes that could be used to expedite the analysis of future interventions.''
	\end{quote}
	In the above quote, \textcite{athey2025surrogate} essentially propose to evaluate the trial-level surrogacy of surrogate indices, which is the focus of our paper. 

	\appendixclearpage
	\section{Canonical Parametric Meta-Analysis}\label{sec:ma-parametric-estimation}

	This appendix reviews the canonical parametric MA framework. We first state the standard assumptions used to estimate the joint distribution of trial-level treatment effects on the surrogate and clinical endpoints. We then show how this framework yields prediction intervals for treatment effects on the clinical endpoint in new trials. Finally, we describe a Bayesian implementation.

	\subsection{(Approximate) Observed-Data Likelihood}\label{appendix:ma-parametric-estimation-likelihood}
	
	Because the number of trials in a meta-analysis is often small, non-parametric methods can be unstable. In practice, the distribution of $(\Hat{\alpha}^{g}, \Hat{\beta})'$ is often modeled parametrically.
	Parametric methods are typically more stable, but they are sensitive to model misspecification. As argued in Section \ref{sec:full-data-estimands-a-priori-g}, misspecification is unavoidable when $g$ is not fixed a priori. Still, if the parametric model is a reasonable approximation, the gain in stability may justify this trade-off.
	We next outline the canonical parametric MA framework and make its assumptions explicit, while suppressing the dependence on $g$ in the notation.

	Let $f(\cdot; \boldsymbol{\theta})$ be the density of $(\alpha, \beta)' \in \mathbb{R}^2$ indexed by the finite-dimensional parameter $\boldsymbol{\theta} \in \Theta \subset \mathbb{R}^p$ where $\boldsymbol{\theta}_0$ is the true parameter. 
	In most meta-analyses, a bivariate normal model is assumed for $(\alpha, \beta)'$. Hence, $\boldsymbol{\theta} = (\boldsymbol{\mu}, D)'$ and $f(\cdot; \boldsymbol{\mu}, D)$ is a bivariate normal density with mean $\boldsymbol{\mu}$ and covariance $D$ where $\boldsymbol{\mu} = (\mu_{\alpha}, \mu_{\beta})$ and 
	\begin{equation*}
		D = \begin{pmatrix}
			d_{\alpha}^2 & \rho_{\text{trial}} d_{\alpha} d_{\beta} \\
			\rho_{\text{trial}} d_{\alpha} d_{\beta} & d_{\beta}^2
		\end{pmatrix}.
	\end{equation*}
	
	If $(\alpha, \beta)'$ were observed directly, estimation of $\boldsymbol{\theta_0}$ using (restricted) maximum likelihood would be trivial. However, we only observe corresponding estimates ($\Hat{\alpha}, \Hat{\beta})'$. Hence, we have to derive the observed-data likelihood, which is the likelihood for $O = (\boldsymbol{y}, n, \Sigma)$ where $\boldsymbol{y} = (\Hat{\alpha}, \Hat{\beta})'$ and $\Sigma$ is the true within-trial variance matrix. Indeed, $n$ and $\Sigma$ are part of the observed data. 
	The observed-data likelihood is given by:
	\begin{align*}
		L(O; \boldsymbol{\theta}) & = f(\boldsymbol{y} \mid n, \Sigma) \cdot f(n, \Sigma) \\
		& = f(n, \Sigma) \cdot \int f(\boldsymbol{y} \mid T, \alpha, \beta, n, \Sigma) \cdot f(T \mid \alpha, \beta, n, \Sigma) \cdot f(\alpha, \beta \mid n, \Sigma) \, d \nu(T, \alpha, \beta)  \\
		& = f(n, \Sigma) \cdot \int f(\boldsymbol{y} \mid T) \cdot f(T \mid \alpha, \beta, n, \Sigma) \cdot f(\alpha, \beta \mid n, \Sigma) \, d \nu(T, \alpha, \beta)  &~\mbox{($\alpha$, $\beta$, $n$, and $\Sigma$ are known given $T$)}
	\end{align*}
	where $f(\boldsymbol{y} \mid T)$ is the density of the sampling distribution of the treatment effect estimators given trial $T$, and $\nu$ is an appropriate dominating measure which we leave unspecified.
	This observed-data likelihood is too complex to use in practice due to the presence of $T$. This is solved in almost all meta-analytic methods by assuming that $f(\boldsymbol{y} \mid T)$ is also bivariate normal.
	
	\begin{assumption}[normal sampling distributions]\label{assumption:normal-sampling-distribution}
		For any given trial $T \in \mathbb{T}$, the sampling distribution of the treatment effect estimator is bivariate normal with mean $(\alpha, \beta)' = (\alpha(P^{T}), \beta(P^{T}))'$ and known covariance $\Sigma/n = \Sigma(T)/n(T)$: 
		\begin{equation}
			\boldsymbol{y} \mid T \sim \mathcal{N} \!\left( (\alpha, \beta)', n^{-1}\Sigma \right).
		\end{equation}
	\end{assumption}
	Under this assumption, the observed-data likelihood simplifies to:
	\begin{align*}
		L(O; \boldsymbol{\theta}) & = f(n, \Sigma) \cdot \int f(\boldsymbol{y} \mid \alpha, \beta, n^{-1} \Sigma) \cdot f(T \mid \alpha, \beta, n, \Sigma) \cdot f(\alpha, \beta \mid n, \Sigma) \, d \nu(T, \alpha, \beta) \\
		& = f(n, \Sigma) \cdot \int f(\boldsymbol{y} \mid \alpha, \beta, n^{-1} \Sigma) \cdot f(\alpha, \beta \mid n, \Sigma) \, d \nu(\alpha, \beta) &~\mbox{(integrate out $T$)}.
	\end{align*}
	The above expression is still unwieldy because of the presence of $f(\alpha, \beta \mid n, \Sigma)$ in the integral. This is solved by making the additional assumption that the within-trial covariance and sample size are unrelated to the true treatment effects. 
	\begin{assumption}[independence of treatment effects and covariance and sample size]\label{assumption:independence-treatment-effects-covariance}
		$(\alpha, \beta)' \perp \left(n, \Sigma \right)$.
	\end{assumption}
	The observed-data likelihood now simplifies to:
	\begin{align*}
		L(O; \boldsymbol{\theta})& = f(n, \Sigma) \cdot \int f(\boldsymbol{y} \mid \alpha, \beta, n^{-1} \Sigma) \cdot f(\alpha, \beta; \boldsymbol{\theta}) \, d \nu(\alpha, \beta).
	\end{align*}
	The presence of $f(n, \Sigma)$ in the above expression is a nuisance because we are not interested in the distribution  of the within-trial sample size and covariance. 
	This is solved by conditioning inference on the observed $n$ and $\Sigma$; which leads to the following conditional observed-data likelihood:
	\begin{equation}
		\begin{split}
			L(\boldsymbol{y}; \boldsymbol{\theta}, n^{-1}\Sigma) & = \int f(\boldsymbol{y} \mid \alpha, \beta, n^{-1} \Sigma) \cdot f(\alpha, \beta; \boldsymbol{\theta}) \, d \nu(\alpha, \beta).
		\end{split}
	\end{equation}
	In principle, we could now estimate $\boldsymbol{\theta}$ by maximizing the above likelihood, whatever the model $\left\{ f(\cdot; \boldsymbol{\theta}) : \boldsymbol{\theta} \in \Theta \right\}$ for the distribution of the true treatment effects. As mentioned before, a bivariate normal model is often used here. 
	\begin{assumption}[bivariate normality of treatment effects]\label{assumption:bivariate-normality-treatment-effects}
		The true treatment effects $(\alpha, \beta)'$ are bivariate normal with mean $\boldsymbol{\mu}$ and covariance $D$:
		\begin{equation*}
			(\alpha, \beta)' \sim \mathcal{N} \! \left( \boldsymbol{\mu}, D \right).
		\end{equation*}
	\end{assumption}
	Finally, under this additional normality assumption, the conditional observed-data likelihood simplifies to:
	\begin{equation}\label{eq:observed-data-likelihood}
		L(\boldsymbol{y}; \boldsymbol{\theta}, n^{-1} \Sigma) = f(\boldsymbol{y}; \boldsymbol{\mu}, D + n^{-1} \Sigma),
	\end{equation}
	where $f(\cdot; \boldsymbol{\mu}, D + n^{-1} \Sigma)$ is a bivariate normal density with mean $\boldsymbol{\mu}$ and covariance $D + n^{-1} \Sigma$.
	Estimation of $\boldsymbol{\theta}$ is now straightforward. 
	
	\begin{remark}
		In reality, $\Sigma$ is not observed, but it can be consistently estimated as $\Hat{\Sigma} = \Hat{\Sigma}(P_{n}^T)$. Therefore, the observed-data likelihood used in practice is the following \parencite{gail2000meta, korn2005assessing, van2002advanced, daniels1997meta}:
		\begin{equation}
			L(\boldsymbol{y}; \boldsymbol{\theta}, n^{-1} \Hat{\Sigma}) = f(\boldsymbol{y}; \boldsymbol{\mu}, D + n^{-1} \Hat{\Sigma}),
		\end{equation}
		where $\Sigma$ is replaced with $\Hat{\Sigma}$. This has also been termed the approximate likelihood by some authors \parencite{gabriel2016comparing}.
		This strategy has been implemented in R packages for multivariate meta-analysis such as \textit{mixmeta} \parencite{mixmeta}. However, ignoring the estimation error in $\Hat{\Sigma}$ generally leads to inconsistent estimators under the asymptotic regime of Section \ref{sec:ma-framework} (where within-trial sample sizes do not increase with the number of independent trials). Furthermore, the sampling distribution of estimators is generally not normal for a fixed sample size, but this is required by assumption \ref{assumption:normal-sampling-distribution}.
	\end{remark}

	\subsection{Prediction of \texorpdfstring{$\beta_0$}{beta0}}
	
	The two bivariate normality assumptions (on the level of the estimated and true treatment effects) simplify prediction of treatment effects in new trials. 
	For instance, suppose we have conducted a new trial, indicated by $j = 0$, where $S$ was measured but $Y$ was not. In this trial, we are interested in the treatment effect on $Y$, which cannot be estimated directly. Still, we can estimate the treatment effect on $S$ as $\Hat{\alpha}_0$ where $\Hat{\alpha}_0 \sim \mathcal{N}(\alpha_0, \sigma^2)$, or equivalently, $\Hat{\alpha}_0 - \alpha_0 \sim \mathcal{N}(0, \sigma^2)$. Because the sampling error is independent from $\alpha_0$ (viewed as random treatment effect), it can be shown that
	\begin{equation*}
		(\Hat{\alpha}_0, \beta_0)' \sim \mathcal{N}\left\{ \boldsymbol{\mu},
		\begin{pmatrix}
			d_{\alpha}^2 + \sigma^2 & \rho_{trial} d_{\alpha} d_{\beta} \\
			\rho_{trial} d_{\alpha} d_{\beta} & d_{\beta}^2
		\end{pmatrix} \right\}.
	\end{equation*}
	The conditional distribution $\beta \mid \Hat{\alpha}  = \Hat{\alpha}_0$, which can be used for empirical Bayes prediction and inference, follows directly as
	\begin{equation}\label{eq:empirical-bayes-prediction}
		\beta \mid \Hat{\alpha} = \Hat{\alpha}_0 \sim \mathcal{N} \left( \mu_{\beta} + \rho_{trial} \cdot \frac{d_{\beta}}{\sqrt{d_{\alpha}^2 + \sigma^2}} \cdot (\Hat{\alpha}_0 - \mu_{\alpha}), d_{\beta}^2 - \frac{(\rho_{trial} d_{\alpha} d_{\beta})^2}{d_{\alpha}^2 + \sigma^2} \right).
	\end{equation}
	Above, the prediction for $\beta_0$ is shrunk towards $\mu_{\beta}$; the amount of shrinkage depends on $\sigma^2$ (i.e., the sampling variability of $\Hat{\alpha}$). This shrinkage approach to prediction is applied, among others, in \textcite{gail2000meta, korn2005assessing} and is appropriate if the new trial and the trials in the meta-analysis are sampled independently from the same population of trials.
	
	\begin{remark}
		For a new trial $j = 0$, we could consider $\alpha_0$ as fixed and assume that the conditional distribution of $\beta_0 \mid \alpha_0$ is the same for the new trial and the trials in the meta-analysis. The shrinkage present in (\ref{eq:empirical-bayes-prediction}) is then not appropriate anymore because $(\alpha_0, \beta_0)'$ is not sampled from the distribution of $(\alpha, \beta)$ implied by $P^F_0$. 
		Instead, we start from the following conditional distribution with $\alpha$ known:
		\begin{equation*}
			\beta \mid \alpha = \alpha_0 \sim N \left( \mu_{\beta} + \rho_{trial} \cdot \frac{d_{\beta}}{d_{\alpha}} \cdot (\alpha_0 - \mu_{\alpha}), \, d_{\beta}^2 (1 - \rho_{trial}^2) \right).
		\end{equation*}
		In the above prediction, we can simply replace $\alpha_0$ with $\Hat{\alpha}_0$. This prediction is unbiased for a fixed $\alpha_0$:
		\begin{equation*}
			\begin{split}
				E (\mu_{\beta} + \rho_{trial} \cdot \frac{d_{\beta}}{d_{\alpha}} \cdot (\Hat{\alpha}_0 - \mu_{\alpha}) \mid \alpha_0)  & = 	E (\mu_{\beta} + \rho_{trial} \cdot \frac{d_{\beta}}{d_{\alpha}} \cdot (\Hat{\alpha}_0 - \alpha_0 + \alpha_0 - \mu_{\alpha}) \mid \alpha_0) \\
				& = \mu_{\beta} + \rho_{trial} \cdot \frac{d_{\beta}}{d_{\alpha}} \cdot (\alpha_0 - \mu_{\alpha}) + \rho_{trial} \cdot \frac{d_{\beta}}{d_{\alpha}} \cdot E(\Hat{\alpha}_0 - \alpha_0 \mid \alpha_0) \\
				& = E(\beta_0 \mid \alpha_0).
			\end{split}
		\end{equation*}
		This approach is used throughout \textcite{alonso2016applied}.
	\end{remark}
	

	\subsection{Bayesian Implementation}\label{appendix:bayesian}

	In the data application and the simulation study, we also consider a Bayesian implementation of the canonical parametric meta-analysis. 
	In this section, we describe the Bayesian implementation that was used in the data application and the simulation study.

	We implemented a Bayesian hierarchical model for bivariate trial-level treatment effects as follows. For $i = 1, \ldots, N$ trials, let $\mathbf{y}_i := (\Hat{\alpha}_i, \Hat{\beta}_i)'$ denote the observed estimated treatment effects on the surrogate and clinical endpoints, with known within-trial covariance matrix $\Sigma_i = \Hat{\Sigma}_i$ and sample size $n_i$.

	\begin{align*}
	\text{Data model:} \qquad
	& \mathbf{y}_i \mid \alpha_i, \beta_i, \Sigma_i, n_i \sim \mathcal{N} \! \left((\alpha_i, \beta_i)', \Sigma_i / n_i \right) \\
	\text{Latent effects:} \qquad
	& (\alpha_i, \beta_i)' \mid \boldsymbol{\mu}, D \sim \mathcal{N}(\boldsymbol{\mu}, D) \\
	\text{where} \qquad
	& \boldsymbol{\mu} = \begin{pmatrix} \mu_{\alpha} \\ \mu_\beta \end{pmatrix}, \quad
	D = \begin{pmatrix} d_{\alpha}^2 & \rho_{\text{trial}} d_{\alpha} d_{\beta} \\ \rho_{\text{trial}} d_{\alpha} d_{\beta} & d_{\beta}^2 \end{pmatrix}
	\end{align*}

	The hyperparameters have the following priors:
	\begin{align*}
	& \mu_{\alpha}, \mu_{\beta} \sim \mathcal{N}(0, 2^2) \\
	& d_{\alpha}, d_{\beta} \sim \mathcal{N}^+(0, 2^2) \quad \text{(truncated to } [0, \infty)) \\
	& \rho \sim \mathrm{Uniform}(-1, 1)
	\end{align*}

	In the data application, we also consider the analysis where the regression of $\beta$ on $\alpha$ is assumed to go through the origin. This is implemented by setting $\mu_{\beta} = \rho_{\text{trial}} \cdot d_{\beta} / d_{\alpha} \cdot \mu_{\alpha}$.

	\appendixclearpage
	\section{Plug-In Estimator and Efficiency}\label{appendix:plug-in-estimator}
	
	\subsection{Plug-In Estimator}

	In the following, we describe the estimating functions for the plug-in estimator as given in (\ref{eq:plug-in-estimator}) of the main text and corresponding finite-sample corrections.
	We also explain how approximate prediction intervals can be constructed in this setting.

	\subsubsection{Estimating Function}

	The plug-in estimator for $\Psi^g(P_0)$ in (\ref{eq:plug-in-estimator}) is defined by the estimating function $\phi^g_{\boldsymbol{\theta}}: \mathbb{N} \times \mathcal{M}^*_{\text{NP}} \to \mathbb{R}^5$, indexed by $\boldsymbol{\theta} := (\mu_{\alpha}^g, \mu_{\beta}^g, d_{\alpha}^g, d_{\beta}^g, d_{\alpha, \beta}^g)'$. 
	\begin{equation}\label{eq:estimating-function}
			\begin{split}
				\phi_{\boldsymbol{\theta}}^g & := \begin{pmatrix}
					\phi_{\boldsymbol{\theta}, 1}^g \\
					\phi_{\boldsymbol{\theta}, 2}^g \\
					\phi_{\boldsymbol{\theta}, 3}^g \\
					\phi_{\boldsymbol{\theta}, 4}^g \\
					\phi_{\boldsymbol{\theta}, 5}^g 
				\end{pmatrix} = \begin{pmatrix}
					\Hat{\alpha}^g - \mu_{\alpha}^g \\
					\Hat{\beta} - \mu_{\beta}^g \\
					(\Hat{\alpha}^g - \mu_{\alpha}^g)^2 - n^{-1} \Hat{\sigma}_{\Hat{\alpha}^g}^2 - d_{\alpha}^g \\
					(\Hat{\beta} - \mu_{\beta}^g)^2 - n^{-1} \Hat{\sigma}_{\Hat{\beta}}^2 - d_{\beta}^g  \\
					(\Hat{\alpha}^g - \mu_{\alpha}^g) \cdot (\Hat{\beta} - \mu_{\beta}^g) - n^{-1} \Hat{\sigma}_{\Hat{\alpha}^g, \Hat{\beta}} - d_{\alpha, \beta}^g
				\end{pmatrix}
			\end{split}
		\end{equation}
		where 
		\begin{equation*}
			\Hat{\Sigma}^g =: \begin{pmatrix}
				\Hat{\sigma}_{\Hat{\alpha}^g}^2 & \Hat{\sigma}_{\Hat{\alpha}^g, \Hat{\beta}} \\
				\Hat{\sigma}_{\Hat{\alpha}^g, \Hat{\beta}} & \Hat{\sigma}_{\Hat{\beta}}^2
			\end{pmatrix}
		\end{equation*}
		is the estimated within-trial covariance matrix. Its expectation, $\Phi^g(\boldsymbol{\theta}) := P_0 \phi^g_{\boldsymbol{\theta}}$, is
		\begin{equation*}
			\Phi^g(\boldsymbol{\theta}) = \begin{pmatrix}
				P_0 \Hat{\alpha}^g - \mu_{\alpha}^g \\
				P_0 \Hat{\beta} - \mu_{\beta}^g \\
				P_0 \left(\Hat{\alpha}^g - \mu_{\alpha}^g\right)^2 - P_0 \left( n^{-1} \Hat{\sigma}_{\Hat{\alpha}^g}^2 \right) - d_{\alpha}^g \\
				P_0 \left(\Hat{\beta} - \mu_{\beta}^g \right)^2 - P_0 \left( n^{-1} \Hat{\sigma}_{\Hat{\beta}}^2 \right) - d_{\beta}^g  \\
				P_0 \left\{ \left(\Hat{\alpha}^g - \mu_{\alpha}^g \right) \cdot \left(\Hat{\beta} - \mu_{\beta}^g \right) \right\} - P_0 \left( n^{-1} \Hat{\sigma}_{\Hat{\alpha}^g, \Hat{\beta}} \right) - d_{\alpha, \beta}^g
			\end{pmatrix}.
		\end{equation*}
		The Jacobian of $\Phi^g(\boldsymbol{\theta})$ with respect to $\boldsymbol{\theta}$ is
		\begin{equation}\label{eq:phi-dot}
			\begin{split}
				\dot{\Phi}^g(\boldsymbol{\theta}) & = \begin{pmatrix}
					-1 & 0 & 0 & 0 & 0 \\
					0 & -1 & 0 & 0 & 0 \\
					-2 \cdot (P_0 \Hat{\alpha}^g - \mu_{\alpha}^g) & 0 & -1 & 0 & 0 \\
					0 & -2 \cdot (P_0 \Hat{\beta} - \mu_{\beta}^g) & 0 & -1 & 0 \\
					-1 \cdot \left( P_0 \Hat{\beta} - \mu_{\beta}^g \right) & -1 \cdot \left( P_0 \Hat{\alpha}^g - \mu_{\alpha}^g \right) & 0 & 0 & - 1
				\end{pmatrix} 
			\end{split}
		\end{equation}
		At any solution $\boldsymbol{\theta}^g_0$ with $\Phi^g(\boldsymbol{\theta}^g_0) = 0$, this Jacobian is diagonal with $-1$ on the diagonal.

	\subsubsection{Finite-Sample Corrections}

	The estimator based on (\ref{eq:estimating-function}) uses the biased sample covariance for $(\Hat{\alpha}^g, \Hat{\beta})'$, which divides by $N$ rather than $N - 1$. A simple finite-sample correction is to rescale the covariance terms as follows:
	\begin{equation*}
			\begin{pmatrix}
				\phi_{\boldsymbol{\theta}, 3}^g \\
				\phi_{\boldsymbol{\theta}, 4}^g \\
				\phi_{\boldsymbol{\theta}, 5}^g
			\end{pmatrix} = \begin{pmatrix}
				\frac{N}{N - 1} (\Hat{\alpha}^g - \mu_{\alpha}^g)^2 - n^{-1} \Hat{\sigma}_{\Hat{\alpha}^g}^2 - d_{\alpha}^g \\
				\frac{N}{N - 1} (\Hat{\beta} - \mu_{\beta}^g)^2 - n^{-1} \Hat{\sigma}_{\Hat{\beta}}^2 - d_{\beta}^g  \\
				\frac{N}{N - 1} (\Hat{\alpha}^g - \mu_{\alpha}^g) \cdot (\Hat{\beta} - \mu_{\beta}^g) - n^{-1} \Hat{\sigma}_{\Hat{\alpha}^g, \Hat{\beta}} - d_{\alpha, \beta}^g
			\end{pmatrix}
	\end{equation*}
	
	A corresponding finite-sample adjusted sandwich estimator is
	\begin{equation*}
		\left(P_N \dot{\phi}_{\boldsymbol{\theta}^{g_N}_0}^{g_N}\right)^{-1} \frac{N}{N - 1} P_N \left( \phi_{\boldsymbol{\theta}_0^{g_N}}^{g_N} \right)^{\otimes 2} \left(P_N \dot{\phi}_{\boldsymbol{\theta}^{g_N}_0}^{g_N}\right)^{-1}.
	\end{equation*}

	Both corrections are used in the data application and the simulation study.

	\subsubsection{Approximate Prediction Intervals}
	
	The parameters above are appealing because they are well defined in a non-parametric model, but they are not directly useful for prediction in new trials where the clinical endpoint is unobserved. Given the small number of independent trials in practice, trial-level prediction requires strong parametric assumptions. Such assumptions are in principle testable, but with so few trials those tests have little power. More importantly, the assumptions are at best approximate. The resulting prediction intervals should therefore be viewed as approximations, and they are not guaranteed to attain nominal coverage, even as $N \to \infty$.
	
	Under surrogacy and comparability, treatment effects on the surrogate index $g_0$ equal treatment effects on the clinical endpoint: $\beta(T) = \alpha^{g_0}(T)$ for all $T \in \mathbb{T}$. In practice, these assumptions hold only approximately and the surrogate index is estimated. We therefore write $\beta = \alpha^{g_N} + \epsilon$ for the estimated surrogate index $g_N$, with $E(\epsilon \mid \alpha^{g_N}) \approx 0$.
	To construct prediction intervals, we further assume that $\epsilon \sim \mathcal{N}(0, \sigma^2_{g_N})$. Let $\Hat{\alpha}_{0}^{g_N}$ denote the estimated treatment effect in a new trial, with subscript $0$ indicating the new trial and estimated standard error $\Hat{\sigma}_{\Hat{\alpha}^{g_N}, 0}$. We further treat $\alpha_0^{g_N}$ and $g_N$ as fixed.
	If we additionally assume that $\Hat{\alpha}_{0}^{g_N} - \alpha_0^{g_N} \sim \mathcal{N}(0, \sigma_{\Hat{\alpha}^{g_N}, 0}^2/n_0)$, we have
	\begin{align*}
		\beta_0 - \Hat{\alpha}_{0}^{g_N} & = (\beta_0 - \alpha_0^{g_N}) + (\alpha_0^{g_N} - \Hat{\alpha}_{0}^{g_N}) \\
		& = \epsilon_0 + (\alpha_0^{g_N} - \Hat{\alpha}_{0}^{g_N}) \sim  \mathcal{N}(0, \sigma^2_{g_N} + \sigma_{\Hat{\alpha}^{g_N}, 0}^2/n_0)
	\end{align*}
	because $\epsilon_0$ and $\alpha_0^{g_N} - \Hat{\alpha}_{0}^{g_N}$ are independent by design. The corresponding approximate $1 - \alpha$ prediction interval for $\beta_0$ is
	\begin{equation*}
		\Hat{\alpha}_{0}^{g_N} \pm z_{1 - \alpha / 2} \cdot \sqrt{\Hat{\sigma}_{\Hat{\alpha}^{g_N}, 0}^2/n_0 + \Hat{\sigma}^2_{g_N}},
	\end{equation*}
	where $z_{1 - \alpha / 2}$ is the $1 - \alpha/2$ quantile of the standard normal distribution and $\Hat{\sigma}^2_{g_N}$ estimates $\sigma^2_{g_N}$.

	\begin{remark}[estimation of $\sigma^2_{g_N}$]
		In this setup, the full-data parameter $\sigma^2_{g_N}$ is defined as $\sigma^2_{g_N} := E_{P^F} \left\{ (\beta - \alpha^{g_N})^2 \right\}$. Under assumptions \ref{assumption:unbiased-trt-effect} and \ref{assumption:unbiased-sampling-variance}, it is identified from the observed-data distribution as
		\begin{equation*}
			\sigma^2_{g_N} = E_{P} \left\{ (\Hat{\beta} - \Hat{\alpha}^{g_N})^2 \right\} - E_{P} \left\{ \Hat{\sigma}_{\Hat{\beta}}^2/n + \Hat{\sigma}_{\Hat{\alpha}^{g_N}}^2/n - 2 \Hat{\sigma}_{\Hat{\alpha}^{g_N}, \Hat{\beta}}/n \right\}.
		\end{equation*}
		where $\Hat{\sigma}_{\Hat{\beta}}^2$, $\Hat{\sigma}_{\Hat{\alpha}^{g_N}}^2$, and $\Hat{\sigma}_{\Hat{\alpha}^{g_N}, \Hat{\beta}}$ are the corresponding entries of $\Hat{\Sigma}^{g_N}$. 
		An estimator of $\sigma^2_{g_N}$ is obtained by replacing $P$ with $\mathbb{P}_N$.
	\end{remark}

	\subsection{Semi-Parametric Efficiency Analysis}\label{appendix:semi-parametric-efficiency-analysis}

	\subsubsection{Results for General Multivariate Meta-Analysis}

	The following analysis is more general than the bivariate meta-analysis setting considered in the main text. We consider a multivariate meta-analysis where the estimated treatment effects are denoted by $\Hat{X} \in \mathbb{R}^k$ and the estimated variance matrices by $\Hat{\Sigma} \in \mathbb{R}^{k \times k}$. Here, $\Hat{\Sigma}$ estimates the within-trial variance of $\Hat{X}$ (i.e., without the the root-$n$ standardization as in the main text). For notational simplicity, we suppress within-trial sample sizes. All arguments are unchanged if one augments the observed data with $n$, since we do not impose any restrictions on the dependence between $n$ and $(\Hat{X}, \Hat{\Sigma})$ (and thus the implied observed-data model remains locally non-parametric).
	
	We first present a lemma relevant for the canonical meta-analysis setup where $\Hat{\Sigma} = \Sigma$ and $\Hat{X} \mid X, \Sigma \sim \mathcal{N}(X, \Sigma)$, with $X$ the vector of true treatment effects. Although these are strong identifying assumptions, the implied observed-data model is already locally non-parametric if no assumptions are imposed on the distribution of $(X, \Sigma)$. The results and proofs closely follow \textcite[Example 25.35]{van2000asymptotic} and \textcite[Appendix B]{chen2025empirical}.

	\begin{restatable}{proposition}{propositionnonparametricmamodelb}
		\label{proposition:non-parametric-ma-modelb}
		Let $\mathcal{M}^F$ be a model for the full-data distribution of $Y = (\Hat{X}, \Hat{\Sigma}, X, \Sigma)$. For every $P^F \in \mathcal{M}^F$, assume that
		\begin{equation*}
			\Hat{X} \mid X, \Sigma \sim \mathcal{N}(X, \Sigma) \quad \text{and} \quad \Hat{\Sigma} = \Sigma
		\end{equation*}
		If the support of $X \mid \Sigma$ contains a nonempty open set in $\mathbb{R}^k$ $P^F_0$-a.s. and $\Sigma$ is positive definite $P^F_0$-a.s., then the induced model for the observed data $(\Hat{X}, \Hat{\Sigma})$ is locally non-parametric at $P_0 = h(P^F_0)$.
	\end{restatable}

	Lemma \ref{proposition:non-parametric-ma-modelb} imposes stronger model restrictions than those used in the main text. The following corollary extends the result to the weaker assumptions adopted there, namely Assumptions \ref{assumption:unbiased-trt-effect} and \ref{assumption:unbiased-sampling-variance}.

	\begin{restatable}{corollary}{corollarynonparametricmamodel}
		\label{corollary:non-parametric-ma-model}
		Under the same support and positive-definiteness conditions as in Lemma \ref{proposition:non-parametric-ma-modelb}, but replacing the conditional normality assumption by the weaker moment conditions that, for every $P^F \in \mathcal{M}^F$, 
		\begin{equation*}
			E (\Hat{X} \mid X, \Sigma) = X \quad \text{and} \quad \operatorname{Var}(\Hat{X} \mid X, \Sigma) = \Sigma,
		\end{equation*}
		and $E(\Hat{\Sigma} \mid X, \Sigma) = \Sigma$, the induced model for the observed data $(\Hat{X}, \Hat{\Sigma})$ is locally non-parametric at $P_0 = h(P^F_0)$.
	\end{restatable}

	\subsubsection{Implications}

	Because the observed-data model $\mathcal{M}$ is locally non-parametric at $P_0$, the efficient influence function is unique. Hence, any two regular asymptotically linear estimators are asymptotically equivalent and, in particular, the plug-in estimator \eqref{eq:plug-in-estimator} is efficient. This is counterintuitive for two reasons.

	First, in finite samples, the plug-in estimator may return estimated variance matrices that are not positive semi-definite. Hence, in finite samples, the plug-in estimator may be improved by truncating estimated variance matrices to be positive semi-definite by finding the closest positive semi-definite matrix to the estimated matrix. This is a well-known problem in the literature and can be solved using the \texttt{nearPD} function in R. However, the plug-in estimator will return positive semi-definite estimates with probability tending to one as $N \to \infty$. 

	Second, one may try to improve the plug-in estimator by weighting the trials by, for example, $(\Hat{\Sigma}^g/n)^{-1}$. Such weighting is valid only under the additional assumption that $\Hat{\Sigma}^g/n$ is unrelated to the treatment effects; this assumption is arbitrary as illustrated in the example below. If we would assume this, the observed-data model $\mathcal{M}$ would no longer be locally non-parametric and, there would generally exist multiple regular asymptotically linear estimators. The plug-in estimator would be one of them, but not necessarily the most efficient one.

	\begin{example}
		Consider the notation of Proposition \ref{proposition:non-parametric-ma-modelb} and assume that the proposition's conditions hold. Additionally assume that $\Sigma \perp X$. We could redefine the treatment effects as $\tilde X := X^3$, where the power is applied element-wise. We then define the corresponding covariance matric $\tilde \Sigma$, which is approximately $\tilde \Sigma \approx 9 \cdot \text{diag}(X^4) \cdot \Sigma$ by the delta method; hence, $\tilde \Sigma \not \perp \tilde X$. 
		Changing the scale of the treatment effects can thus change the covariance matrix in a way that breaks independence. 
	\end{example}

	\appendixclearpage
	\section{Inference for Data-Adaptive Parameters via Estimating Equations}\label{appendix:discussion-conditions-theorem-bootstrap-for-data-adaptive-estimands}

	This appendix proceeds in three parts. 
	First, we present a general theorem for inference about data-adaptive parameters defined through estimating equations. 
	Second, we specialize this theorem to the meta-analytic setting and give sufficient conditions for this theorem to hold in this setting.
	Third, we explain how these sufficient conditions relate to overfitting of the surrogate-index estimator and how this can be mitigated in practice.

	\subsection{A General Theorem}\label{sec:inference-estimand-g-N}

	Theorem \ref{theorem:asymptotic-linearity-data-adaptive-estimand} established asymptotic linearity for general estimators of data-adaptive parameters, but its conditions can be difficult to verify in practice.
	Theorem \ref{theorem:bootstrap-for-data-adaptive-estimands} below specializes Theorem \ref{theorem:asymptotic-linearity-data-adaptive-estimand} to parameters defined through estimating equations and gives more interpretable conditions that are easier to verify in practice.
	
	For each $g \in \mathcal{G}$ and $\boldsymbol{\theta} \in \Theta$, let $\phi^g_{\boldsymbol{\theta}}: \Omega \to \mathbb{R}^p$ denote the estimating function and define its expectation by $\Phi^g(\boldsymbol{\theta}) := P_0 \phi^g_{\boldsymbol{\theta}}$. The target $\boldsymbol{\theta}_0^g$ is defined as any solution of the population estimating equation
	\begin{equation*}
		\Phi^g(\boldsymbol{\theta}) = 0.
	\end{equation*}
	Given observed data $\left(O_i \right)_{i=1}^N$, the empirical estimator $\Hat{\boldsymbol{\theta}}_N^{g_N}$ solves the empirical estimating equation
	\begin{equation*}
		\mathbb{P}_N \phi^{g_N}_{\boldsymbol{\theta}} = 0,
	\end{equation*}
	and the bootstrap estimator $\Tilde{\boldsymbol{\theta}}_N^{g_N}$ solves the corresponding bootstrap estimating equation
	\begin{equation*}
		\Tilde{\mathbb{P}}_N \phi^{g_N}_{\boldsymbol{\theta}} = 0,
	\end{equation*}
	up to negligible $o_P(N^{-1/2})$ remainder terms, where $\Tilde{\mathbb{P}}_N$ is defined below.
	
	Before stating the theorem, we introduce additional notation.
	We let $\Tilde{\mathbb{P}}_N$ be the empirical distribution of the bootstrap sample defined as $\Tilde{\mathbb{P}}_N := \frac{1}{N} \sum_{i = 1}^N W_i \cdot \delta_{O_i}$ where $\delta_{O_i}$ is the Dirac measure at $O_i$ and $W_i$ is the bootstrap weight. For the multiplier bootstrap, the bootstrap weights $W_i$ are i.i.d.~random variables with mean $0 < \mu < \infty$ and variance $0 < \tau^2 < \infty$ such that $\frac{\mu}{\tau^2} = 1$, and with $\lVert W \rVert_{2, 1} < \infty$\footnote{The norm $\lVert \cdot \rVert_{2, 1}$ is defined as $\int_0^{\infty} \sqrt{P(|W| > x)} \, dx$.}. 
	For the multinomial bootstrap, the bootstrap weights are obtained by sampling from a multinomial distribution with $N$ trials and $N$ categories, where the probability of each category is $1/N$.
	For both the multiplier and multinomial bootstrap, the bootstrap weights are independent of the observed data $O_1, \dots, O_N$.
	
	For the bootstrap result, we use an alternative definition of weak convergence for conditional bootstrap laws (see, e.g., \textcite[Section 2.2.3]{kosorok2008introduction} for details). Briefly, let $(\mathbb{E}, e)$ be a metric space with metric $e$ and let $\breve{X}_N$ be a sequence of bootstrapped processes in $\mathbb{E}$ with random weights $W$. For some tight process $X$ in $\mathbb{E}$, we write $\breve{X}_N \overset{P}{\underset{W}{\leadsto}} X$ to mean that
	\begin{equation*}
		\sup_{h \in \operatorname{BL}_1} \left| E_W h(\breve{X}_N) - E h(X) \right| \overset{P}{\to} 0, \quad \text{and for all } h \in \operatorname{BL}_1 \quad E_W h(\breve{X}_N)^* - E_W h(\breve{X}_N)_* \overset{P}{\to} 0, 
	\end{equation*}
	where 
	\begin{itemize}
		\item $\operatorname{BL}_1$ is the set of functions $h: \mathbb{E} \to \mathbb{R}$ with $\lVert h \rVert_{\infty} \le 1$ and Lipschitz constant at most $1$ with respect to the metric $e$.
		\item $E_W h(\breve{X}_N)$ is the expectation of $h(\breve{X}_N)$ with respect to the bootstrap weights $W$ given the remaining data.
		\item $h(\breve{X}_N)^*$ and $h(\breve{X}_N)_*$ are the measurable majorants and minorants with respect to the joint data $\left( O_1, \dots, O_N, W_1, \dots, W_N \right)$. 
	\end{itemize}

	We are now ready to state the theorem. 
	\begin{restatable}{theorem}{theorembootstrapfordataadaptiveestimands}
		\label{theorem:bootstrap-for-data-adaptive-estimands}
		Let $\Theta \subset \mathbb{R}^p$ be open and define $g \mapsto \boldsymbol{\theta}^g_0$ such that $\Phi^{g}(\boldsymbol{\theta}^g_0) = 0$ for all $g \in \mathcal{G} \subset \mathbb{D}$. Further assume that $g_N \in \mathcal{G}$ with probability tending to one as $N \to \infty$, where $g_N$ depends on the observed data and $(\mathbb{D}, d)$ is a metric space with metric $d$. Further assume there exists $g^* \in \mathcal{G}$ such that $d(g_N, g^*) = o_P(1)$ and define $\rho\left\{(\boldsymbol{\theta}_1, g_1), (\boldsymbol{\theta}_2, g_2)\right\} := \lVert \boldsymbol{\theta}_1 - \boldsymbol{\theta}_2 \rVert_2 + d(g_1, g_2)$ as a metric on $\Theta \times \mathbb{D}$.

		Additionally assume the following conditions:
		\begin{enumerate}
			\item[(i)]  $\Phi^g(\boldsymbol{\theta}_N) \to 0$ implies that $\boldsymbol{\theta}_N \to \boldsymbol{\theta}^g_0$ for all $g \in \mathcal{G}$.
			\item[(ii)] $\left\{ \phi^g_{\boldsymbol{\theta}}: g \in \mathcal{G}, \boldsymbol{\theta} \in \Theta \right\}$ is strong $P_0$-Glivenko--Cantelli.
			\item[(iii)] $\exists \,  \delta_1 > 0$ such that $\left\{ \phi^g_{\boldsymbol{\theta}}: g \in \mathcal{G}, \lVert \boldsymbol{\theta} - \boldsymbol{\theta}^{g^*}_0\rVert_2 < \delta_1 \right\}$ is $P_0$-Donsker and $P_0 \lVert \phi^g_{\boldsymbol{\theta}} - \phi^{g^*}_{\boldsymbol{\theta}^{g^*}_0} \rVert_{2}^2 \to 0 $ if $\rho \left\{ (\boldsymbol{\theta}, g), (\boldsymbol{\theta}^{g^*}_0, g^*) \right\} \to 0$.
			\item[(iv)] $P_0 \lVert \phi^{g^*}_{\boldsymbol{\theta}^{g^*}_0} \rVert_2^2 < \infty$ and $\boldsymbol{\theta} \mapsto \Phi^g({\boldsymbol{\theta}})$ is differentiable in $\lVert \boldsymbol{\theta} - \boldsymbol{\theta}^{g^*}_0 \rVert_2 < \delta_2$ for some $\delta_2 > 0$ for all $g \in \mathcal{G}$ with the matrix of partial derivatives, denoted by $\dot{\Phi}^g(\boldsymbol{\theta})$, being invertible at $\boldsymbol{\theta}^{g^*}_0$ for $g = g^*$. Further assume that $(\boldsymbol{\theta}, g) \mapsto \dot{\Phi}^g(\boldsymbol{\theta})$ is continuous at $(\boldsymbol{\theta}^{g^*}_0, g^*)$ with respect to the metric $\rho$.
			\item[(v)] $\Hat{\boldsymbol{\theta}}^{g_N}_N$ and $\Tilde{\boldsymbol{\theta}}^{g_N}_N$ solve the empirical and bootstrapped estimating equations approximately: $\mathbb{P}_N \phi^{g_N}_{\Hat{\boldsymbol{\theta}}_N^{g_N}} = o_P(N^{-1/2})$ and $\Tilde{\mathbb{P}}_N \phi^{g_N}_{\Tilde{\boldsymbol{\theta}}^{g_N}_N} = o_P(N^{-1/2})$.
			\item[(vi)] $g \mapsto \Phi^g(\boldsymbol{\theta})$ is Lipschitz continuous uniformly over $\boldsymbol{\theta} \in \Theta$: $\lVert \Phi^{g_1}(\boldsymbol{\theta}) - \Phi^{g_2}(\boldsymbol{\theta}) \rVert_2 \le L \cdot d(g_1, g_2)$ for all $g_1, g_2 \in \mathcal{G}$ and $\boldsymbol{\theta} \in \Theta$ for some $L > 0$.
		\end{enumerate}

		Under conditions (i) to (vi), we have that, as $N \to \infty$,
		\begin{equation*}
			N^{1/2} \left( \Hat{\boldsymbol{\theta}}^{g_N}_N - \boldsymbol{\theta}^{g_N}_0 \right) \overset{d}{\to} Z \sim \mathcal{N}\! \left(0, \Omega^{g^*}_0 \right) \quad \text{and} \quad
			N^{1/2} \left( \Tilde{\boldsymbol{\theta}}^{g_N}_N - \Hat{\boldsymbol{\theta}}^{g_N}_N \right) \overset{P}{\underset{W}{\leadsto}} Z \sim \mathcal{N}\! \left(0, \Omega^{g^*}_0 \right),
		\end{equation*}
		where $\Omega^{g^*}_0 = \left( \dot{\Phi}^{g^*}(\boldsymbol{\theta}^{g^*}_0) \right)^{-1} P_0 \left( \phi^{g^*}_{\boldsymbol{\theta}^{g^*}_0} \right)^{\otimes 2} \left\{ \left( \dot{\Phi}^{g^*}(\boldsymbol{\theta}^{g^*}_0) \right)^{-1} \right\}^{\top}$.
	\end{restatable}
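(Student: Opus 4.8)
The plan is to treat this as a Z-estimator (estimating-equation) master theorem, but with every empirical-process argument made uniform over the nuisance index $g \in \mathcal{G}$ so that the data-driven index $g_N$ can be accommodated. The backbone is the familiar decomposition of the estimating equation into an empirical-process term and a drift term, followed by a one-term linearization; the work lies entirely in controlling these terms when both the argument $\boldsymbol{\theta}$ and the index $g$ are random and converging to $(\boldsymbol{\theta}^{g^*}_0, g^*)$. First I would establish the two consistency facts. That $\boldsymbol{\theta}^{g_N}_0 \to \boldsymbol{\theta}^{g^*}_0$ follows from condition (vi): since $\Phi^{g_N}(\boldsymbol{\theta}^{g_N}_0) = 0$, the Lipschitz bound gives $\lVert \Phi^{g^*}(\boldsymbol{\theta}^{g_N}_0) \rVert_2 \le L\, d(g_N, g^*) = o_P(1)$, and the zero-isolation condition (i) forces $\boldsymbol{\theta}^{g_N}_0 \to \boldsymbol{\theta}^{g^*}_0$. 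For $\Hat{\boldsymbol{\theta}}^{g_N}_N$, I would combine (v) with the strong Glivenko--Cantelli property (ii), which yields $\sup_{g, \boldsymbol{\theta}} \lVert (P_N - P_0)\phi^g_{\boldsymbol{\theta}} \rVert_2 \to 0$; hence $\Phi^{g_N}(\Hat{\boldsymbol{\theta}}^{g_N}_N) = o_P(1)$, and (vi) plus (i) again pin down the limit $\boldsymbol{\theta}^{g^*}_0$.

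The core step is the linearization. Starting from $P_N \phi^{g_N}_{\Hat{\boldsymbol{\theta}}^{g_N}_N} = o_P(N^{-1/2})$, I would write $\sqrt{N}\, P_N \phi^{g_N}_{\Hat{\boldsymbol{\theta}}^{g_N}_N} = \sqrt{N}(P_N - P_0)\phi^{g_N}_{\Hat{\boldsymbol{\theta}}^{g_N}_N} + \sqrt{N}\,\Phi^{g_N}(\Hat{\boldsymbol{\theta}}^{g_N}_N)$. For the first term I would invoke asymptotic equicontinuity of the empirical process over the $P_0$-Donsker class in condition (iii): since $(\Hat{\boldsymbol{\theta}}^{g_N}_N, g_N) \to (\boldsymbol{\theta}^{g^*}_0, g^*)$ in the $\rho$-metric and $P_0 \lVert \phi^{g_N}_{\Hat{\boldsymbol{\theta}}^{g_N}_N} - \phi^{g^*}_{\boldsymbol{\theta}^{g^*}_0} \rVert_2^2 \to 0$, the increment $\sqrt{N}(P_N-P_0)(\phi^{g_N}_{\Hat{\boldsymbol{\theta}}^{g_N}_N} - \phi^{g^*}_{\boldsymbol{\theta}^{g^*}_0})$ is $o_P(1)$, so this term reduces to $\sqrt{N}(P_N - P_0)\phi^{g^*}_{\boldsymbol{\theta}^{g^*}_0}$. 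For the drift term, I would Taylor expand $\Phi^{g_N}$ about its own root $\boldsymbol{\theta}^{g_N}_0$ using differentiability (iv): $\Phi^{g_N}(\Hat{\boldsymbol{\theta}}^{g_N}_N) = \dot{\Phi}^{g_N}(\boldsymbol{\theta}^{g_N}_0)(\Hat{\boldsymbol{\theta}}^{g_N}_N - \boldsymbol{\theta}^{g_N}_0) + o_P(\lVert \Hat{\boldsymbol{\theta}}^{g_N}_N - \boldsymbol{\theta}^{g_N}_0 \rVert_2)$, where continuity of $(\boldsymbol{\theta}, g) \mapsto \dot{\Phi}^g(\boldsymbol{\theta})$ at $(\boldsymbol{\theta}^{g^*}_0, g^*)$ replaces the derivative matrix by the invertible $\dot{\Phi}^{g^*}(\boldsymbol{\theta}^{g^*}_0)$. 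Solving for $\sqrt{N}(\Hat{\boldsymbol{\theta}}^{g_N}_N - \boldsymbol{\theta}^{g_N}_0)$ and applying the ordinary CLT to $\sqrt{N}(P_N - P_0)\phi^{g^*}_{\boldsymbol{\theta}^{g^*}_0}$ delivers the stated normal limit with sandwich covariance $\Sigma^{g^*}_0$. The data-adaptive centering at $\boldsymbol{\theta}^{g_N}_0$ emerges automatically because we expand around the root of $\Phi^{g_N}$ rather than of $\Phi^{g^*}$.

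For the bootstrap claim I would run the parallel expansion for $\Tilde{\boldsymbol{\theta}}^{g_N}_N$ using its defining equation in (v) and subtract the expansion for $\Hat{\boldsymbol{\theta}}^{g_N}_N$. The extra ingredient is the conditional multiplier/multinomial CLT: under $\mu/\tau^2 = 1$ and $\lVert W \rVert_{2,1} < \infty$, the bootstrap empirical process $\sqrt{N}(\Tilde{P}_N - P_N)$ converges conditionally, in the $\overset{P}{\underset{W}{\leadsto}}$ sense, to the same tight Gaussian limit as $\sqrt{N}(P_N - P_0)$ over the Donsker class of (iii). A bootstrap analogue of the equicontinuity step then lets me replace the random index $(g_N, \Tilde{\boldsymbol{\theta}}^{g_N}_N)$ by $(g^*, \boldsymbol{\theta}^{g^*}_0)$ in the leading term, and the same derivative-matrix argument produces the identical sandwich $\Sigma^{g^*}_0$, establishing the conditional weak convergence of $\sqrt{N}(\Tilde{\boldsymbol{\theta}}^{g_N}_N - \Hat{\boldsymbol{\theta}}^{g_N}_N)$.

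I expect the main obstacle to be the equicontinuity/negligibility argument for the empirical-process increment with a random index, namely justifying the replacement of $(g_N, \Hat{\boldsymbol{\theta}}^{g_N}_N)$ by $(g^*, \boldsymbol{\theta}^{g^*}_0)$ in the leading term. This is exactly what the uniform-in-$g$ Donsker and $L_2$-continuity hypotheses in (iii) are engineered to power, but it must be executed carefully, and its conditional-on-$W$ bootstrap counterpart is the most technical piece since it requires the increment to vanish jointly in the data and the multiplier weights. A secondary subtlety is ensuring the Taylor expansion is uniform enough to survive the data-dependent centering $\boldsymbol{\theta}^{g_N}_0$, which is precisely why I would first prove $\boldsymbol{\theta}^{g_N}_0 \to \boldsymbol{\theta}^{g^*}_0$ via (vi) and lean on the joint continuity of $\dot{\Phi}$ in (iv).
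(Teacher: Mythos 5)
Your proposal is correct and follows essentially the same route as the paper's proof: consistency of $\Hat{\boldsymbol{\theta}}^{g_N}_N$ and of the data-adaptive root $\boldsymbol{\theta}^{g_N}_0$ via (ii), (v), (vi) and (i), asymptotic equicontinuity over the Donsker class in (iii) to swap $(g_N, \Hat{\boldsymbol{\theta}}^{g_N}_N)$ for $(g^*, \boldsymbol{\theta}^{g^*}_0)$, a mean-value/Taylor expansion of $\Phi^{g_N}$ around $\boldsymbol{\theta}^{g_N}_0$ with the derivative replaced by $\dot{\Phi}^{g^*}(\boldsymbol{\theta}^{g^*}_0)$ via the joint continuity in (iv), and the conditional multiplier CLT plus the parallel bootstrap expansion for the second claim. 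The only point to make explicit when executing the plan is the consistency of the bootstrap estimator $\Tilde{\boldsymbol{\theta}}^{g_N}_N$ itself (the paper obtains the uniform law $\sup_{\boldsymbol{\theta}, g}\lVert (P_N - \Tilde{P}_N)\phi^g_{\boldsymbol{\theta}}\rVert_2 = o_P(1)$ from the Donsker property before repeating the consistency argument), which your ``parallel expansion'' implicitly requires.
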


	Theorem \ref{theorem:bootstrap-for-data-adaptive-estimands} is based on \textcite[Theorem 10.16]{kosorok2008introduction}, with an extension to data-adaptive target parameters. 
	Theorem \ref{theorem:bootstrap-for-data-adaptive-estimands} shows that one can essentially ignore the data-adaptiveness of $g_N$ for first-order inference on data-adaptive target parameters. Although consistency of the sandwich estimator for $\Omega^{g^*}_0$ is not guaranteed under only these assumptions, the bootstrap still yields valid large-sample inference: one may either use it to estimate $\Omega^{g^*}_0$ consistently or use the bootstrap distribution directly for confidence intervals (e.g., percentile intervals). Also note that $g_N$ is not re-estimated in bootstrap samples.

	Conditions (i) to (v) correspond to conditions (A) to (E) in \textcite[Theorem 10.16]{kosorok2008introduction} and are standard for consistency and asymptotic normality of Z-estimators; condition (vi) is specific to the data-adaptive setting. It requires that small changes in $g$ induce only small changes in the estimating map $\boldsymbol{\theta} \mapsto \Phi^g(\boldsymbol{\theta})$.
	These conditions are discussed in more detail below.

	Condition (i) is a standard identifiability condition for Z-estimators.

	Condition (ii) requires that the class of estimating functions $\{\phi^g_{\boldsymbol{\theta}}: g \in \mathcal{G}, \boldsymbol{\theta} \in \Theta\}$ is not too complex, which is satisfied if the class $\mathcal{G}$ is not too large. For example, if $\mathcal{G}$ is a class of real-valued functions, this condition may be satisfied if $\mathcal{G}$ itself is Glivenko--Cantelli and $\phi^g_{\boldsymbol{\theta}}$ is smooth in $g$ and $\boldsymbol{\theta}$.
	Condition (iii) requires that the class of estimating functions $\{\phi^g_{\boldsymbol{\theta}}: g \in \mathcal{G}, \lVert \boldsymbol{\theta} - \boldsymbol{\theta}^{g^*}_0\rVert_2 < \delta\}$ is not too complex---as for condition (ii), this is satisfied if the class $\mathcal{G}$ is not too large. The second part of Condition (iii) requires that $(\boldsymbol{\theta}, g) \mapsto \phi^g_{\boldsymbol{\theta}}$ is sufficiently smooth.  
	
	The first part of Condition (iv) is a standard regularity condition for Z-estimators. The second part again imposes a smoothness condition on the estimating function. 
	Condition (v) asks that both the empirical and bootstrap estimators solve their respective estimating equations up to $o_P(N^{-1/2})$. This allows for some numerical error in solving the estimating equations and is a standard condition for Z-estimators.

	Condition (vi) is a new condition that is specific to the data-adaptive setting. It requires that the expectation of the estimating function, $\Phi^g(\boldsymbol{\theta})$, is sufficiently smooth in $g$. Essentially, small changes in $g$ should not cause large changes in $\Phi^g(\boldsymbol{\theta})$, regardless of the value of $\boldsymbol{\theta}$. This condition is required for showing that (i) $\boldsymbol{\theta}^g_0$ is a continuous function of $g$ and that (ii) $\Hat{\boldsymbol{\theta}}^{g_N}_N$ is consistent for $\boldsymbol{\theta}^{g^*}_0$. These two intermediate results are used in the proof of Theorem \ref{theorem:bootstrap-for-data-adaptive-estimands}. Intuitively, if condition (vi) is not satisfied, a small change in $g$ could lead to a large change in the estimating function, which would make it difficult to establish the continuity of $g \mapsto \boldsymbol{\theta}^g_0$ and the consistency of $\Hat{\boldsymbol{\theta}}^{g_N}_N$.
		
	\subsection{Plug-in Estimator for Data-Adaptive Target Parameter in MA Framework}\label{sec:plug-in-estimator-statistical-estimand}

	We next propose a set of conditions specialized to the MA setting that are sufficient for the conditions in Theorem \ref{theorem:bootstrap-for-data-adaptive-estimands}. The relation between the notation in Theorem \ref{theorem:bootstrap-for-data-adaptive-estimands} and the notation in Section \ref{sec:ma-surrogate-index} is discussed in Remark \ref{remark:notation-relation}.

	\begin{remark}\label{remark:notation-relation}
		The observed data are the within-trial sample sizes and the corresponding empirical distributions $O := (n, \mathbb{P}^T_n)$, which are random elements in $\Omega := \mathbb{N} \times \mathcal{M}^*_{\text{NP}}$.
		The target parameter is indexed by the real-valued function $g \in \mathcal{G}$ on $\mathcal{X} \times \mathcal{S}$ where we assume that $\mathbb{D} = \ell^{\infty}(\mathcal{X} \times \mathcal{S})$, which is the class of bounded real-valued functions on $\mathcal{X} \times \mathcal{S}$. We use the metric induced by the supremum norm $d(g_1, g_2) = \lVert g_1 - g_2 \rVert_{\infty}$.
	\end{remark}

	The first condition requires that the estimated surrogate index $g_N$ converges to a fixed function in the supremum norm.
	\begin{enumerate}
		\item[(a)] There exists a $g^* \in \mathcal{G}$ such that $\lVert g_N - g^* \rVert_{\infty} \overset{P}{\to} 0$ and $g_N \in \mathcal{G}$ with probability tending to one. 
	\end{enumerate}
	While this is a weak condition that should hold for many estimators of the surrogate index, it has important implications related to overfitting and positivity violations for finite samples as discussed in Appendix \ref{appendix:overfitting-of-the-surrogate-index-estimator}. 
	Also note that the supremum norm cannot easily be replaced by other norms because some of the next conditions are not plausible for other norms like the $L^2(P)$ norm.

	The second condition is a regularity condition on the set of parameters:
	\begin{enumerate}
		\item[(b)] $\boldsymbol{\theta}$ can be restricted to a bounded open set $\Theta$.
	\end{enumerate}

	The third and fourth conditions restrict the complexity of the trial-specific treatment effect estimators and the estimator for $g_N$: $\Hat{\alpha}^g$ should be smooth in $g$ and $\mathcal{G}$ should not be too complex.
	\begin{enumerate}
		\item[(c)] The following classes of real- or matrix-valued functions on $\mathbb{N} \times \mathcal{M}^*_{\text{NP}}$ are bounded $P_0$-Donsker classes: $\left\{ \Hat{\alpha}^g: g \in \mathcal{G} \right\}$, $\left\{ \Hat{\beta}: g \in \mathcal{G} \right\}$, and $\left\{ \Hat{\Sigma}^g: g \in \mathcal{G} \right\}$. This is understood to hold element wise for the latter class.
		\item[(d)] Consider $g \mapsto \Hat{\alpha}^g(n, \mathbb{P}^T_n)$ and $g \mapsto \Hat{\Sigma}^g(n, \mathbb{P}^T_n)$ as random real- and matrix-valued functions on $\mathcal{G}$ indexed by $(n, \mathbb{P}^T_n)$. These functions are $P_0$-a.e.~continuous at $g^*$ with respect to the metric induced by the supremum norm.
	\end{enumerate}
	Conditions (c--d) can be shown using many strategies. In the following remark, we outline one such strategy that (practically) does not restrict the class of estimators for the trial-specific treatment effects. 
	\begin{remark}\label{remark:strong-lipschitz-condition}
		By \textcite[Theorem 19.5]{van2000asymptotic}, $\Hat{\alpha}^{\mathcal{G}} := \left\{ \Hat{\alpha}^g: g \in \mathcal{G} \right\}$ is a $P_0$-Donsker class if 
		\begin{equation*}
			\int_0^1 \sqrt{\log N_{[]}(\varepsilon, \Hat{\alpha}^{\mathcal{G}}, L^2(P_0))} \, d\varepsilon < \infty.
		\end{equation*}
		where $N_{[]}$ is the bracketing number.
		If we additionally assume that\footnote{This holds for many commonly used smooth estimators. Let $d := \lVert g_1 - g_2 \rVert_{\infty}$. Then the maximum difference between the transformed surrogate, using $g_1$ or $g_2$, for data points with non-zero probability under $\mathbb{P}^T_n$ is $d$. Many estimators satisfy the following: shifting the observed outcomes by at most $d$ will not change the estimated treatment effect by more than $d$. Note, however, that this may fail for estimators of ratios (e.g., estimator for the relative risk) if the denominator is close to zero.} for all $g_1, g_2 \in \mathcal{G}$,
		\begin{equation}\label{eq:lipschitz-alpha-hat-remark}
			|\Hat{\alpha}^{g_1}(n, \mathbb{P}^T_n) - \Hat{\alpha}^{g_2}(n, \mathbb{P}^T_n)| \le F(n, \mathbb{P}^T_n) \cdot \lVert g_1 - g_2 \rVert_{\infty}
		\end{equation}
		where $P_0 F^2 < \infty$, then \textcite[Theorem 9.23]{kosorok2008introduction} shows that
		\begin{equation*}
			N_{[]}(2 \varepsilon \lVert F \rVert_{2, P_0}, \Hat{\alpha}^{\mathcal{G}}, L^2(P_0)) \le N(\varepsilon, \mathcal{G}, \lVert \cdot \rVert_{\infty})
		\end{equation*}
		where  $N$ is the covering number and $\lVert F \rVert_{2, P_0} := (P_0 F^2)^{1/2}$.
		This implies that $\Hat{\alpha}^{\mathcal{G}}$ is a $P_0$-Donsker class if
		\begin{equation*}
			\int_0^1 \sqrt{\log N(\varepsilon, \mathcal{G}, \lVert \cdot \rVert_{\infty})} \, d\varepsilon < \infty,
		\end{equation*}
		which holds, for instance, for a parametric class of functions indexed by a bounded subset of $\mathbb{R}^d$ (by \textcite[Example 19.7]{van2000asymptotic} combined with \textcite[Lemma 9.18]{kosorok2008introduction}). Note that (\ref{eq:lipschitz-alpha-hat-remark}) also implies condition (d).
	\end{remark}
	Remark \ref{remark:strong-lipschitz-condition} shows that parametric estimators for $g_N$ may be used in combination with practically any within-trial estimator $\Hat{\alpha}^g$. 
	However, if we impose mild smoothness conditions on the estimator $(n, \mathbb{P}^T_n) \mapsto \Hat{\alpha}^g(n, \mathbb{P}^T_n)$, we have that $\left\{ \Hat{\alpha}^g: g \in \mathcal{G} \right\}$ is a $P_0$-Donsker class whenever $\mathcal{G}$ is a Donsker class for a certain set of measures, which allows one to use many machine learning methods to estimate $g_N$. This is made precise in the following lemma.

	\begin{restatable}{lemma}{lemmadonskerconditions}\label{lemma:donsker-conditions}
		Let $(n, \mathbb{P}^T_n) \mapsto a_i(n, \mathbb{P}^T_n) \in \mathcal{X} \times \mathcal{S}$ be a mapping that extracts the $(X, S)$-values of the $i$'th observation from the within-trial sample underlying $\mathbb{P}^T_n$, where we set $a_i(n, \mathbb{P}^T_n) = a_{\mathrm{pad}}$ for $i > n$ for some fixed $a_{\mathrm{pad}} \in \mathcal{X} \times \mathcal{S}$. Define $z_i(n, \mathbb{P}^T_n)$ and $x_i(n, \mathbb{P}^T_n)$ as mappings that extract the assigned treatment and the observed covariate, respectively, from the $i$'th observation, and set $z_i(n, \mathbb{P}^T_n) = 0$ and $x_i(n, \mathbb{P}^T_n) = x_{\mathrm{pad}}$ for $i > n$ for some fixed $x_{\mathrm{pad}} \in \mathcal{X}$.
		Suppose there exists a measurable function $\Hat{\alpha}$ of $3 n^* + 1$ arguments, where $n^*$ is as in condition (1) below, such that, for every $g \in \mathcal{G}$,
		\begin{equation*}
			\Hat{\alpha}^g(n, \mathbb{P}^T_n) = \Hat{\alpha}\left((g \circ a_1)(n, \mathbb{P}^T_n), \dots, (g \circ a_{n^*})(n, \mathbb{P}^T_n), z_1(n, \mathbb{P}^T_n), \dots, z_{n^*}(n, \mathbb{P}^T_n), x_1(n, \mathbb{P}^T_n), \dots, x_{n^*}(n, \mathbb{P}^T_n), n\right).
		\end{equation*}
		Assume the following conditions hold:
		\begin{enumerate}
			\item The within-trial sample sizes are bounded above by $n^* < \infty$.
			\item $\mathcal{X}$ is a bounded subset of $\mathbb{R}^d$ for some $d < \infty$.
			\item $P_0 |\Hat{\alpha}^g|^2 < \infty$ for some $g \in \mathcal{G}$.
			\item $\Hat{\alpha}$ is Lipschitz in its first $n^*$ arguments: there exists a constant $c < \infty$ such that, for any $u, v \in \mathbb{R}^{n^*}$ and any fixed $(z_1, \ldots, z_{n^*}, x_1, \ldots, x_{n^*}, n)$,
			\begin{equation*}
			\begin{aligned}
			& \left| \Hat{\alpha}\left(u_1, \ldots, u_{n^*}, z_1, \ldots, z_{n^*}, x_1, \ldots, x_{n^*}, n \right) - \Hat{\alpha}\left(v_1, \ldots, v_{n^*}, z_1, \ldots, z_{n^*}, x_1, \ldots, x_{n^*}, n \right) \right|^2 \\
			& \qquad\le c^2 \sum_{i = 1}^{n^*} (u_i - v_i)^2.
			\end{aligned}
			\end{equation*}
			\item $\mathcal{G}$, as a uniformly bounded set of real-valued functions on $\mathcal{X} \times \mathcal{S}$, is $P_{a_i, 0}$-Donsker for all $i = 1, \ldots, n^*$ for the pushforward measure $P_{a_i, 0} := P_0 \circ a_i^{-1}$.
		\end{enumerate} 
		We then have that $\left\{ \Hat{\alpha}^g: g \in \mathcal{G} \right\}$ is a $P_0$-Donsker class.
	\end{restatable}
	
	Conditions (e) and (f) are again smoothness conditions on $g \mapsto \Hat{\alpha}^g(n, \mathbb{P}^T_n)$ and $g \mapsto \Hat{\Sigma}^g(n, \mathbb{P}^T_n)$ as functions of $g$ indexed by $(n, \mathbb{P}^T_n)$.
	\begin{enumerate}
		\item[(e)] Consider $\Hat{\alpha}^g$ as a random variable indexed by $g$, we require that the functions $P_0 \Hat{\alpha}^g, P_0 (\Hat{\alpha}^g)^2, P_0 \Hat{\alpha}^g \Hat{\beta}: \mathcal{G} \to \mathbb{R}$ are Lipschitz continuous: $\left| P_0 \Hat{\alpha}^{g} - P_0 \Hat{\alpha}^{g'} \right| \le M \cdot \lVert g - g' \rVert_{\infty}$ for some $M < \infty$ and for all $g, g' \in \mathcal{G}$ (and similarly for $P_0 (\Hat{\alpha}^g)^2$ and $P_0 \Hat{\alpha}^g \Hat{\beta}$).
		\item[(f)] Consider $\Hat{\Sigma}^g$ as a random matrix indexed by $g$, we require that the function $P_0\left( n^{-1} \Hat{\Sigma}^g \right): \mathcal{G} \to \mathbb{R}^{2 \times 2}$ is element-wise Lipschitz continuous in the same sense as in condition (e).
	\end{enumerate}
	These are minor regularity conditions that follow from the same reasoning as in the footnote of Remark \ref{remark:strong-lipschitz-condition}.

	The following proposition summarizes the above discussion. 
	\begin{restatable}{proposition}{propositionsufficiencyconditionstheorem}\label{proposition:sufficiency-conditions-theorem}
		For the setup described in Sections \ref{sec:ma-framework} and \ref{sec:ma-surrogate-index}, Conditions (a--f) listed in Appendix \ref{sec:plug-in-estimator-statistical-estimand} are sufficient for the conditions of Theorem \ref{theorem:bootstrap-for-data-adaptive-estimands}.
	\end{restatable}

	\subsection{Overfitting of the Surrogate Index Estimator}\label{appendix:overfitting-of-the-surrogate-index-estimator}

	Condition (a) requires that $\lVert g_N - g^* \rVert_{\infty} \overset{P}{\to} 0$ as $N \to \infty$. 
	Although this is a weak requirement asymptotically, it has important finite-sample implications. 
	The next paragraphs illustrate how this condition can fail, how this is relevant in finite samples, and how this can be resolved by using SuperLearner with a modified cross-validation procedure \parencite{van2007super}.

	If a surrogate index estimator includes a trial indicator in $X$, the estimated surrogate index, evaluated in $(x, s)$-values occurring in any observed trial $t$, will be close to $E(Y \mid X = x, S = s, T = t)$ if the estimator is sufficiently flexible and the within-trial sample sizes are sufficiently large. In this case, $g_N$ will not converge to some fixed $g^*$ if the comparability assumption is violated. 
	If we now assume surrogacy without comparability, the mean of $g_N(X, S)$ in trial $t$ under treatment $z$ is approximately equal to the mean of $Y$ in trial $t$ under treatment $z$:
	\begin{align*}
		E(g_N(X, S) \mid Z = z, T = t) & \approx E \left\{\mathrm{E}(Y \mid X, S, T = t)\,\bigm|\, Z = z, T = t \right\} \\
		& = E \left\{\mathrm{E}(Y \mid X, Z = z, S, T = t)\,\bigm|\, Z = z, T = t \right\} &&~\mbox{(surrogacy)}\\
		& = E(Y \mid Z = z, T = t) &&~\mbox{(iterated expectations)}\\	
	\end{align*}
	Hence, the treatment effects on $g_N(X, S)$ and $Y$ will be close in all observed trials, even if the comparability assumption is severely violated.
	The surrogate index will then appear to be a perfect trial-level surrogate, yet it will fail for unobserved trials. 
	This shows that adjusting for trial indicators is problematic. This seems like a trivial point; however, the same type of overfitting can occur when there is limited overlap of $(X, S)$ across trials. (This corresponds to a violation of the so-called positivity or overlap assumption in the transportability and data-fusion literature \parencite{li2023efficient, degtiar2023review}.)
	Especially worrisome is the fact that, in such cases, even parametric models can overfit, especially if the number of trials is small. For instance, if $S$ is a categorical variable where one category is almost exclusively observed in one trial, a parametric model may be overfitting to that trial. 

	The trial-level type of overfitting can be avoided by using SuperLearner in combination with a modified cross-validation scheme \parencite{van2007super}. 
	Specifically, the SuperLearner should use a leave-one-trial-out cross-validation scheme and should include at least one very simple learner that cannot overfit at a trial level. 
	If certain $(X, S)$ regions occur only in one trial, flexible learners may extrapolate poorly, while simpler learners typically extrapolate more robustly. 
	Thus, the trained SuperLearner will place more weight on simple learners if there is little overlap in $(X, S)$ across trials.
	If there is sufficient overlap, the SuperLearner will place more weight on flexible learners if the underlying regression function is more complex.

	\appendixclearpage
	\section{Alternative Asymptotic Framework}\label{appendix:alternative-asymptotic-framework}
 
	\subsection{Setting and Notation}

	In the data-generating mechanism introduced in Section \ref{sec:ma-framework}, we assumed that the within-trial sample size $n$ is a random variable whose distribution does not depend on the number of independent trials $N$. Consequently, the full- and observed-data distributions, $P_0^F$ and $P_0$, were unknown but fixed distributions---fixed in the sense that they do not change with $N$.
	We now deviate from this setting and let $P_{0, n} \in \mathcal{M}$ be the observed-data distribution where all within-trial sample sizes are set to $n$. For a given trial, the empirical within-trial distribution satisfies $\mathbb{P}^T_n \in \mathcal{M}^*_{\text{NP}}$, and $\mathcal{M}$ is a model for the observed-data distribution.
	As in Section \ref{sec:ma-framework}, the observed-data distribution is determined by the full-data distribution as represented by the mapping $h_n: \mathcal{M}^F \to \mathcal{M}$ where this mapping is now indexed by $n$.
	Note that $P^F_0$ and $\mathcal{M}^F$ remain unchanged as the distribution of trials does not depend on $n$, but the way the observed data for a given trial are generated does. This is visualized in Figure \ref{fig:sampling-mechanism-diagram-new}, which replaces Figure \ref{fig:sampling-mechanism-diagram} from the main text.

	For a fixed $n$, the new data-generating mechanism is a special case of the previous setting where the distribution of $n$ is degenerate.
	Hence, under identifying Assumptions \ref{assumption:unbiased-trt-effect} and \ref{assumption:unbiased-sampling-variance}, all methods presented in the main text are valid for this new setting (i.e., with $P_0$ replaced with $P_{0, n}$ and $n$ fixed). 
	However, in reality, the identifying assumptions \ref{assumption:unbiased-trt-effect} and \ref{assumption:unbiased-sampling-variance} will not be satisfied exactly for a fixed $n$. Nonetheless, we expect these assumptions to be satisfied approximately, and more specifically, we expect this ``approximation error'' to decrease with increasing $n$. We, therefore, now consider the setting where the within-trial sample size $n$ is a function of the number of independent trials $N$ (i.e., $n = n(N)$) such that $n(N) \to \infty$ as $N \to \infty$, as also visualized in Figure \ref{fig:sampling-mechanism-diagram-new}.
	In most of the notations, we leave the dependence of the within-trial sample size $n$ on the number of independent trials $N$ implicit. So, when we're using $n$, we mean $n(N)$.

	In the following arguments, we first consider a fixed $g$ and denote the trial-level treatment-effect estimates by a tilde to emphasize that they do not satisfy the identifying assumptions: $(\Tilde{\alpha}^g_n, \Tilde{\beta}_n)'$ and $\Tilde{\Sigma}^g_n$. We then consider finite-sample unbiased counterparts, denoted by $(\Hat{\alpha}^g_n, \Hat{\beta}_n)'$ and $\Hat{\Sigma}^g_n$, which satisfy the identifying assumptions; the latter will be obtained by debiasing the former. We will consider the ``augmented'' data as $(\Tilde{\alpha}^g_n, \Tilde{\beta}_n, \Tilde{\Sigma}^g_n, \Hat{\alpha}^g_n, \Hat{\beta}_n, \Hat{\Sigma}^g_n)$ whose distribution is implied by $P^F_{0}$ and $n$, but only the first part is observable.
	We further let the empirical distribution of $(\Tilde{\alpha}^g_n, \Tilde{\beta}_n, \Tilde{\Sigma}^g_n)$ be denoted by $\Tilde{\mathbb{P}}_{N, n}$ and the empirical distribution of $(\Hat{\alpha}^g_n, \Hat{\beta}_n, \Hat{\Sigma}^g_n)$ be denoted by $\Hat{\mathbb{P}}_{N, n}$. Note that the latter distribution is actually not observed.
	Our goal in this section is to study when $\Hat{\Psi}^g(\Tilde{\mathbb{P}}_{N, n})$ is asymptotically equivalent to $\Hat{\Psi}^g(\Hat{\mathbb{P}}_{N, n})$, where $\Hat{\Psi}^g$ is the plug-in estimator \eqref{eq:plug-in-estimator}. If these two estimators are asymptotically equivalent, then we can ignore the fact that the trial-specific estimates only satisfy the identifying assumptions approximately. In the remainder of this section, we formalize this goal further and derive sufficient conditions for this asymptotic equivalence. We also discuss when these conditions are theoretically non-trivial and how they may be practically relevant.
	
	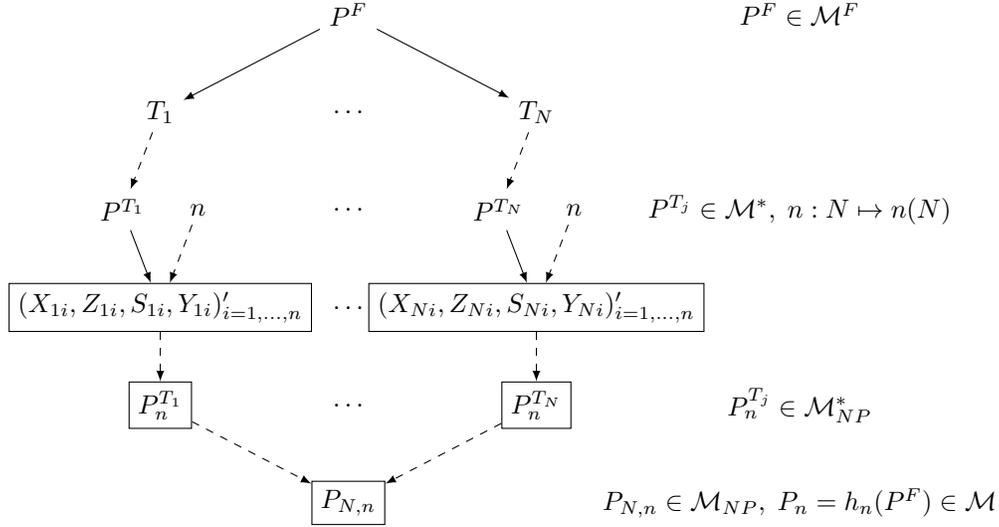
\begin{figure}
		\centering
		\begin{tikzpicture}[edge from parent/.style={draw,-latex}]
			\tikzstyle{level 1}=[sibling distance=25mm, level distance = 13mm] 
			\tikzstyle{level 2}=[sibling distance=10mm, level distance = 13mm] 
			\node (full) {$P^F$}
			child {node {$T_1$}
				child {node (a1) {$P^{T_1}$} edge from parent[dashed]}
				child {node (a2) {$n$} edge from parent[draw=none]}
			}
			child {node {$\ldots$} edge from parent[draw=none]
				child {node (b1) {$\ldots$} edge from parent[draw=none]}
			}
			child {node {$T_N$}
				child {node (c1) {$P^{T_N}$} edge from parent[dashed]}
				child {node (c2) {$n$} edge from parent[draw=none]}
			};
			
			\coordinate (CENTERa) at ($(a1)!0.5!(a2)$);
			\coordinate (CENTERc) at ($(c1)!0.5!(c2)$);
			
			\node (a3) at (CENTERa) [yshift=-13mm, draw] {$(X_{1i}, Z_{1i}, S_{1i}, Y_{1i})_{i = 1}^n$};
			\node (b3) at (b1) [yshift=-13mm] {$\ldots$};
			\node (c3) at (CENTERc) [yshift=-13mm, draw] {$(X_{Ni}, Z_{Ni}, S_{Ni}, Y_{Ni})_{i = 1}^n$};
			
			\draw [->, -latex] (a1) -- (a3);
			\draw [->, -latex, dashed] (a2) -- (a3);
			\draw [->, -latex] (c1) -- (c3);
			\draw [->, -latex, dashed] (c2) -- (c3);
			
			\node (a4) at (a3) [yshift=-13mm, draw] {$\mathbb{P}^{T_1}_{n}$};
			\node (b4) at (b3) [yshift=-13mm] {$\ldots$};
			\node (c4) at (c3) [yshift=-13mm, draw] {$\mathbb{P}^{T_N}_{n}$};
			
			\node (empirical) at (b4) [yshift=-13mm, draw] {$\mathbb{P}_{N, n}$};
			
			\node [right of =full, xshift=50mm] (fullmodel) {$P^F \in \mathcal{M}^F$};
			\node (distribution) at (fullmodel|-c2) {$P^{T_j} \in \mathcal{M}^*, \; n: N \mapsto n(N)$};
			\node (trialempdistribution) at (fullmodel|-c4) {$\mathbb{P}^{T_j}_{n} \in \mathcal{M}^*_{\text{NP}}$};
			\node (observedempdistribution) at (fullmodel|-empirical) {$\mathbb{P}_{N, n} \in \mathcal{M}_{\text{NP}}, \; \mathbb{P}_n = h_n(P^F) \in \mathcal{M}$};
			
			\draw [->, -latex, dashed] (a3) -- (a4);
			\draw [->, -latex, dashed] (c3) -- (c4);
			
			\draw [->, -latex, dashed] (a4) -- (empirical);
			\draw [->, -latex, dashed] (c4) -- (empirical);
		\end{tikzpicture}
		\caption{Data-generating mechanism underlying the meta-analytic framework where the within-trial sample size, $n$, increases with the total number of independent trials, $N$. 
			Full arrows represent sampling. Dashed arrows represent mappings.  
			Objects surrounded by a rectangle represent the observed data. $\mathcal{M}^F$: full-data model; $\mathcal{M}^*$: within-trial model; $\mathcal{M}^*_{\text{NP}}$: non-parametric within-trial model; $\mathcal{M}_{\text{NP}}$: non-parametric observed-data model; $\mathcal{M}$: observed-data model.}\label{fig:sampling-mechanism-diagram-new}
	\end{figure}

	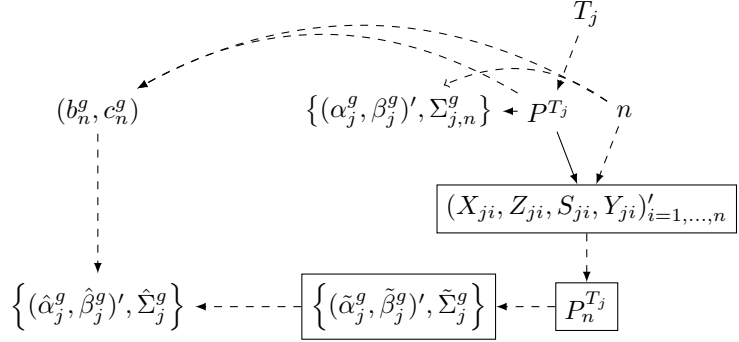
\begin{figure}
		\centering
		\begin{tikzpicture}[edge from parent/.style={draw,-latex}]
			\tikzstyle{level 1}=[sibling distance=10mm, level distance = 13mm] 
			\node (Tj) {$T_j$}
			child {
				node (PTj) {$P^{T_j}$} edge from parent[dashed]
			}
			child{
				node (nj) {$n$} edge from parent[draw=none]
			};
			
			\coordinate (center) at ($(PTj)!0.5!(nj)$);
			
			\node (data) at (center) [yshift=-13mm, draw] {$(X_{ji}, Z_{ji}, S_{ji}, Y_{ji})'_{i = 1, \dots, n}$};
			\node (empirical) at (data) [yshift=-13mm, draw] {$\mathbb{P}^{T_j}_{n}$};
			
			\draw [->, -latex] (PTj) -- (data);
			\draw [->, -latex, dashed] (nj) -- (data);

			\node (alphabeta) [left of =PTj, xshift=-10mm] {$\left\{ (\alpha_{j}^g, \beta_{j}^g)', \Sigma^g_{j, n} \right\}$};
			\node (alphabetatilde) [left of =empirical, xshift=-15mm, draw] {$\left\{ (\Tilde{\alpha}^g_{jn}, \Tilde{\beta}^g_{jn})', \Tilde{\Sigma}^g_{jn} \right\}$};
			\node (alphabetahat) [left of =alphabetatilde, xshift=-30mm] {$\left\{ (\Hat{\alpha}^g_{jn}, \Hat{\beta}^g_{jn})', \Hat{\Sigma}^g_{jn} \right\}$};
			\node (biases) [left of =alphabeta, xshift=-30mm] {$\left( b^g_n, c^g_n \right)$};
			\draw [->, dashed] (nj) to [bend right=30] (alphabeta);

			\draw [->, -latex, dashed] (data) -- (empirical);
			
			\draw [->, -latex, dashed] (PTj) -- (alphabeta);
			\draw [->, -latex, dashed] (empirical) -- (alphabetatilde);
			\draw [->, -latex, dashed] (alphabetatilde) -- (alphabetahat);
			\draw [->, -latex, dashed] (PTj) to [bend right=30] (biases);
			\draw [->, -latex, dashed] (nj) to [bend right=30] (biases);
			\draw [->, -latex, dashed] (biases) -- (alphabetahat);

		\end{tikzpicture}
		\caption{True, estimated, and debiased treatment effects as functionals of the true and empirical within-trial distributions. Some objects are indexed by $n$ while others are not, this is on purpose as further explained in Section \ref{sec:alternative-assumptions}: The true treatment effects are not indexed by $n$ because they are fixed functionals of the true within-trial distribution, the true within-trial covariance matrix $\Sigma^g_{j, n}$ is indexed by $n$ because the root-$n$ standardized variance of the estimators may change with $n$. 
		All other derived quantities are indexed by $n$ to emphasize that the corresponding functionals may depend on $n$.
		Dashed arrows represent deterministic mappings. Full arrows represent sampling. 
		Objects surrounded by a rectangle represent the observed data.}\label{fig:trial-level-sampling-mechanism-diagram-new}
	\end{figure}
	
	\subsection{Alternative Assumptions}\label{sec:alternative-assumptions}
	
	We relax Assumptions \ref{assumption:unbiased-trt-effect} and \ref{assumption:unbiased-sampling-variance} in Assumptions \ref{assumption:biased-trt-effect} and \ref{assumption:biased-sampling-variance}, respectively. The latter assumptions allow for a finite-sample bias in the trial-level treatment effect estimators that converges to zero as $n \to \infty$. 
	Throughout this subsection, $E_{P^T, n}$ denotes expectation over empirical distributions $\mathbb{P}^T_n$ generated by sampling $n$ i.i.d.~observations from $P^T$, and $\operatorname{Var}_{P^T, n}$ denotes the corresponding variance.
	We further let $\mathcal{G}$ be a set of real-valued functions on $\mathcal{S} \times \mathcal{X}$ and define the trial-level bias $b_n^g$ as follows for $P^T \in \mathcal{M}^*$:
	\begin{align*}
		b^g_n & := n^{1/2} \left[ E_{P^T, n} \! \left\{ (\Tilde{\alpha}^g_n, \Tilde{\beta}_n)' \right\} - \left\{\alpha^g(P^T), \beta(P^T) \right\}' \right]
	\end{align*}
	\begin{assumption}[vanishing bias for trial-level treatment effect estimators]\label{assumption:biased-trt-effect}
		Assume that there exist constants $\delta_1 > 0$, $n_0 < \infty$, and $K < \infty$ such that,
		\[
			P^F_{0} \lVert b^g_n \rVert_2^2 = O(n^{-\delta_1}) \quad \text{and} \quad P^F_{0} \lVert b^g_n \rVert_2^4 \le K
		\]
		for all $n \ge n_0$ and $g \in \mathcal{G}$.
	\end{assumption}
	If we could observe $b^g_n$, we could simply subtract it from the finite-sample biased estimators $(\Tilde{\alpha}^g_n, \Tilde{\beta}_n)'$ to obtain finite-sample unbiased estimators $(\Hat{\alpha}^g_n, \Hat{\beta}_n)'$:
	\begin{equation}\label{eq:debiased-trt-estimators}
		(\Hat{\alpha}^g_n, \Hat{\beta}_n)' := (\Tilde{\alpha}^g_n, \Tilde{\beta}_n)' - n^{-1/2} b^g_n \quad \text{where} \quad \Sigma^g_n(P^T) := n \cdot \operatorname{Var}_{P^T, n} \left\{ (\Hat{\alpha}^g_n, \Hat{\beta}_n)' \right\}.
	\end{equation}
	These debiased estimators are asymptotically equivalent to the original estimators because
	\begin{equation*}
		n^{1/2} \left\{ (\Tilde{\alpha}^g_n, \Tilde{\beta}_n)' - (\Hat{\alpha}^g_n, \Hat{\beta}_n)' \right\} = b^g_n = o_{P^F_{0}}(1)
	\end{equation*}
	where the second equality follows from $P^F_0 \lVert b_{n}^g \rVert_2^2 = O(n^{-\delta_1})$. 
	Further define the trial-level bias matrix $c_n^g$ as follows for $P^T \in \mathcal{M}^*$:
	\begin{equation*}
		c^g_n := E_{P^T, n} \! \left\{ \Tilde{\Sigma}^g_n \right\} - \Sigma_n^g(P^T).
	\end{equation*}
	\begin{assumption}[vanishing bias for within-trial variance estimators]\label{assumption:biased-sampling-variance}
		Let $\lVert \cdot \rVert_F$ denote the Frobenius norm. Assume that there exist constants $\delta_2 > 0$, $n_0 < \infty$, and $K < \infty$ such that,
		\[
		P_0^F \lVert c_n^g \rVert_F = O(n^{-\delta_2}) 
		\qquad \text{and} \qquad
		P_0^F \lVert c_n^g \rVert_F^2 \le K.
		\]
		for all $n \ge n_0$ and $g \in \mathcal{G}$.
	\end{assumption}
	Like we debiased the treatment effect estimators in (\ref{eq:debiased-trt-estimators}), we can also debias the within-trial variance estimators by subtracting the bias $c_{n}^g$ from the finite-sample biased $\Tilde{\Sigma}^g_n$:
	\begin{equation}\label{eq:debiased-variance-estimator}
		\Hat{\Sigma}^g_{n} := \Tilde{\Sigma}^g_n - c_{n}^g.
	\end{equation}
	By construction, $\Hat{\Sigma}^g_{n}$ is an unbiased estimator for the root-$n$ standardized variance of $(\Hat{\alpha}^g_{n}, \Hat{\beta}_{n})'$ in a given trial. Hence, Assumptions \ref{assumption:unbiased-trt-effect} and \ref{assumption:unbiased-sampling-variance} are satisfied for the debiased estimators $(\Hat{\alpha}^g_n, \Hat{\beta}_n, \Hat{\Sigma}_{n}^g)$. 
	Consequently, for fixed $n$, Sections \ref{sec:ma-surrogate-index} and \ref{sec:inference-data-adaptive-estimand} apply to the estimator $\Hat{\Psi}^g(\Hat{\mathbb{P}}_{N, n})$, defined in (\ref{eq:plug-in-estimator}), that uses the debiased estimators. Those sections derive the asymptotic distribution of
	\begin{equation*}
		N^{1/2} \left( \Hat{\Psi}^g(\Hat{\mathbb{P}}_{N, n}) - \Psi^{F, g}(P^F_0)\right) \text{ and } N^{1/2} \left( \Hat{\Psi}^{g_{N}}(\Hat{\mathbb{P}}_{N, n}) - \Psi^{F, g_{N}}(P^F_0)\right)
	\end{equation*} 
	for a fixed and a data-adaptive parameter, respectively, where $n$ is fixed and $N \to \infty$.
	
	\begin{remark}[within-trial covariance]
		The bias terms $b_n^g$ and $c_n^g$ have different scalings by design. The term $b_n^g$ is defined as $n^{1/2}[\cdots]$ because treatment effect estimators $(\Hat{\alpha}^g_n, \Hat{\beta}_n)'$ typically converge at the root-$n$ rate. By contrast, $c_n^g$ is unscaled because variance estimators are typically only required to be consistent, not to satisfy a specific convergence rate. 
	\end{remark}

	\subsection{Rate Conditions}

	In this subsection, we derive conditions under which the following holds for $N \to \infty$ and $n = n(N)$:
	\begin{equation}\label{eq:convergence-biased-estimator}
		\lVert \Hat{\Psi}^{g}(\Tilde{\mathbb{P}}_{N, n}) - \Hat{\Psi}^{g}(\Hat{\mathbb{P}}_{N, n}) \rVert_2 = o_{P^F_0}(N^{-1/2}) \text{ and } \lVert \Hat{\Psi}^{g_{N}}(\Tilde{\mathbb{P}}_{N, n}) - \Hat{\Psi}^{g_{N}}(\Hat{\mathbb{P}}_{N, n}) \rVert_2 = o_{P^F_0}(N^{-1/2}),
	\end{equation}
	where $\Hat \Psi^g$ is the plug-in estimator defined in (\ref{eq:plug-in-estimator}).
	Under these conditions, the limiting distributions based on $\Tilde{\mathbb{P}}_{N, n}$ and $\Hat{\mathbb{P}}_{N, n}$ coincide if $\Hat \Psi^g(\Hat{\mathbb{P}}_{N, n})$ and $\Tilde{\Psi}^{g_N}(\Hat{\mathbb{P}}_{N, n})$ are root-$N$ consistent, because (\ref{eq:convergence-biased-estimator}) implies that
	\begin{equation*}
		N^{1/2} \left( \Hat{\Psi}^{g}(\Tilde{\mathbb{P}}_{N, n}) - \Psi^{F, g}(P^F_0) \right) = N^{1/2} \left( \Hat{\Psi}^{g}(\Hat{\mathbb{P}}_{N, n}) - \Psi^{F, g}(P^F_0) \right) + o_{P^F_0}(1)
	\end{equation*}
	and similarly for $\Hat{\Psi}^{g_{N}}$. Hence, if $N^{1/2} \left( \Hat{\Psi}^{g}(\Hat{\mathbb{P}}_{N, n}) - \Psi^{F, g}(P^F_0) \right)$ converges in distribution, then $N^{1/2} \left( \Hat{\Psi}^{g}(\Tilde{\mathbb{P}}_{N, n}) - \Psi^{F, g}(P^F_0) \right)$ converges to the same limiting distribution, and similarly for $\Hat{\Psi}^{g_{N}}$.

	\begin{remark}\label{remark:limiting-distribution}
		Sections \ref{sec:ma-surrogate-index} and \ref{sec:inference-data-adaptive-estimand} of the main text only apply to the asymptotic regime where $n$ is fixed and $N \to \infty$. In the present setting, $n = n(N)$ with $n(N) \to \infty$ as $N \to \infty$; consequently, the limiting distributions of $\Hat{\Psi}^{g}(\Hat{\mathbb{P}}_{N, n})$ and $\Hat{\Psi}^{g_N}(\Hat{\mathbb{P}}_{N, n})$ may differ from the fixed-$n$ limits derived in these sections, even though Assumptions \ref{assumption:unbiased-trt-effect} and \ref{assumption:unbiased-sampling-variance} hold.
		This point is discussed further in Appendix \ref{appendix:practical-relevance}.
	\end{remark}

	In the following lemma, we present restrictions on $\delta_1$ and $\delta_2$ (appearing in Assumptions \ref{assumption:biased-trt-effect} and \ref{assumption:biased-sampling-variance}) such that (\ref{eq:convergence-biased-estimator}) holds.
	
	\begin{restatable}{lemma}{lemmanewasymptoticsmomentbased}
		\label{lemma:new-asymptotics-plug-in}
		Let $n(N) = N^{\gamma}$ with $\gamma > 0$ and assume that Assumptions \ref{assumption:biased-trt-effect} and \ref{assumption:biased-sampling-variance} hold, and that $P_0^F \left(\tilde{\alpha}^g_n\right)^2 < \infty$ and $P_0^F \left(\tilde{\beta}_n\right)^2 < \infty$. If $\gamma > \max\left\{ \frac{1}{1 + \delta_1}, \frac{1}{2(1 + \delta_2)} \right\}$, then 
		\begin{align}
			\lVert \Hat{\Psi}^{g}(\Tilde{\mathbb{P}}_{N, n}) - \Hat{\Psi}^{g}(\Hat{\mathbb{P}}_{N, n}) \rVert_2 & = o_{P^F_0}(N^{-1/2}),
		\end{align}
		for $\Hat{\Psi}^{g}$ as defined in (\ref{eq:plug-in-estimator}).
	\end{restatable}

	\begin{restatable}{corollary}{corollarynewasymptoticsmomentbased}
		\label{corollary:new-asymptotics-plug-in}
		Assume that the conditions of Lemma \ref{lemma:new-asymptotics-plug-in} hold for $g^*$, that $g_N \overset{P}{\to} g^*$, and that $P^F_0 \left( \tilde{\alpha}_n^{g_N}\right)^2 = O(1)$. If the following conditions hold:
		\begin{enumerate}
			\item[(i)] $P_0^F \lVert b_n^{g_N} - b_n^{g^*} \rVert_2^2 = O_{P^F_0}(N^{-\gamma \delta_1})$ and $P_0^F \lVert c_n^{g_N} - c_n^{g^*} \rVert_F = O_{P^F_0}(N^{-\gamma \delta_2})$
			\item[(ii)] $P_0^F \lVert b_n^{g_N} - b_n^{g^*} \rVert_2^4 = O_{P^F_0}(1)$ and $P_0^F \lVert c_n^{g_N} - c_n^{g^*} \rVert_F^2 = O_{P^F_0}(1)$,
		\end{enumerate}
		then
		\begin{align}
			\lVert \Hat{\Psi}^{g_{N}}(\Tilde{\mathbb{P}}_{N, n}) - \Hat{\Psi}^{g_{N}}(\Hat{\mathbb{P}}_{N, n}) \rVert_2 & = o_{P^F_0}(N^{-1/2}).
		\end{align}
	\end{restatable}
	
	\begin{remark}[rates of convergence]
		We discuss the consequences of Lemma \ref{lemma:new-asymptotics-plug-in} for special values of $\delta_1$ and $\delta_2$:
		\begin{itemize}
			\item $\delta_1 = 1$ and $\delta_2 = 0$. Then $P_0^F\lVert b_n^g \rVert_2^2 = O(n^{-1})$, so the root-$n$ standardized treatment-effect bias decays at rate\footnote{$P^F_0 \lVert b^g_n \rVert_2 \le (P^F_0 \lVert b^g_n \rVert_2^2)^{1/2} = O(n^{-\delta_1 / 2})$.} $n^{-1/2}$, while the bias in the within-trial covariance estimator is only required to remain bounded. In this case, Lemma \ref{lemma:new-asymptotics-plug-in} requires $\gamma > 1/2$. As discussed in Remark \ref{remark:naive-estimator-rate}, this regime is not desirable because it makes the within-trial estimation error asymptotically negligible.
			\item $\delta_1 > 1$ and $\delta_2 > 0$. Then the lemma allows $\gamma < 1/2$; within-trial sample sizes may grow slower than $N^{1/2}$. In this regime, the within-trial estimation error is not asymptotically negligible, as discussed in Remark \ref{remark:naive-estimator-rate}.
		\end{itemize}
	\end{remark}

	\begin{remark}[non-trivial rates of convergence]\label{remark:naive-estimator-rate}
			To determine which values for $\gamma$ are reasonable, we can impose the condition on $\gamma$ that the corresponding asymptotic regime does not lead to valid ``unadjusted'' estimators that ignore sampling variability in the trial-level treatment effect estimators. Specifically, the estimators of the covariance between $\alpha^g$ and $\beta$, based on (i) the unobservable $(\alpha^g, \beta)'$ and (ii) the unbiased within-trial estimators $(\Hat{\alpha}^g_n, \Hat{\beta}_n)'$ should not be asymptotically equivalent in the sense that 
			\begin{equation*}
				N^{1/2} \left\{ \widehat{\text{Cov}}(\Hat{\alpha}^g_n, \Hat{\beta}_n)_N - \widehat{\text{Cov}}(\alpha^g, \beta)_N \right\}
			\end{equation*}
			may not be $o_{P^F_0}(1)$.
			
			We now show that $\widehat{\text{Cov}}(\Hat{\alpha}^g_n, \Hat{\beta}_n)_N$ and $\widehat{\text{Cov}}(\alpha^g, \beta)_N$ are asymptotically equivalent in the above sense if $\gamma > \frac{1}{2}$.
			Consider the setting where $\mu_{\alpha, 0}^g := E_{P^F_0}(\alpha^g)$ and $\mu_{\beta, 0} := E_{P^F_0}(\beta)$ are known, $(\Hat{\alpha}^g_n, \Hat{\beta}_n)'$ are finite-sample unbiased, and $P^F_0 \lVert \Sigma^g_n \rVert_F = O(1)$. For simplicity, we use the known $\mu_{\alpha, 0}^g$ and $\mu_{\beta, 0}$ instead of the corresponding sample means in $\widehat{\text{Cov}}(\Hat{\alpha}^g_n, \Hat{\beta}_n)_N$ and $\widehat{\text{Cov}}(\alpha^g, \beta)_N$. The root-$N$ standardized difference is then as follows\footnote{We abuse notation here and let $\mathbb{P}_N$ be the empirical distribution of the (unobserved) data $(\Hat{\alpha}^g_n, \Hat{\beta}_n, \Hat{\Sigma}^g_n, \alpha^g, \beta)$.}:
			\begin{align*}
				& N^{1/2} \left\{ \widehat{\text{Cov}}(\Hat{\alpha}^g_n, \Hat{\beta}_n)_N - \widehat{\text{Cov}}(\alpha^g, \beta)_N\right\} \\
				= & N^{1/2} \mathbb{P}_N \left\{ (\Hat{\alpha}_n^g - \mu_{\alpha, 0}^g) (\Hat{\beta}_n - \mu_{\beta, 0}) - (\alpha^g - \mu_{\alpha, 0}^g) (\beta - \mu_{\beta, 0}) \right\} \\
				= & N^{1/2} \left\{ \mathbb{P}_N (\Hat{\alpha}_{n}^g - \alpha^g)(\Hat{\beta}_{n} - \beta)  + \mathbb{P}_N (\Hat{\alpha}_{n}^g - \alpha^g) (\beta - \mu_{\beta, 0}) + \mathbb{P}_N (\alpha^g - \mu_{\alpha, 0}^g) (\Hat{\beta}_{n} - \beta) \right\}
			\end{align*}
			The second and third terms above are sample means of mean-zero variables, the variance of which decreases to zero as $n \to \infty$; hence, these terms are asymptotically negligible for any $\gamma > 0$. 
			The first term above, however, is only asymptotically negligible under further conditions on $\gamma$. Let $n^{-1} \sigma_{\Hat{\alpha}, \Hat{\beta}, n}^g$ be the within-trial covariance between $(\Hat{\alpha}^g_{n}, \Hat{\beta}_{n})$; we then have that
			\begin{align*}
				N^{1/2}\mathbb{P}_N (\Hat{\alpha}_{n}^g - \alpha^g)(\Hat{\beta}_{n} - \beta) & = N^{1/2} (\mathbb{P}_N - P^F_0)(\Hat{\alpha}_{n}^g - \alpha^g)(\Hat{\beta}_{n} - \beta) + N^{1/2} n^{-1} P_0^F \sigma_{\Hat{\alpha}, \Hat{\beta}, n}^g \\
				& = N^{1/2} V_{N, n} + N^{1/2 - \gamma} P_0^F \sigma_{\Hat{\alpha}, \Hat{\beta}, n}^g.
			\end{align*}
			The term $V_{N, n}$ is $o_{P_0^F}(N^{-1/2})$ as long as $n \to \infty$ because it is the sample mean of a mean-zero random variable with a variance that decreases to zero as $n \to \infty$. The second term is $o_{P^F_0}(1)$ if $\gamma > \frac{1}{2}$; hence, the naive estimator of the covariance is asympotically equivalent to the estimator based on the true treatment effects if $\gamma > \frac{1}{2}$.
		\end{remark}
	
	\subsection{Practical Relevance}\label{appendix:practical-relevance}
	
	By allowing the within-trial sample sizes to increase with the number of independent trials, our approach becomes more practically useful because we are allowed to use more within-trial estimators. However, this comes at the cost of requiring within-trial sample sizes to increase sufficiently fast as a function of $N$. In most meta-analyses, $n$ is very large relative to $N$; hence, the asymptotic regime considered in this section is reasonable. Two caveats should be raised here, however. 
	First, all these results are asymptotic in the sense that $N \to \infty$ regardless of the asymptotic regime. When $N$ is small---as is almost always the case in a meta-analysis---one should interpret the results carefully. 
	Second, when $n$ is not much larger than $N$, one should interpret the results with even more suspicion as the current asymptotic regime may not hold up very well. 

	To some degree, the arguments and results in this section are not satisfactory. In Sections \ref{sec:ma-surrogate-index} and \ref{sec:inference-data-adaptive-estimand}, we derived the asymptotic distribution of the plug-in estimator $\Hat{\Psi}^{g}(\Hat{\mathbb{P}}_{N, n})$ and $\Hat{\Psi}^{g_{N}}(\Hat{\mathbb{P}}_{N, n})$ when $n$ is fixed and $N \to \infty$. 
	As already mentioned in Remark \ref{remark:limiting-distribution}, when $n$ increases with $N$, the limiting distributions may differ from the ones derived in Sections \ref{sec:ma-surrogate-index} and \ref{sec:inference-data-adaptive-estimand}.
	This is indeed the case as the limiting distributions of
	\begin{equation*}
		N^{1/2} \left( \Hat{\Psi}^{g}(\Hat{\mathbb{P}}_{N, n}) - \Psi^F(P^F_0) \right) \quad \text{and} \quad N^{1/2} \left( \widehat{\text{Cov}}(\alpha^g, \beta) - \Psi^F(P^F_0) \right)
	\end{equation*}
	are the same if $\gamma > 0$ and provided $P_0^F \left(\Hat{\sigma}_{\Hat{\alpha}, \Hat{\beta}, n}^g\right)^2 = O(1)$. Indeed, following similar derivations as in Remark \ref{remark:naive-estimator-rate}, it follows that 
	\begin{align*}
		N^{1/2} \left( \Hat{\Psi}^{g}(\Hat{\mathbb{P}}_{N, n}) - \widehat{\text{Cov}}(\alpha^g, \beta) \right) & = N^{1/2} n^{-1} \left( \mathbb{P}_N - P^F_0 \right) \Hat{\sigma}_{\Hat{\alpha}, \Hat{\beta}, n}^g + o_{P^F_0}(1)
	\end{align*}
	whenever $\gamma > 0$. The term $N^{1/2} n^{-1} \left( \mathbb{P}_N - P^F_0 \right) \Hat{\sigma}_{\Hat{\alpha}, \Hat{\beta}, n}^g$ is also $o_{P^F_0}(1)$ because it is the root-$N$ standardized mean of a mean-zero random variable with variance converging to zero.
	Hence, in the current asymptotic regime, the measurement error in the trial-level treatment effects has no effect on the asymptotic distribution if we adjust for the measurement error\footnote{This only holds when we effectively adjust for measurement error because, as shown in Remark \ref{remark:naive-estimator-rate}, not adjusting for the measurement error when $\gamma \le \frac{1}{2}$ can lead to incorrect inferences.}. 
	While this result is not fully satisfactory, it is not surprising that the measurement error becomes negligible if $n, N \to \infty$ because the variance of the measurement error is $O(n^{-1})$. 
	At the same time, if one does not want to work under the regime where $n, N \to \infty$, there seems no way forward without making Assumptions \ref{assumption:unbiased-trt-effect} and \ref{assumption:unbiased-sampling-variance} (or other similar assumptions).

	
	\appendixclearpage
	\section{Proofs}\label{appendix:proofs}
	

	\subsection{Proofs of Results in Main Text}\label{appendix:proofs-main}
	
	\propositionidmeanvariance*
	
	\begin{proof}
		Let $P^F \in \mathcal{M}^F$ be an arbitrary full-data distribution where all elements of $\mathcal{M}^F$ satisfy Assumptions \ref{assumption:unbiased-trt-effect} and \ref{assumption:unbiased-sampling-variance}, and let $P := h(P^F)$ be the corresponding observed-data distribution. 
		Define $U^g := \Hat{\alpha}^g - \alpha^g$. By Assumptions \ref{assumption:unbiased-trt-effect} and \ref{assumption:unbiased-sampling-variance},
		\begin{equation*}
			E_{P^F}(U^g \mid T) = 0,
			\qquad
			\operatorname{Var}_{P^F}(U^g \mid T) = n^{-1}\Sigma_{11}^g,
			\qquad
			E_{P^F}(n^{-1}\Hat{\Sigma}_{11}^g \mid T) = n^{-1}\Sigma_{11}^g.
		\end{equation*}
		Here, each conditional quantity corresponds to the notation in the assumptions as follows: 
		$E_{P^F}(\Hat{\alpha}^g \mid T)=E_{P^T,n}(\Hat{\alpha}^g)$, 
		$\operatorname{Var}_{P^F}(\Hat{\alpha}^g \mid T)=\operatorname{Var}_{P^T,n}(\Hat{\alpha}^g)$, and 
		$E_{P^F}(\Hat{\Sigma}_{11}^g \mid T)=E_{P^T,n}(\Hat{\Sigma}_{11}^g)$ with $n=n(T)$.

		We first show that $\Psi^{F, g}_1(P^F) = \Psi^g_1(P)$. The derivations for $\Psi^{F, g}_2(P^F) = \Psi^g_2(P)$ are analogous.
		\begin{align*}
			\Psi^{F, g}_1(P^F) & = E_{P^F} \left( \alpha^g \right) \\
			& = E_{P^F} \left\{ E_{P^F} \left( \Hat{\alpha}^g \mid T \right) \right\} &&~\mbox{(assumption \ref{assumption:unbiased-trt-effect})}\\
			& = E_{P^F} \left( \Hat{\alpha}^g \right) &&~\mbox{(iterated expectation)}\\
			& = E_{P} \left( \Hat{\alpha}^g \right) &&~\mbox{($P = h(P^F)$)}\\
			& = \Psi^g_1(P)
		\end{align*}

		We now show that $\Psi^{F, g}_3(P^F) = \Psi^g_3(P)$. The derivations for $\Psi^{F, g}_4(P^F) = \Psi^g_4(P)$ and $\Psi^{F, g}_5(P^F) = \Psi^g_5(P)$ are analogous.
		\begin{align*}
			\Psi^{F, g}_3(P^F) & = \operatorname{Var}_{P^F} \left(  \alpha^g \right) \\
			& = \operatorname{Var}_{P^F} \left( \Hat{\alpha}^g - U^g \right) \\
			& = \operatorname{Var}_{P^F} \left( \Hat{\alpha}^g \right) + \operatorname{Var}_{P^F} \left( U^g \right) - 2\operatorname{Cov}_{P^F} \left( \Hat{\alpha}^g, U^g \right) \\
			& = \operatorname{Var}_{P^F} \left( \Hat{\alpha}^g \right) + \operatorname{Var}_{P^F} \left( U^g \right) - 2\operatorname{Cov}_{P^F} \left( \alpha^g + U^g, U^g \right) \\
			& = \operatorname{Var}_{P^F} \left( \Hat{\alpha}^g \right) - \operatorname{Var}_{P^F} \left( U^g \right) - 2\operatorname{Cov}_{P^F} \left( \alpha^g, U^g \right) \\
			& = \operatorname{Var}_{P^F} \left( \Hat{\alpha}^g \right) - \operatorname{Var}_{P^F} \left( U^g \right),
		\end{align*}
		where
		\begin{equation*}
			\operatorname{Cov}_{P^F}(\alpha^g, U^g)
			= E_{P^F}(\alpha^g U^g)
			= E_{P^F}\!\left\{\alpha^g E_{P^F}(U^g \mid T)\right\}
			= 0,
		\end{equation*}
		since $\alpha^g$ is a function of $T$ and $E_{P^F}(U^g \mid T) = 0$. Hence, $\Psi^{F, g}_3(P^F) = \Psi^g_3(P)$ follows if we can show that $\operatorname{Var}_{P^F} \left( U^g \right) = E_{P} \left( n^{-1} \cdot \Hat{\Sigma}_{11}^g \right)$, which we now do:
		\begin{align*}
			\operatorname{Var}_{P^F} \left( U^g \right) & = E_{P^F} \left\{ \operatorname{Var}_{P^F}(U^g \mid T) \right\} &&~\mbox{(law of total variance and }E_{P^F}(U^g\mid T)=0\mbox{)}\\
			& = E_{P^F} \left( n^{-1} \Sigma_{11}^g \right) \\
			& = E_{P^F} \left( E_{P^F} \left( n^{-1} \Hat{\Sigma}_{11}^g \mid T \right) \right) &&~\mbox{(assumption \ref{assumption:unbiased-sampling-variance})}\\
			& = E_{P^F} \left( n^{-1} \Hat{\Sigma}_{11}^g \right) &&~\mbox{(iterated expectation)}\\
			& = E_{P} \left( n^{-1} \Hat{\Sigma}_{11}^g \right) &&~\mbox{($P = h(P^F)$)}.
		\end{align*}
	\end{proof}

	\propositionnonparametricmamodel*

	\begin{proof}
		This follows immediately from corollary \ref{corollary:non-parametric-ma-model}.
	\end{proof}

	\subsection{Proofs of Results in the Appendices}\label{appendix:proofs-additional}

	\subsubsection{Appendix \ref{appendix:plug-in-estimator}}

	\propositionnonparametricmamodelb*

	\begin{proof}
		Let $\boldsymbol{X} := (\Hat{X}, \Hat{\Sigma})$ and $\boldsymbol{Z} := (X, \Sigma)$, and let $Q_0 := Q_{P^F_0}$ be the distribution of $\boldsymbol{Z}$ under $P^F_0$. The model for $Q$ implied by $\mathcal{M}^F$ is denoted by $\mathcal{Q} := \left\{ Q_{P^F}: P^F \in \mathcal{M}^F \right\}$.
		By Lemma \ref{lemma:local-non-parametric-q}, $\mathcal{Q}$ is locally non-parametric at $Q_0$ and hence $\overline{\mathcal{T}}(Q_0) = L^2_0(Q_0)$.

		The density of $\boldsymbol{X} \mid \boldsymbol{Z}$ with respect to the dominating measure $\lambda_{\mathbb{R}^k} \otimes \delta_{\Sigma}$, where $\lambda_{\mathbb{R}^k}$ is the Lebesgue measure on $\mathbb{R}^k$ and $\delta_{\Sigma}$ is the Dirac measure at $\Sigma$, is given by
		\begin{equation*}
			p_{\boldsymbol{X} \mid \boldsymbol{Z}}(\Hat{x}, \Hat{\Sigma} \mid \boldsymbol{Z} = (x, \Sigma)) = \frac{1}{(2\pi)^{k/2} |\Sigma|^{1/2}} \exp\left\{-\frac{1}{2} (\Hat{x} - x)^\top \Sigma^{-1} (\Hat{x} - x)\right\}.
		\end{equation*}

		Note that the observed-data distribution only depends on $P^F$ through $Q_{P^F}$; hence, the observed-data distributions can also be indexed by $Q$. Let $b(X, \Sigma)$ be a score for $Q_0$ in $\mathcal{Q}$. The tangent set of $\mathcal{M}$ at $P_0$ is given by \parencite[Example 25.35]{van2000asymptotic}:
		\begin{equation*}
			\mathcal{T}(P_0) = \left\{ A_{Q_0} b: b \in L^2_0(Q_0) \right\},
		\end{equation*}
		where $A_{Q_0}$ is the score operator that maps a score $b$ for the unobserved-data model $\mathcal{Q}$ to the corresponding score for the observed-data model $\mathcal{M}$. This operator is given by
		\begin{equation*}
			(A_{Q_0} b) (\boldsymbol{x}) = E_{P^F_0} \left\{ b(\boldsymbol{Z}) \mid \boldsymbol{X} = \boldsymbol{x} \right\}.
		\end{equation*}

		The proof is complete if we can show that $\overline{\mathcal{T}}(P_0) = L^2_0(P_0)$. This holds if the range of the score operator $A_{Q_0}: L^2_0(Q_0) \to L^2_0(P_0)$, denoted by $R(A_{Q_0})$, is dense in $L^2_0(P_0)$, or equivalently, its closure $\overline{R}(A_{Q_0})$ is equal to $L^2_0(P_0)$.
		Because $A_{Q_0}$ is a bounded linear operator from $L^2_0(Q_0)$ to $L^2_0(P_0)$, we have that $\overline{R}(A_{Q_0}) = N(A^*)^{\perp}$, where $N(A^*)$ is the null space of the adjoint operator $A^*: L^2_0(P_0) \to L^2_0(Q_0)$, which is defined as follows\footnote{The adjoint is not indexed by $Q_0$ because the conditional distribution $\boldsymbol{X} \mid \boldsymbol{Z}$ is fixed.}:
		\begin{equation*}
			(A^* g)(\boldsymbol{z}) = E \left\{ g(\boldsymbol{X}) \mid \boldsymbol{Z} = \boldsymbol{z} \right\}.
		\end{equation*}
		Hence, we only need to show that $N(A^*) = \{0\}$.
		To this end, let $g \in L^2_0(P_0)$ satisfy $(A^* g)(\boldsymbol{Z}) = 0$ $Q_0$-a.s., i.e.,
		\begin{equation*}
			E \left\{ g(\boldsymbol{X}) \mid \boldsymbol{Z} = (x, \Sigma) \right\} = 0 \quad Q_0\text{-a.s.}
		\end{equation*}
		We must now show that $g(\boldsymbol{X}) = 0$ $P_0$-a.s. We can rewrite the above conditional expectation as follows, also using the assumption that $\Hat{\Sigma} = \Sigma$:
		\begin{equation*}
			E \left\{ g(\Hat{X}, \Sigma) \mid \boldsymbol{Z} = (x, \Sigma) \right\} = 0 \quad Q_0\text{-a.s.}.
		\end{equation*}
		The above expectation is with respect to a multivariate normal distribution with mean $x$ and variance $\Sigma$.
		Define $m_{\Sigma}(x) := E \left\{ g(\Hat{X}, \Sigma) \mid \boldsymbol{Z} = (x, \Sigma) \right\}$. 
		Because the above conditional expectation is zero $Q_0$-a.s., it follows that for $Q_0$-a.e.~$\Sigma$, we have $m_{\Sigma}(x) = 0$ for $Q_0(\cdot \mid \Sigma)$-a.e. $x$.
		Now define 
		\begin{align*}
			\mathcal{S}_1 & := \left\{ \Sigma: m_{\Sigma}(x) = 0 \text{ for } Q_0(\cdot \mid \Sigma)\text{-a.e. } x \right\} \\
			\mathcal{S}_2 & := \left\{ \Sigma: \text{the support of } Q_0(\cdot \mid \Sigma) \text{ contains a nonempty open set in } \mathbb{R}^k \right\} \\
			\mathcal{S}_3 & := \left\{ \Sigma: \Sigma \text{ is positive definite} \right\} \\
			\mathcal{S}_4 & := \left\{ \Sigma: E\left\{ g(\Hat{X}, \Sigma)^2 \mid \boldsymbol{Z} = (x, \Sigma) \right\} < \infty \text{ for } Q_0(\cdot \mid \Sigma)\text{-a.e. } x \right\},
		\end{align*}
		where $Q_0(\mathcal{S}_1 \cap \mathcal{S}_2 \cap \mathcal{S}_3 \cap \mathcal{S}_4) = 1$.
		Indeed, since $g(\boldsymbol{X}) \in L^2(P_0)$, we have that $E\left\{ g(\Hat{X}, \Sigma)^2 \mid \boldsymbol{Z} \right\} < \infty$ $Q_0$-a.s. and thus $Q_0(\mathcal{S}_4) = 1$.

		Fix $\Sigma \in \mathcal{S}_1 \cap \mathcal{S}_2 \cap \mathcal{S}_3 \cap \mathcal{S}_4$. The set $\{x: m_{\Sigma}(x) = 0\}$ has full $Q_0(\cdot \mid \Sigma)$-measure and is therefore dense in a nonempty open subset $\mathcal{X}$ of the support of $Q_0(\cdot \mid \Sigma)$\footnote{Otherwise, there would exist an open subset of the support whose intersection with $\{x: m_{\Sigma}(x) = 0\}$ is empty. By definition of the support, this open subset has non-zero probability, which contradicts $\{x: m_{\Sigma}(x) = 0\}$ having measure one.}; denote this set by $\mathcal{A}_{\Sigma}$. 
		By definition of $\mathcal{S}_4$, there exists $x_0$ such that $E\left\{ g(\Hat{X}, \Sigma)^2 \mid \boldsymbol{Z} = (x_0, \Sigma) \right\} < \infty$. Lemma \ref{lemma:gaussian-smoothing-continuity} therefore implies that $m_{\Sigma}$ is continuous, which implies that $m_{\Sigma}(x) = 0$ for all $ \in \mathcal{A}_{\Sigma}$.
		The completeness of the multivariate normal location family with fixed covariance $\Sigma$ implies $g(\Hat{X}, \Sigma) = 0$ $\lambda$-a.e.~\parencite[Theorem 4.3.1]{lehmann_testing_2008}, where $\lambda$ is the Lebesgue measure.
		Since $P_0(\cdot \mid \Sigma)$ has a density with respect to $\lambda$, this implies that $g(\Hat{X}, \Sigma) = 0$ $P_0(\cdot \mid \Sigma)$-a.s. This holds for $P_0$-a.e.~$\Sigma$, so $g(\Hat{X}, \Sigma) = 0$ $P_0$-a.s. 
		
		Hence, $N(A^*) = \{0\}$ and therefore $\overline{\mathcal{T}}(P_0) = L^2_0(P_0)$.
	\end{proof}

	\corollarynonparametricmamodel*

	\begin{proof}
		The observed-data model under the conditions of this corollary is larger than the observed-data model under the conditions of Proposition \ref{proposition:non-parametric-ma-modelb} because fewer restrictions are imposed on $\mathcal{M}^F$. The latter observed-data model was shown to be locally non-parametric. Hence, the former observed-data model is also non-parametric.
	\end{proof}

	\subsubsection{Appendix \ref{appendix:discussion-conditions-theorem-bootstrap-for-data-adaptive-estimands}}

	\theorembootstrapfordataadaptiveestimands*

	\begin{proof}
		We split the proof into three parts. First, we show that $\Hat{\boldsymbol{\theta}}^{g_N}_N$ and $\Tilde{\boldsymbol{\theta}}^{g_N}_N$ are consistent estimators of both $\boldsymbol{\theta}^{g^*}_0$ and $\boldsymbol{\theta}^{g_N}_0$. Second, we prove the convergence in distribution of $N^{1/2} (\Hat{\boldsymbol{\theta}}^{g_N}_N - \boldsymbol{\theta}^{g_N}_0)$. Third, we prove the bootstrap result. 

		Throughout this proof, $o_P(1)$ denotes convergence in $P$-probability under the joint law of the observed data and bootstrap weights. By Lemma \ref{lemma:conditional-convergence-in-p}, this is equivalent to convergence in conditional probability given the observed data.

		\paragraph{Part 1: Consistency}

		We first show consistency of $\Hat{\boldsymbol{\theta}}^{g_N}_N$ for the non-data-adaptive estimand $\boldsymbol{\theta}^{g^*}_0$; afterwards, we show that this also implies that $\lVert \Hat{\boldsymbol{\theta}}^{g_N}_N - \boldsymbol{\theta}^{g_N}_0 \rVert_2 = o_P(1)$. We then repeat similar steps for the bootstrapped estimator $\Tilde{\boldsymbol{\theta}}^{g_N}_N$. 
		\begin{align*}
			\lVert \Phi^{g_N}(\Hat{\boldsymbol{\theta}}^{g_N}_N) \rVert_2 & = \lVert \Phi^{g_N}(\Hat{\boldsymbol{\theta}}^{g_N}_N) - \mathbb{P}_N \phi^{g_N}_{\Hat{\boldsymbol{\theta}}^{g_N}_N} + \mathbb{P}_N \phi^{g_N}_{\Hat{\boldsymbol{\theta}}^{g_N}_N} \rVert_2 \\
			& = \lVert P_0 \phi^{g_N}_{\Hat{\boldsymbol{\theta}}^{g_N}_N} - \mathbb{P}_N \phi^{g_N}_{\Hat{\boldsymbol{\theta}}^{g_N}_N} + \mathbb{P}_N \phi^{g_N}_{\Hat{\boldsymbol{\theta}}^{g_N}_N} \rVert_2 \\
			& \le \sup_{\boldsymbol{\theta} \in \Theta, g \in \mathcal{G}} \lVert P_0 \phi^{g}_{\boldsymbol{\theta}} - \mathbb{P}_N \phi^{g}_{\boldsymbol{\theta}} \rVert_2 +  \lVert \mathbb{P}_N \phi^{g_N}_{\Hat{\boldsymbol{\theta}}^{g_N}_N}\rVert_2 \\
			& \le \sup_{\boldsymbol{\theta} \in \Theta, g \in \mathcal{G}} \lVert P_0 \phi^{g}_{\boldsymbol{\theta}} - \mathbb{P}_N \phi^{g}_{\boldsymbol{\theta}} \rVert_2 + o_P(N^{-1/2}) &&\mbox{(condition (v))} \\
			& \le o_P(1) + o_P(N^{-1/2})  = o_P(1) &&\mbox{(condition (ii))} 
		\end{align*}
		We use the above result in the following derivations, note that we used $g_N$ above and are using $g^*$ below.
		\begin{align*}
			\lVert \Phi^{g^*}(\Hat{\boldsymbol{\theta}}^{g_N}_N) \rVert_2 & = \lVert P_0 \phi^{g^*}_{\Hat{\boldsymbol{\theta}}^{g_N}_N} - P_0 \phi^{g_N}_{\Hat{\boldsymbol{\theta}}^{g_N}_N} + P_0 \phi^{g_N}_{\Hat{\boldsymbol{\theta}}^{g_N}_N} \rVert_2 \\
			& \le \lVert P_0 \phi^{g^*}_{\Hat{\boldsymbol{\theta}}^{g_N}_N} - P_0 \phi^{g_N}_{\Hat{\boldsymbol{\theta}}^{g_N}_N} \rVert_2 +  o_P(1) &&\mbox{(previous display)}\\
			& \le L \cdot d(g^*, g_N) + o_P(1) && \mbox{(condition (vi))}\\
			& = o_P(1)
		\end{align*}
		By condition (i), $\lVert \Phi^{g^*}(\Hat{\boldsymbol{\theta}}^{g_N}_N) \rVert_2 = o_P(1)$ implies that $\lVert \Hat{\boldsymbol{\theta}}^{g_N}_N - \boldsymbol{\theta}^{g^*}_0 \rVert_2 = o_P(1)$. Lemma \ref{lemma:consistency-general-plugin-estimator} then implies that $\lVert \Hat{\boldsymbol{\theta}}^{g_N}_N - \boldsymbol{\theta}^{g_N}_0 \rVert_2 = o_P(1)$.

		We now repeat the same steps for the bootstrapped estimator $\Tilde{\boldsymbol{\theta}}^{g_N}_N$. 
		\begin{align*}
			\lVert \Phi^{g_N}(\Tilde{\boldsymbol{\theta}}^{g_N}_N) \rVert_2 & = \lVert P_0 \phi^{g_N}_{\Tilde{\boldsymbol{\theta}}^{g_N}_N} - \Tilde{\mathbb{P}}_N \phi^{g_N}_{\Tilde{\boldsymbol{\theta}}^{g_N}_N} + \Tilde{\mathbb{P}}_N \phi^{g_N}_{\Tilde{\boldsymbol{\theta}}^{g_N}_N} \rVert_2 \\
			& \le \sup_{\boldsymbol{\theta} \in \Theta, g \in \mathcal{G}} \lVert P_0 \phi^{g}_{\boldsymbol{\theta}} - \Tilde{\mathbb{P}}_N \phi^{g}_{\boldsymbol{\theta}} \rVert_2 +  \lVert \Tilde{\mathbb{P}}_N \phi^{g_N}_{\Tilde{\boldsymbol{\theta}}^{g_N}_N}\rVert_2 \\
			& = \sup_{\boldsymbol{\theta} \in \Theta, g \in \mathcal{G}} \lVert P_0 \phi^{g}_{\boldsymbol{\theta}} - \Tilde{\mathbb{P}}_N \phi^{g}_{\boldsymbol{\theta}} \rVert_2 + o_P(N^{-1/2}) &&\mbox{(condition (v))} \\
			& \le \sup_{\boldsymbol{\theta} \in \Theta, g \in \mathcal{G}} \lVert P_0 \phi^{g}_{\boldsymbol{\theta}} - \mathbb{P}_N \phi^{g}_{\boldsymbol{\theta}} \rVert_2 + \sup_{\boldsymbol{\theta} \in \Theta, g \in \mathcal{G}} \lVert \mathbb{P}_N \phi^{g}_{\boldsymbol{\theta}} - \Tilde{\mathbb{P}}_N \phi^{g}_{\boldsymbol{\theta}} \rVert_2 + o_P(N^{-1/2}) \\
			& = o_P(1) + o_P(1) + o_P(N^{-1/2})
		\end{align*}
		The last equality follows from condition (ii), \textcite[corollary 10.14]{kosorok2008introduction}, and Lemma \ref{lemma:conditional-convergence-in-p}. The remaining arguments are the same as for $\Hat{\boldsymbol{\theta}}^{g_N}_N$ and are omitted to avoid repetition. We conclude that $\lVert \Tilde{\boldsymbol{\theta}}^{g_N}_N - \boldsymbol{\theta}^{g^*}_0 \rVert_2 = o_P(1)$ and $\lVert \Tilde{\boldsymbol{\theta}}^{g_N}_N - \boldsymbol{\theta}^{g_N}_0 \rVert_2 = o_P(1)$.

		Because $\Hat{\boldsymbol{\theta}}^{g_N}_N$ and $\Tilde{\boldsymbol{\theta}}^{g_N}_N$ are consistent for $\boldsymbol{\theta}^{g^*}_0$, we can further assume without loss of generality that $\Hat{\boldsymbol{\theta}}^{g_N}_N$ and $\Tilde{\boldsymbol{\theta}}^{g_N}_N$ lie in a neighborhood of $\boldsymbol{\theta}^{g^*}_0$ where conditions (iii) and (iv) hold. This will be used in the following parts of the proof. 

		\paragraph{Part 2: Convergence in Distribution of $N^{1/2} (\Hat{\boldsymbol{\theta}}^{g_N}_N - \boldsymbol{\theta}^{g_N}_0)$.}

		From condition (iii) and \textcite[lemma 13.3]{kosorok2008introduction} follows that $N^{1/2} \left(\mathbb{P}_N - P_0 \right) \phi^{g_N}_{\Hat{\boldsymbol{\theta}}^{g_N}_N} - N^{1/2} \left(\mathbb{P}_N - P_0 \right) \phi^{g^*}_{\boldsymbol{\theta}^{g^*}_0} = o_P(1)$. We start from this result in the following derivations:
		\begin{align*}
			N^{1/2} \left(\mathbb{P}_N - P_0 \right) \phi^{g^*}_{\boldsymbol{\theta}^{g^*}_0} & = N^{1/2} \mathbb{P}_N \phi^{g_N}_{\Hat{\boldsymbol{\theta}}^{g_N}_N} - N^{1/2} P_0 \phi^{g_N}_{\Hat{\boldsymbol{\theta}}^{g_N}_N}  + o_P(1) \\
			& = - N^{1/2} P_0 \phi^{g_N}_{\Hat{\boldsymbol{\theta}}^{g_N}_N}  + o_P(1) &&\mbox{(condition (v))} \\
			& = - N^{1/2} P_0 \phi^{g_N}_{\boldsymbol{\theta}^{g_N}_0} - N^{1/2} \dot{\Phi}^{g_N}(\bar{\boldsymbol{\theta}}_N) \cdot \left( \Hat{\boldsymbol{\theta}}^{g_N}_N - \boldsymbol{\theta}^{g_N}_0  \right) + o_P(1) &&\mbox{(mean value theorem + condition (iv))} \\
			& = - N^{1/2} \dot{\Phi}^{g_N}(\bar{\boldsymbol{\theta}}_N) \cdot \left( \Hat{\boldsymbol{\theta}}^{g_N}_N - \boldsymbol{\theta}^{g_N}_0  \right) + o_P(1)
		\end{align*}
		where $\bar{\boldsymbol{\theta}}_N$ is a point between $\Hat{\boldsymbol{\theta}}^{g_N}_N$ and $\boldsymbol{\theta}^{g_N}_0$. Note that we're abusing notation here if $p > 1$ because the mean-value theorem only exists for real-valued functions. In that case, the mean-value theorem is applied to each element of the vector-valued function $\Phi^{g_N}(\boldsymbol{\theta})$ where $\bar{\boldsymbol{\theta}}_N$ may differ from element to element. Anyhow, any such $\bar{\boldsymbol{\theta}}_N$ converges in probability to $\boldsymbol{\theta}^{g^*}_0$ because it lies between $\Hat{\boldsymbol{\theta}}^{g_N}_N$ and $\boldsymbol{\theta}^{g_N}_0$, and both converge in probability to $\boldsymbol{\theta}^{g^*}_0$. 
		By condition (iv) and the fact that $\lVert \bar{\boldsymbol{\theta}}_N - \boldsymbol{\theta}^{g^*}_0 \rVert_2 + d(g_N, g^*) = o_P(1)$, we have that $\dot{\Phi}^{g_N}(\bar{\boldsymbol{\theta}}_N) = \dot{\Phi}^{g^*}(\boldsymbol{\theta}^{g^*}_0) + o_P(1)$ by the continuous mapping theorem, and thus
		\begin{align*}
			N^{1/2} \left(\mathbb{P}_N - P_0 \right) \phi^{g^*}_{\boldsymbol{\theta}^{g^*}_0} = - N^{1/2} \dot{\Phi}^{g^*}(\boldsymbol{\theta}^{g^*}_0) \cdot \left( \Hat{\boldsymbol{\theta}}^{g_N}_N - \boldsymbol{\theta}^{g_N}_0  \right) + o_P(1)
		\end{align*}
		It further follows by condition (iv) that 
		\begin{equation*}
			N^{1/2} \left( \Hat{\boldsymbol{\theta}}^{g_N}_N - \boldsymbol{\theta}^{g_N}_0 \right) = - \left( \dot{\Phi}^{g^*}(\boldsymbol{\theta}^{g^*}_0) \right)^{-1} \cdot N^{1/2} \left(\mathbb{P}_N - P_0 \right) \phi^{g^*}_{\boldsymbol{\theta}^{g^*}_0} + o_P(1).
		\end{equation*}
		By the central limit theorem and Slutsky's theorem, we thus have that $N^{1/2} \left( \Hat{\boldsymbol{\theta}}^{g_N}_N - \boldsymbol{\theta}^{g_N}_0 \right) \overset{d}{\to} Z$ where $Z \sim \mathcal{N}(0, \Omega^{g^*}_0)$ with
		\begin{equation*}
			\Omega^{g^*}_0 = \left( \dot{\Phi}^{g^*}(\boldsymbol{\theta}^{g^*}_0) \right)^{-1} \cdot P_0 \left( \phi^{g^*}_{\boldsymbol{\theta}^{g^*}_0} \right)^{\otimes2} \cdot \left\{ \left( \dot{\Phi}^{g^*}(\boldsymbol{\theta}^{g^*}_0) \right)^{-1} \right\}^{\top}.
		\end{equation*}

		\paragraph{Part 3: Bootstrap Result}

		By \textcite[p.~198]{kosorok2008introduction} and condition (iii), we have that
		\begin{equation*}
			N^{1/2} \left( \Tilde{\mathbb{P}}_N - \mathbb{P}_N \right) \leadsto \mathbb{G} \quad \text{in} \quad \ell^{\infty}(\mathcal{F})
		\end{equation*}
		where $\mathcal{F} = \left\{ \phi^g_{\boldsymbol{\theta}} : g \in \mathcal{G}, \lVert \boldsymbol{\theta} - \boldsymbol{\theta}^{g^*}_0 \rVert_2 \le \delta \right\}$ and $\mathbb{G}$ is a tight Gaussian process. 
		Since $\Tilde{\boldsymbol{\theta}}^{g_N}_N$ is consistent by part 1, we can apply the same asymptotic equicontinuity argument as in \textcite[Lemma 13.3]{kosorok2008introduction} to the above bootstrapped process:
		\begin{equation*}
			N^{1/2} \left( \Tilde{\mathbb{P}}_N - \mathbb{P}_N \right) \phi^{g_N}_{\Tilde{\boldsymbol{\theta}}^{g_N}_N} = N^{1/2} \left( \Tilde{\mathbb{P}}_N - \mathbb{P}_N \right) \phi^{g^*}_{\boldsymbol{\theta}^{g^*}_0} + o_P(1),
		\end{equation*}
		and we similarly have that
		\begin{equation*}
			N^{1/2} \left(\mathbb{P}_N - P_0 \right) \phi^{g_N}_{\Tilde{\boldsymbol{\theta}}^{g_N}_N} = N^{1/2} \left(\mathbb{P}_N - P_0 \right) \phi^{g^*}_{\boldsymbol{\theta}^{g^*}_0} + o_P(1).
		\end{equation*}
		Combining the previous two displays, we have that 
		\begin{align*}
			N^{1/2} \left(\mathbb{P}_N - P_0 \right) \left( \phi^{g_N}_{\Tilde{\boldsymbol{\theta}}^{g_N}_N} - \phi^{g^*}_{\boldsymbol{\theta}^{g^*}_0} \right)  + N^{1/2} \left( \Tilde{\mathbb{P}}_N - \mathbb{P}_N \right) \left( \phi^{g_N}_{\Tilde{\boldsymbol{\theta}}^{g_N}_N} - \phi^{g^*}_{\boldsymbol{\theta}^{g^*}_0}  \right) = o_P(1).
		\end{align*}
		In the above equation, several terms cancel out. We can rewrite the above equation as follows:
		\begin{align*}
			- N^{1/2} P_0 \phi^{g_N}_{\Tilde{\boldsymbol{\theta}}^{g_N}_N} + N^{1/2}P_0 \phi^{g^*}_{\boldsymbol{\theta}^{g^*}_0} + N^{1/2} \Tilde{\mathbb{P}}_N \phi^{g_N}_{\Tilde{\boldsymbol{\theta}}^{g_N}_N} - N^{1/2} \Tilde{\mathbb{P}}_N \phi^{g^*}_{\boldsymbol{\theta}^{g^*}_0} = o_P(1),
		\end{align*}
		where, additionally, $N^{1/2}P_0 \phi^{g^*}_{\boldsymbol{\theta}^{g^*}_0} = 0$ by definition and $N^{1/2} \Tilde{\mathbb{P}}_N \phi^{g_N}_{\Tilde{\boldsymbol{\theta}}^{g_N}_N} = o_P(1)$ by condition (v).
		Hence, we have that, following similar steps as in part 2,
		\begin{align*}
			N^{1/2} \Tilde{\mathbb{P}}_N \phi^{g^*}_{\boldsymbol{\theta}^{g^*}_0} & = - N^{1/2} P_0 \phi^{g_N}_{\Tilde{\boldsymbol{\theta}}^{g_N}_N} + o_P(1) \\
			& = - N^{1/2} P_0 \phi^{g_N}_{\boldsymbol{\theta}^{g_N}_0} - N^{1/2} \dot{\Phi}^{g_N}(\bar{\boldsymbol{\theta}}_N) \cdot \left( \Tilde{\boldsymbol{\theta}}^{g_N}_N - \boldsymbol{\theta}^{g_N}_0  \right) + o_P(1) &&\mbox{(mean value theorem + condition (iv))} \\
			& = - N^{1/2} \dot{\Phi}^{g_N}(\bar{\boldsymbol{\theta}}_N) \cdot \left( \Tilde{\boldsymbol{\theta}}^{g_N}_N - \boldsymbol{\theta}^{g_N}_0  \right) + o_P(1).
		\end{align*}
		By the same arguments as in part 2, we have that $\dot{\Phi}^{g_N}(\bar{\boldsymbol{\theta}}_N) = \dot{\Phi}^{g^*}(\boldsymbol{\theta}^{g^*}_0) + o_P(1)$. We, therefore, have that 
		\begin{equation*}
			N^{1/2} \left( \Tilde{\boldsymbol{\theta}}^{g_N}_N - \boldsymbol{\theta}^{g_N}_0 \right) = - \left( \dot{\Phi}^{g^*}(\boldsymbol{\theta}^{g^*}_0) \right)^{-1} \cdot N^{1/2} \Tilde{\mathbb{P}}_N \phi^{g^*}_{\boldsymbol{\theta}^{g^*}_0} + o_P(1).
		\end{equation*}
		Finally, we have that 
		\begin{align*}
			N^{1/2} \left( \Tilde{\boldsymbol{\theta}}^{g_N}_N - \Hat{\boldsymbol{\theta}}^{g_N}_N \right) & = N^{1/2} \left( \Tilde{\boldsymbol{\theta}}^{g_N}_N - \boldsymbol{\theta}^{g_N}_0 \right) - N^{1/2} \left( \Hat{\boldsymbol{\theta}}^{g_N}_N - \boldsymbol{\theta}^{g_N}_0 \right) \\
			& = - \left( \dot{\Phi}^{g^*}(\boldsymbol{\theta}^{g^*}_0) \right)^{-1} \cdot N^{1/2} \Tilde{\mathbb{P}}_N \phi^{g^*}_{\boldsymbol{\theta}^{g^*}_0} + o_P(1) + \left( \dot{\Phi}^{g^*}(\boldsymbol{\theta}^{g^*}_0) \right)^{-1} \cdot N^{1/2} \mathbb{P}_N \phi^{g^*}_{\boldsymbol{\theta}^{g^*}_0} + o_P(1) \\
			& = - \left( \dot{\Phi}^{g^*}(\boldsymbol{\theta}^{g^*}_0) \right)^{-1} \cdot N^{1/2} \left( \Tilde{\mathbb{P}}_N - \mathbb{P}_N \right) \phi^{g^*}_{\boldsymbol{\theta}^{g^*}_0} + o_P(1) 
		\end{align*}
		From \textcite[Theorem 2.6]{kosorok2008introduction} and Lemma \ref{lemma:slutsky-conditional} now follows that 
		\begin{equation*}
			N^{1/2} \left( \Tilde{\boldsymbol{\theta}}^{g_N}_N - \Hat{\boldsymbol{\theta}}^{g_N}_N \right) \overset{P}{\underset{W}{\leadsto}} Z,
		\end{equation*}
		where $Z$ is as defined in part 2.
	\end{proof}

	\lemmadonskerconditions*

	\begin{proof}
		We first treat the case where $\mathcal{X} \subset \mathbb{R}$. If instead $\mathcal{X} \subset \mathbb{R}^d$ for some $d < \infty$, one may apply the same argument below after replacing each covariate extraction map $x_i$ by the $d$ real-valued coordinate maps $\pi_k \circ x_i$, $k = 1, \ldots, d$, where $\pi_k$ is the $k$'th coordinate projection.
		Formally, the extraction maps $a_i$, $z_i$, and $x_i$ are understood as measurable projection maps on the underlying (ordered, padded) within-trial sample that generates $\mathbb{P}^T_n$; we write them as functions of $(n, \mathbb{P}^T_n)$ for notational convenience.

		As assumed, $\Hat{\alpha}^g$ only depends on $(n, \mathbb{P}^T_n)$ through $\left\{(x_i, z_i, g \circ a_i)(n, \mathbb{P}^T_n): i = 1, \dots, n\right\}$. Since $n$ is random, this dependence is not expressed in terms of a fixed number of arguments. By condition (1.), we have that $n \le n^* < \infty$ almost surely, and by our padding convention we have $a_i(n, \mathbb{P}^T_n) = a_{\mathrm{pad}}$ (and similarly $x_i(n, \mathbb{P}^T_n) = x_{\mathrm{pad}}$ and $z_i(n, \mathbb{P}^T_n) = 0$) for $i > n$. Hence, there exists a measurable function $\Hat{\alpha}$ of $3 n^* + 1$ arguments such that
		\begin{equation*}
			\Hat{\alpha}^g(n, \mathbb{P}^T_n) = \Hat{\alpha}\left(g \circ a_1(n, \mathbb{P}^T_n), \dots, g \circ a_{n^*}(n, \mathbb{P}^T_n), z_1(n, \mathbb{P}^T_n), \dots, z_{n^*}(n, \mathbb{P}^T_n), x_1(n, \mathbb{P}^T_n), \dots, x_{n^*}(n, \mathbb{P}^T_n), n\right).
		\end{equation*}

		We now define $\mathcal{F}_j$ as follows:
		\begin{equation*}
			\mathcal{F}_j = \begin{cases}
				\mathcal{G} \circ a_j & \text{for } 1 \le j \le n^* \\
				\{z_{j - n^*}\} & \text{for } n^* < j \le 2 n^* \\
				\{x_{j - 2 n^*}\} & \text{for } 2 n^* < j \le 3 n^* \\
				\{(n, \mathbb{P}^T_n) \mapsto n\} & \text{for } j = 3 n^* + 1
			\end{cases}
		\end{equation*}
		The class $\left\{ \Hat{\alpha}^g: g \in \mathcal{G}\right\}$ is smaller than $\Hat{\alpha} \circ (\mathcal{F}_1 \times \dots \times \mathcal{F}_{3 n^* + 1}) := \left\{ \Hat{\alpha} \circ f: f \in \mathcal{F}_1 \times \dots \times \mathcal{F}_{3 n^* + 1} \right\}$ because the former restricts the first $n^*$ elements of $f$ to be $\left( g \circ a_1, \dots, g \circ a_{n^*} \right)$ for $g \in \mathcal{G}$, whereas the latter allows for a different $g \in \mathcal{G}$ in each of the first $n^*$ elements.
		Hence, it is sufficient to show that the latter (larger) class is $P_0$-Donsker.

		We can apply \textcite[Theorem 9.31]{kosorok2008introduction} to conclude that $\Hat{\alpha} \circ (\mathcal{F}_1 \times \dots \times \mathcal{F}_{3 n^* + 1})$ is a $P_0$-Donsker class if the following conditions hold:
		\begin{enumerate}
			\item[(i)] $\lVert P_0 \rVert_{\mathcal{F}_j} < \infty$ for all $j = 1, \dots, 3 n^* + 1$.
			\item[(ii)] $\Hat{\alpha} \circ f$ is square-integrable for at least one $f \in \mathcal{F}_1 \times \dots \times \mathcal{F}_{3 n^* + 1}$.
			\item[(iii)] $\left| \Hat{\alpha} \circ f(n, \mathbb{P}^T_n) - \Hat{\alpha} \circ h(n, \mathbb{P}^T_n) \right|^2 \le c^2 \sum_{j = 1}^{3 n^* + 1} \left( f_j(n, \mathbb{P}^T_n) - h_j(n, \mathbb{P}^T_n) \right)^2$ for every $f, h \in \mathcal{F}_1 \times \dots \times \mathcal{F}_{3 n^* + 1}$ and some constant $c < \infty$.
			\item[(iv)] $\mathcal{F}_j$ is a $P_0$-Donsker class for all $j = 1, \dots, 3 n^* + 1$.
		\end{enumerate}
		Condition (i) holds because $\mathcal{G}$ is uniformly bounded by condition (5.), $\mathcal{Z} = \left\{0, 1\right\}$ is bounded, $\mathcal{X}$ is bounded by condition (2.), and $n \le n^*$ by condition (1.). 
		Condition (ii) holds by condition (3.): take $g_0 \in \mathcal{G}$ such that $P_0 |\Hat{\alpha}^{g_0}|^2 < \infty$, and define $f \in \mathcal{F}_1 \times \dots \times \mathcal{F}_{3 n^* + 1}$ by $f_j := g_0 \circ a_j$ for $j \le n^*$ and by taking $f_j$ equal to the unique element of $\mathcal{F}_j$ for $j > n^*$. Then $\Hat{\alpha} \circ f = \Hat{\alpha}^{g_0}$ is square-integrable.
		Condition (iii) follows from condition (4.) because, for $f, h \in \mathcal{F}_1 \times \dots \times \mathcal{F}_{3 n^* + 1}$, we have $f_j = h_j$ for $j > n^*$ (the corresponding classes are singletons). Hence, the Lipschitz bound in condition (4.) yields
		\begin{equation*}
			\left| \Hat{\alpha} \circ f(n, \mathbb{P}^T_n) - \Hat{\alpha} \circ h(n, \mathbb{P}^T_n) \right|^2 \le c^2 \sum_{j = 1}^{n^*} \left( f_j(n, \mathbb{P}^T_n) - h_j(n, \mathbb{P}^T_n) \right)^2,
		\end{equation*}
		which is stronger than the stated condition (iii).
		Condition (iv) holds for all $j$ because:
		\begin{itemize}
			\item For $j \le n^*$: $\mathcal{G}$ is a $P_{a_j, 0} = P_0 \circ a_j^{-1}$-Donsker class by condition (5.), which implies that $\mathcal{G} \circ a_j$ is a $P_0$-Donsker class.
			\item For $n^* < j \le 2 n^*$: $\{z_{j - n^*}\}$ contains a single function with bounded image and is thus a $P_0$-Donsker class.
			\item For $2 n^* < j \le 3 n^*$: $\{x_{j - 2 n^*}\}$ contains a single function with bounded image and is thus a $P_0$-Donsker class.
			\item For $j = 3 n^* + 1$: $\{(n, \mathbb{P}^T_n) \mapsto n\}$ contains a single function with bounded image (bounded by $n^*$) and is thus a $P_0$-Donsker class.
		\end{itemize}
		All conditions (i--iv) are satisfied, so \textcite[Theorem 9.31]{kosorok2008introduction} implies that $\Hat{\alpha} \circ (\mathcal{F}_1 \times \dots \times \mathcal{F}_{3 n^* + 1})$ is $P_0$-Donsker. Since $\{ \Hat{\alpha}^g: g \in \mathcal{G} \}$ is a subclass of this class, it is also $P_0$-Donsker.
	\end{proof}

	\propositionsufficiencyconditionstheorem*

	\begin{proof}
	We first derive some intermediate results. Next, we verify the conditions of Theorem \ref{theorem:bootstrap-for-data-adaptive-estimands}.

	\paragraph{Intermediate Results}

	From the form of the estimating function, as given in Appendix \ref{appendix:plug-in-estimator}, we see that $\Phi^g(\boldsymbol{\theta})$ only depends on $g$ through $P_0 \Hat{\alpha}^g$, $P_0 (\Hat{\alpha}^g)^2$, $P_0 \Hat{\alpha}^g \Hat{\beta}$, and $P_0 \frac{1}{n} \Hat{\Sigma}^g$. Furthermore, each element of $\Phi^g(\boldsymbol{\theta})$ is a linear function of these arguments where the coefficients depend on $\boldsymbol{\theta}$:
	\begin{equation*}
		\Phi^g(\boldsymbol{\theta}) = \begin{pmatrix}
			P_0 \Hat{\alpha}^g - \mu_{\alpha} \\
			P_0 \Hat{\beta} - \mu_{\beta} \\
			P_0 (\Hat{\alpha}^g)^2 - 2 \mu_{\alpha} \cdot P_0 \Hat{\alpha}^g - P_0 \frac{1}{n} \Hat{\sigma}_{\Hat{\alpha}^g} - d_{\alpha} + \mu_{\alpha}^2 \\
			P_0 \Hat{\beta}^2 - 2 \mu_{\beta} \cdot P_0 \Hat{\beta} - P_0 \frac{1}{n} \Hat{\sigma}_{\Hat{\beta}} - d_{\beta} + \mu_{\beta}^2 \\
			P_0 \Hat{\alpha}^g \Hat{\beta} - \mu_{\beta} \cdot P_0 \Hat{\alpha}^g  - \mu_{\alpha} \cdot P_0 \Hat{\beta} + P_0 \frac{1}{n} \Hat{\sigma}_{\Hat{\alpha}^g, \Hat{\beta}} + \mu_{\alpha} \mu_{\beta} - d_{\alpha, \beta}
		\end{pmatrix}
	\end{equation*}
	Hence, $\Phi^g(\boldsymbol{\theta})$ is a Lipschitz continuous function of $\left( P_0 \Hat{\alpha}^g, P_0 (\Hat{\alpha}^g)^2, P_0 \Hat{\alpha}^g \Hat{\beta}, P_0 \frac{1}{n} \Hat{\Sigma}^g \right)$ indexed by $\boldsymbol{\theta}$ where the Lipschitz constants are upper bounded by $C < \infty$ if $\boldsymbol{\theta}$ is restricted to a bounded set $\Theta$ (as required in condition (b)).
	Hence, for any $\boldsymbol{\theta} \in \Theta$ and any $g_1, g_2 \in \mathcal{G}$, we have that (where $\lVert \cdot \rVert_F$ is the Frobenius norm):
	\begin{align*}
		\lVert \Phi^{g_1}(\boldsymbol{\theta}) - \Phi^{g_2}(\boldsymbol{\theta}) \rVert_2 & \le C \cdot \left( | P_0 \Hat{\alpha}^{g_1} - P_0 \Hat{\alpha}^{g_2}| + | P_0 (\Hat{\alpha}^{g_1})^2 - P_0 (\Hat{\alpha}^{g_2})^2| + | P_0 \Hat{\alpha}^{g_1} \Hat{\beta} - P_0 \Hat{\alpha}^{g_2} \Hat{\beta}| + \left\lVert P_0 \frac{1}{n} \Hat{\Sigma}^{g_1} - P_0 \frac{1}{n} \Hat{\Sigma}^{g_2} \right\rVert_F \right) \\
		& \le C \cdot \left( M_1 \cdot \lVert g_1 - g_2 \rVert_{\infty} + M_2 \cdot \lVert g_1 - g_2 \rVert_{\infty} + M_3 \cdot \lVert g_1 - g_2 \rVert_{\infty} + M_4 \cdot \lVert g_1 - g_2 \rVert_{\infty} \right) \\
		&~\qquad \mbox{(Lipschitz continuity in conditions (e--f))} \\
		& \le C^* \cdot  \lVert g_1 - g_2 \rVert_{\infty} \\
		&~\qquad \mbox{($C^* := C \cdot (M_1 + M_2 + M_3 + M_4)$)}
	\end{align*}

	Similar to above, note that the estimating function $\phi^g_{\boldsymbol{\theta}}$ only depends on $g$ through $\Hat{\alpha}^g$ and $\frac{1}{n}\Hat{\Sigma}^g$, and depends on the observed data through $\Hat{\beta}$. Equivalently, we may write
	\begin{equation*}
		\phi^g_{\boldsymbol{\theta}} = \phi \circ \left( \Hat{\alpha}^g, \Hat{\beta}, n^{-1} \Hat{\Sigma}^g, \boldsymbol{\theta} \right)
	\end{equation*}
	for a measurable map $\phi$ that is a second-degree polynomial function of its arguments (see Appendix \ref{appendix:plug-in-estimator}). By conditions (b) and (c), these arguments are uniformly bounded (uniformly over $g \in \mathcal{G}$ and $\boldsymbol{\theta} \in \Theta$), so $\phi$ is Lipschitz on the relevant domain.
	It now follows by applying \textcite[Theorem 9.31]{kosorok2008introduction} componentwise (after stacking the entries of $n^{-1} \Hat{\Sigma}^g$ and $\boldsymbol{\theta}$) that $\left\{ \phi_{\boldsymbol{\theta}}^g: g \in \mathcal{G}, \boldsymbol{\theta} \in \Theta \right\}$ is $P_0$-Donsker if 
	\begin{enumerate}
		\item $\mathcal{F}_1 = \left\{ \Hat{\alpha}^g: g \in \mathcal{G} \right\}$, $\mathcal{F}_2 = \left\{ \Hat{\beta} \right\}$, $\mathcal{F}_3 = \left\{ \frac{1}{n} \Hat{\Sigma}^g: g \in \mathcal{G} \right\}$, and $\mathcal{F}_4 = \left\{ \boldsymbol{\theta}: \boldsymbol{\theta} \in \Theta \right\}$ are $P_0$-Donsker classes,
		\item $\lVert P_0 \rVert_{\mathcal{F}_j} < \infty$ for $j = 1, \ldots, 4$, and
		\item $\phi^g_{\boldsymbol{\theta}}$ is square integrable for some $(\boldsymbol{\theta}, g) \in \Theta \times \mathcal{G}$.
	\end{enumerate}
	Condition (1.) follows from condition (c) for $j = 1$; for $j = 2$, $\mathcal{F}_2$ is a singleton with $\Hat{\beta}$ bounded and thus $P_0$-Donsker; for $j = 3$, note that $\{\Hat{\Sigma}^g: g \in \mathcal{G}\}$ is bounded $P_0$-Donsker by condition (c), and multiplying by the fixed bounded function $(n, \mathbb{P}^T_n) \mapsto n^{-1}$ preserves the $P_0$-Donsker property; and for $j = 4$, $\mathcal{F}_4$ is a bounded class of constant functions.
	Condition (2.) follows from the involved classes of functions being uniformly bounded by conditions (b) and (c).
	Condition (3.) follows from $\phi$ being a polynomial and all involved functions being uniformly bounded by conditions (b) and (c).
	
	\paragraph{Verification of the Conditions}

	Condition (i) of Theorem \ref{theorem:bootstrap-for-data-adaptive-estimands} follows from $\boldsymbol{\theta}_0^g$ being the unique root of $\Phi^g(\boldsymbol{\theta})$ on any compact $K \supset \Theta$ for $g \in \mathcal{G}$ and $\boldsymbol{\theta} \mapsto \Phi^g(\boldsymbol{\theta})$ being a continuous function.

	Condition (ii) of Theorem \ref{theorem:bootstrap-for-data-adaptive-estimands} follows from $\left\{ \phi_{\boldsymbol{\theta}}^g: g \in \mathcal{G}, \boldsymbol{\theta} \in \Theta \right\}$ being a $P_0$-Donsker class (intermediate result) because a $P_0$-Donsker class is a $P_0$-Glivenko-Cantelli class.

	The first part of Condition (iii) of Theorem \ref{theorem:bootstrap-for-data-adaptive-estimands} follows from the intermediate Donsker result and the fact that $\Theta$ is open. Indeed, for $\boldsymbol{\theta}^{g^*}_0 \in \Theta$, there exists a $\delta > 0$ such that $\left\{ \boldsymbol{\theta} \in \mathbb{R}^5: \lVert \boldsymbol{\theta} - \boldsymbol{\theta}^{g^*}_0 \rVert_2 < \delta \right\} \subset \Theta$ (and a subset of a Donsker class is a Donsker class). 
	For the second part, we write the estimating function $\phi^g_{\boldsymbol{\theta}}: \mathbb{N} \times \mathcal{M}^*_{\text{NP}} \times \Theta \to \mathbb{R}^5$ as the following composite function:
	\begin{equation*}
		 \mathbb{N} \times \mathcal{M}^*_{\text{NP}} \times \Theta \mapsto \phi^g_{\boldsymbol{\theta}} = \phi \circ \left( \Hat{\alpha}^g, \Hat{\beta}, n^{-1} \Hat{\Sigma}^g, \boldsymbol{\theta} \right)
	\end{equation*}
	where $\Hat{\alpha}^g$, $\Hat{\beta}: \mathbb{N} \times \mathcal{M}^*_{\text{NP}} \to \mathbb{R}$ and $n^{-1} \Hat{\Sigma}^g: \mathbb{N} \times \mathcal{M}^*_{\text{NP}} \to \mathbb{R}^{2 \times 2}$. The function $\phi$ above is a second-degree polynomial function and, hence, continuous. Hence, if for every $(n, \mathbb{P}^T_n) \in \mathbb{N} \times \mathcal{M}^*_{\text{NP}}$, the arguments converge to the appropriate limits indexed by $g^*$ and $\boldsymbol{\theta}^{g^*}_0$ as $\lVert \boldsymbol{\theta} - \boldsymbol{\theta}^{g^*}_0\rVert_2 + \lVert g - g^* \rVert_{\infty} \to 0$, then $\phi^g_{\boldsymbol{\theta}}(n, \mathbb{P}^T_n) \to \phi^{g^*}_{\boldsymbol{\theta}^{g^*}_0}(n, \mathbb{P}^T_n)$ pointwise.
	By condition (d), $\Hat{\alpha}^{g}$ and $n^{-1} \Hat{\Sigma}^{g}$ converge pointwise to $\Hat{\alpha}^{g^*}$ and $n^{-1} \Hat{\Sigma}^{g^*}$, respectively, if $\lVert g - g^* \rVert_{\infty} \to 0$. The other arguments, $\Hat{\beta}$ and $\boldsymbol{\theta}$ trivially converge to $\Hat{\beta}$ and $\boldsymbol{\theta}^{g^*}_0$ as $\lVert \boldsymbol{\theta} - \boldsymbol{\theta}^{g^*}_0\rVert_2 \to 0$.
	The pointwise convergence implies, by the dominated convergence theorem, that $P_0 \lVert \phi_{\boldsymbol{\theta}}^{g} - \phi_{\boldsymbol{\theta}^{g^*}_0}^{g^*} \rVert_2^2 \to 0$ as $\lVert \boldsymbol{\theta} - \boldsymbol{\theta}^{g^*}_0\rVert_2 + \lVert g - g^* \rVert_{\infty} \to 0$.

	The first part of condition (iv) of Theorem \ref{theorem:bootstrap-for-data-adaptive-estimands} follows from $\Hat{\alpha}^g$, $\Hat{\beta}$, $\Hat{\Sigma}^g$, and $\boldsymbol{\theta}$ being bounded (uniformly in $\mathcal{G}$) by conditions (b) and (c). The second part follows from the expression for $\dot{\Phi}^g(\boldsymbol{\theta})$ given in (\ref{eq:phi-dot}).

	Condition (v) of Theorem \ref{theorem:bootstrap-for-data-adaptive-estimands} follows from the definition of the estimators as the exact solutions of the corresponding estimating equations.

	Condition (vi) of Theorem \ref{theorem:bootstrap-for-data-adaptive-estimands} follows from the intermediate results.
	\end{proof}
		
	\subsubsection{Appendix \ref{appendix:alternative-asymptotic-framework}}
	
	\lemmanewasymptoticsmomentbased*

	\begin{proof}
		We prove the lemma first for the mean parameters and then for the covariance parameters. For simplicity of notation, we let $\mathbb{P}_N$ be the empirical distribution of the data, including the unobserved bias terms $b_n^g$ and $c_n^g$ where $n = N^{\gamma}$. 

		\item \paragraph{Mean Parameters}
		
		The estimators for $E_{P^F_0}(\alpha^g)$ are given by
		\begin{equation*}
			\Hat{\Psi}^g(\Tilde{\mathbb{P}}_{N, n}) = \Tilde{\mathbb{P}}_{N} \Tilde{\alpha}^g_n \quad \text{and} \quad \Hat{\Psi}^g(\Hat{\mathbb{P}}_{N, n}) = \Hat{\mathbb{P}}_{N} \Hat{\alpha}^g_n
		\end{equation*}
		for the observed trial-level estimators and the debiased estimators, respectively. We now show that they are asymptotically equivalent under conditions.
		\begin{align*}
			N^{1/2} \left( \Hat{\Psi}^{g}(\Tilde{\mathbb{P}}_{N, n}) - \Hat{\Psi}^g(\Hat{\mathbb{P}}_{N, n}) \right) & = N^{1/2} \mathbb{P}_N \left( \Tilde{\alpha}^g_n - \Hat{\alpha}^g_n \right) \\
			& \le N^{1/2} \mathbb{P}_N n^{-1/2} \lVert b^g_n \rVert_2 &&~\mbox{(Assumption \ref{assumption:biased-trt-effect})} \\
			& = N^{1/2} n^{-1/2} \cdot (\mathbb{P}_N - P_0^F) \lVert b_{n}^g \rVert_2 \\
			& \quad + N^{1/2} n^{-1/2} P_0^F \lVert b_{n}^g \rVert_2 &&~\mbox{(add and subtract $N^{1/2} n^{-1/2} P_0^F \lVert b_{n}^g \rVert_2$)}
		\end{align*}
		By Assumption \ref{assumption:biased-trt-effect}, $\lVert b_{n}^g \rVert_2$ has finite variance for $n \ge n_0 \in \mathbb{R}$; hence, the first term above is the root-$N$ standardized empirical mean of mean-zero i.i.d.~random variables with variance going to zero as $n \to \infty$. The first term is thus $o_{P^F_0}(1)$ whenever $\gamma > 0$. 
		The second term is $o_{P^F_0}(1)$ if
		\begin{equation*}
			N^{1/2} n^{-1/2} P_0^F \lVert b_{n}^g \rVert_2 \le N^{1/2} n^{-1/2} (P_0^F \lVert b_{n}^g \rVert_2^2)^{1/2} =  N^{\frac{1}{2} - \frac{1}{2}\gamma} \cdot O(N^{- \frac{1}{2} \gamma \delta_1}) = N^{\frac{1}{2} - \frac{1}{2}\gamma - \frac{1}{2}\gamma \delta_1} \cdot O(1) = o(1).
		\end{equation*}
		This holds for $\gamma > \frac{1}{1 + \delta_1}$.
		The same holds analogously for the estimator of $E_{P^F_0}(\beta)$.

		\item \paragraph{Covariance Parameters}

		The estimators for $E_{P^F_0}(\alpha^g - E_{P^F_0}(\alpha^g))^2$ are given by
		\begin{equation*}
			\Hat{\Psi}^g(\Tilde{\mathbb{P}}_{N, n}) = \mathbb{P}_{N} \left( \Tilde{\alpha}^g_n - \mathbb{P}_{N} \Tilde{\alpha}^g_n \right)^2 - \mathbb{P}_{N} \left( n^{-1} \Tilde{\sigma}^g_{\alpha} \right) \quad \text{and} \quad \Hat{\Psi}^g(\Hat{\mathbb{P}}_{N, n}) = \mathbb{P}_{N} \left( \Hat{\alpha}^g_n - \mathbb{P}_{N} \Hat{\alpha}^g_n \right)^2 - \mathbb{P}_{N} \left( n^{-1}\Hat{\sigma}^g_{\alpha, n} \right),
		\end{equation*}
		for the observed trial-level estimators and the debiased estimators, respectively. We now show that they are asymptotically equivalent under conditions.
		\begin{align*}
			N^{1/2} \left( \Hat{\Psi}^g(\Tilde{\mathbb{P}}_{N, n}) - \Hat{\Psi}^g(\Hat{\mathbb{P}}_{N, n}) \right) & = N^{1/2} n^{-1} \mathbb{P}_N \left( \Tilde{\sigma}^g_{\alpha} - \Hat{\sigma}^g_{\alpha, n} \right)  + N^{1/2} \mathbb{P}_N \left\{ \left( \Tilde{\alpha}^g_n - \mathbb{P}_N \Tilde{\alpha}^g_n \right)^2 - \left( \Hat{\alpha}^g_n - \mathbb{P}_N \Hat{\alpha}^g_n \right)^2 \right\}\\
			& = N^{1/2} n^{-1} \mathbb{P}_N \left( \Tilde{\sigma}^g_{\alpha} - \Hat{\sigma}^g_{\alpha, n} \right)  \\
			& \quad + N^{1/2} \mathbb{P}_N \left\{ \left( \Tilde{\alpha}^g_n - \mathbb{P}_N \Tilde{\alpha}^g_n \right)^2 - \left( \Tilde{\alpha}^g_n - n^{-1/2} b^g_n - \mathbb{P}_N \left( \Tilde{\alpha}^g_n - n^{-1/2} b^g_n \right) \right)^2 \right\} \\
			& = N^{1/2} n^{-1} \mathbb{P}_N \left( \Tilde{\sigma}^g_{\alpha} - \Hat{\sigma}^g_{\alpha, n} \right)  \\
			& \quad + N^{1/2} \mathbb{P}_N \left\{ \left( \Tilde{\alpha}^g_n - \mathbb{P}_N \Tilde{\alpha}^g_n \right)^2 - \left( (\Tilde{\alpha}^g_n - \mathbb{P}_N \Tilde{\alpha}^g_n) - n^{-1/2}( b^g_n - \mathbb{P}_N  b^g_n) \right)^2 \right\} \\
			& = N^{1/2} n^{-1} \mathbb{P}_N \left( \Tilde{\sigma}^g_{\alpha} - \Hat{\sigma}^g_{\alpha, n} \right)  \\
			& \quad + N^{1/2} \mathbb{P}_N \left\{ 2 n^{-1/2} ( b^g_n - \mathbb{P}_N  b^g_n) (\Tilde{\alpha}^g_n - \mathbb{P}_N \Tilde{\alpha}^g_n) - n^{-1} \left( b^g_n - \mathbb{P}_N b^g_n \right)^2 \right\} \\
			& \le N^{1/2} n^{-1} \mathbb{P}_N \lVert c^g_n \rVert_F \\
			& \quad + 2 N^{1/2} n^{-1/2} \mathbb{P}_N \left( b^g_n - \mathbb{P}_N  b^g_n \right) \left( \Tilde{\alpha}^g_n - \mathbb{P}_N \Tilde{\alpha}^g_n \right) \\
			& \quad + N^{1/2} n^{-1} \mathbb{P}_N \left( b^g_n - \mathbb{P}_N b^g_n \right)^2 \\
			& =: T_1 + T_2 + T_3
		\end{align*}
		We now bound each of the three terms above. 
		First, we have 
		\begin{equation*}
			T_1 = N^{1/2} n^{-1} (\mathbb{P}_N - P_0^F) \lVert c_n^g \rVert_F + N^{1/2} n^{-1} P_0^F \lVert c_n^g \rVert_F,
		\end{equation*}
		where $N^{1/2} n^{-1} (\mathbb{P}_N - P_0^F) \lVert c_n^g \rVert_F$ is the root-$N$ standardized empirical mean of mean-zero i.i.d.~random variables with variance going to zero as $n \to \infty$ because $\lVert c_n^g \rVert_F$ has finite variance by assumption \ref{assumption:biased-sampling-variance}; hence, $N^{1/2} n^{-1} (\mathbb{P}_N - P_0^F) \lVert c_n^g \rVert_F = o_{P^F_0}(1)$ whenever $\gamma > 0$.
		Furthermore, $N^{1/2} n^{-1} P_0^F \lVert c_n^g \rVert_F$ is $o_{P^F_0}(1)$ if 
		\begin{equation*}
			N^{1/2} n^{-1} P_0^F \lVert c_n^g \rVert_F = N^{1/2} N^{-\gamma} \cdot O(n^{-\delta_2}) = N^{1/2 - \gamma - \gamma \delta_2} \cdot O(1) = o_P(1).
		\end{equation*}
		This holds for $\gamma > \frac{1}{2} \cdot \frac{1}{1 + \delta_2}$.

		Second, we have by the Cauchy-Schwarz inequality:
		\begin{align*}
			T_2 & \le N^{1/2} n^{-1/2} \left\{ \mathbb{P}_N \left( b^g_n - \mathbb{P}_N  b^g_n \right)^2 \right\}^{1/2} \cdot \left\{ \mathbb{P}_N \left( \Tilde{\alpha}^g_n - \mathbb{P}_N \Tilde{\alpha}^g_n \right)^2 \right\}^{1/2}.
		\end{align*}
		The elements on the right-hand side correspond to sample variances. The sample variance corresponding to $\Tilde{\alpha}^g$ is $O_{P^F_0}(1)$ and the sample variance corresponding to $b^g_n$ is bounded by above by 
		\begin{align*}
			\mathbb{P}_N \left( b^g_n - \mathbb{P}_N b^g_n \right)^2 & = \mathbb{P}_N \left( b^g_n - P_0^F b^g_n + P_0^F b^g_n - \mathbb{P}_N b^g_n \right)^2 \\
			& \le 2 \mathbb{P}_N \left( b^g_n - P_0^F b^g_n \right)^2 + 2\left( P_0^F b^g_n - \mathbb{P}_N b^g_n \right)^2  \\
			& = 2 (\mathbb{P}_N - P^F_0) \left( b^g_n - P_0^F b^g_n \right)^2 + 2 P^F_0 \left( b^g_n - P_0^F b^g_n \right)^2  + 2 \left( (\mathbb{P}_N - P_0^F) b^g_n \right)^2 \\
			& \le 2 (\mathbb{P}_N - P^F_0) \left( b^g_n - P_0^F b^g_n \right)^2 + 2 P^F_0 \lVert b^g_n \rVert_2^2 + 2 \left( (\mathbb{P}_N - P_0^F) b^g_n \right)^2 \\ 
			& = O_{P^F_0}(N^{-1/2}) + O(n^{-\delta_1}) 
		\end{align*}
		In the last equality, we used the fact that $P^F_0 \lVert b^g_n \rVert_2^4 < K < \infty$ for the mean-zero sample averages and $P_0^F \lVert b^g_n \rVert_2^2 = O(n^{-\delta_1})$.
		Hence, $T_2 = o_{P^F_0}(1)$ if $N^{1/2} n^{-1/2} \cdot O(n^{-\delta_1}) = o_{P^F_0}(1)$, which holds for $\gamma > \frac{1}{2} \cdot \frac{1}{\frac{1}{2} + \delta_1}$.
		
		Third, $T_3 = o_{P^F_0}(1)$ by the above derivations. 
		
		The derivations for the other covariance parameters are analogous.
	\end{proof}

	\corollarynewasymptoticsmomentbased*

	\begin{proof}
		The derivations in the proof of Lemma \ref{lemma:new-asymptotics-plug-in} can be repeated for an estimated $g_N$ because
		\begin{equation*}
			\frac{1}{2} \lVert b^{g_N}_n \rVert_2^2 \le \lVert b^{g_N}_n - b^{g^*}_n \rVert_2^2 + \lVert b^{g^*}_n \rVert_2^2 \quad \text{and} \quad  \lVert c^{g_N}_n \rVert_F \le \lVert c^{g_N}_n - c^{g^*}_n \rVert_F + \lVert c^{g^*}_n \rVert_F.
		\end{equation*}
		By assumption, $P_0^F \lVert b^{g_N}_n - b^{g^*}_n \rVert_2^2$ and $P_0^F \lVert c^{g_N}_n - c^{g^*}_n \rVert_F$ are of the same order as $P_0^F \lVert b^{g^*}_n \rVert_2^2 = O(n^{-\delta_1}) = O(N^{-\gamma \delta_1})$ and $P_0^F \lVert c^{g^*}_n \rVert_F = O(n^{-\delta_2}) = O(N^{-\gamma \delta_2})$, respectively. 
		The same argument can be repeated for $\lVert b^{g_N}_n \rVert_2^4$ and $\lVert c^{g_N}_n \rVert_F^2$. 
	\end{proof}

	\subsection{Technical Lemmas}

	\begin{restatable}{lemma}{lemmagaussiansmoothingcontinuity}
		\label{lemma:gaussian-smoothing-continuity}
		Fix a positive definite matrix $\Sigma \in \mathbb{R}^{k \times k}$. For each $x \in \mathbb{R}^k$, let $U_x \sim \mathcal{N}(x, \Sigma)$ and let $f: \mathbb{R}^k \to \mathbb{R}$ be measurable.
		If $E\{ f(U_{x_0})^2 \} < \infty$ for some $x_0 \in \mathbb{R}^k$, then the map $m(x) := E\{ f(U_x) \}$ is well defined and continuous on $\mathbb{R}^k$.
	\end{restatable}

	\begin{proof}
		Let $U_{x} \sim \mathcal{N}(x, \Sigma)$, write $U := U_{x_0}$, and let $\varphi_{\Sigma}$ denote the density of $\mathcal{N}(0, \Sigma)$ with respect to Lebesgue measure.
		For each $x \in \mathbb{R}^k$, define the likelihood ratio
		\begin{equation*}
			r_x(U) := \frac{\varphi_{\Sigma}(U - x)}{\varphi_{\Sigma}(U - x_0)} = \exp\left\{ (x - x_0)^\top \Sigma^{-1} (U - x_0) - \frac{1}{2} (x - x_0)^\top \Sigma^{-1} (x - x_0) \right\}.
		\end{equation*}
		Then $m(x) := E \{ f(U_{x}) \} = E\{ f(U) r_x(U) \}$, and from the properties of the log-normal distribution it follows that $E\{ r_x(U)^2 \} = \exp\left\{ (x - x_0)^\top \Sigma^{-1} (x - x_0) \right\} < \infty$ for all $x$.
		Since $E\{ f(U)^2 \} < \infty$ by assumption, it follows by Cauchy--Schwarz that $|m(x)| < \infty$ for all $x$.
		
		Moreover, for any $x_1, x_2 \in \mathbb{R}^k$, it follows by the properties of the log-normal distribution that
		\begin{equation*}
			E\{ r_{x_1}(U) r_{x_2}(U) \} = \exp\left\{ (x_1 - x_0)^\top \Sigma^{-1} (x_2 - x_0) \right\},
		\end{equation*}
		and thus that
		\begin{align*}
			E\{ (r_{x_1}(U) - r_{x_2}(U))^2 \} & = E\{ r_{x_1}(U)^2 \} + E\{ r_{x_2}(U)^2 \} - 2 E\{ r_{x_1}(U) r_{x_2}(U) \} \\
			& = \exp\left\{ (x_1 - x_0)^\top \Sigma^{-1} (x_1 - x_0) \right\} \\
			& \quad + \exp\left\{ (x_2 - x_0)^\top \Sigma^{-1} (x_2 - x_0) \right\} \\
			& \quad - 2 \exp\left\{ (x_1 - x_0)^\top \Sigma^{-1} (x_2 - x_0) \right\},
		\end{align*}
		which converges to $0$ as $x_1 \to x_2$.

		Hence, it follows that for any $x_1, x_2 \in \mathbb{R}^k$,
		\begin{equation*}
			|m(x_1) - m(x_2)| = \left| E \left\{ f(U_{x_0}) \left( r_{x_1}(U_{x_0}) - r_{x_2}(U_{x_0}) \right) \right\} \right| \le \left\{ E f(U_{x_0})^2 \right\}^{1/2} \left\{ E (r_{x_1}(U) - r_{x_2}(U))^2 \right\}^{1/2},
		\end{equation*}
		which proves continuity of $m$ because the right-hand side converges to $0$ as $x_1 \to x_2$.
	\end{proof}

	\begin{restatable}{lemma}{lemmalocalnonparametricq}
		\label{lemma:local-non-parametric-q}
		Let $\mathcal{Q}$ be the collection of all distributions $Q$ of $\boldsymbol{Z} = (X, \Sigma)$ such that the support of $X \mid \Sigma$ contains a nonempty open set in $\mathbb{R}^k$ $Q$-a.s. and $\Sigma$ is positive definite $Q$-a.s.
		Then $\mathcal{Q}$ is locally non-parametric at any $Q_0 \in \mathcal{Q}$, i.e., $\overline{\mathcal{T}}(Q_0) = L^2_0(Q_0)$.
	\end{restatable}

	\begin{proof}
		Let $h \in L^2_0(Q_0) \cap L^{\infty}(Q_0)$.
		For $|t| < \lVert h \rVert_{\infty}^{-1}$, define a one-dimensional parametric submodel $(Q_t)_{t}$ through $Q_0$ by the linear tilt
		\begin{equation*}
			\frac{dQ_t}{dQ_0} := 1 + t h.
		\end{equation*}
		Since $1 + t h$ is bounded away from $0$ and $\infty$, $Q_t$ is equivalent to $Q_0$, meaning that $Q_t$ and $Q_0$ are mutually absolutely continuous (i.e., $Q_t \ll Q_0$ and $Q_0 \ll Q_t$, so they have the same null sets).
		In particular, $\Sigma$ is positive definite $Q_t$-a.s.
		Moreover, for $Q_0$-a.e.~$\Sigma$ (equivalently, for $Q_t$-a.e.~$\Sigma$), the conditional distributions satisfy
		\begin{equation*}
			\frac{dQ_t(\cdot \mid \Sigma)}{dQ_0(\cdot \mid \Sigma)}(x) = \frac{1 + t h(x, \Sigma)}{E_{Q_0}\left\{ 1 + t h(X, \Sigma) \mid \Sigma \right\}},
		\end{equation*}
		which is bounded away from $0$ and $\infty$.
		Hence, $Q_t(\cdot \mid \Sigma)$ and $Q_0(\cdot \mid \Sigma)$ have the same support for $Q_0$-a.e.~$\Sigma$ (equivalently, for $Q_t$-a.e.~$\Sigma$).
		Therefore, the support of $X \mid \Sigma$ contains a nonempty open set in $\mathbb{R}^k$ for $Q_t$-a.e.~$\Sigma$, and thus $Q_t \in \mathcal{Q}$ for all sufficiently small $t$.

		The score of this parametric submodel at $t = 0$ equals $h$.
		Thus, $L^2_0(Q_0) \cap L^{\infty}(Q_0) \subset \mathcal{T}(Q_0)$.
		Since $L^2_0(Q_0) \cap L^{\infty}(Q_0)$ is dense in $L^2_0(Q_0)$, it follows that $\overline{\mathcal{T}}(Q_0) = L^2_0(Q_0)$.
	\end{proof}

	\begin{restatable}{lemma}{lemmaconditionalconvergenceinp}
		\label{lemma:conditional-convergence-in-p}
		Let $(\mathbb{E}, \lVert \cdot \rVert)$ be a normed vector space and let $X_N$ be a sequence of random elements in $\mathbb{E}$ defined on the product space of the observed data and bootstrap weights. Let $o_P(1)$ denote convergence in $P$-probability under the joint law of the observed data and bootstrap weights. Then, $\lVert X_N \rVert = o_P(1)$ if and only if $X_N$ converges to zero in conditional probability given the observed data, i.e., for all $\epsilon > 0$ and all $\delta > 0$,
		\begin{equation*}
			P\left( P\left( \lVert X_N \rVert > \epsilon \mid \mathcal{O}_N \right) > \delta \right) \to 0,
		\end{equation*}
		where $\mathcal{O}_N$ is the $\sigma$-field generated by the observed data.
	\end{restatable}

	\begin{proof}
		Fix $\epsilon > 0$ and define the conditional tail probability $Q_{N, \epsilon} := P\left( \lVert X_N \rVert > \epsilon \mid \mathcal{O}_N \right)$.
		Then $0 \le Q_{N, \epsilon} \le 1$ and, by the law of total probability, $P\left( \lVert X_N \rVert > \epsilon \right) = E\left\{ Q_{N, \epsilon} \right\}$.

		$\Rightarrow$.~~~If $\lVert X_N \rVert = o_P(1)$, then $P(\lVert X_N \rVert > \epsilon) \to 0$ for all $\epsilon > 0$, which implies $E\{Q_{N,\epsilon}\} \to 0$. For any $\delta > 0$, Markov's inequality yields
		\begin{equation*}
			P\left( Q_{N, \epsilon} > \delta \right) \le \frac{E\left\{ Q_{N, \epsilon} \right\}}{\delta} = \frac{P\left( \lVert X_N \rVert > \epsilon \right)}{\delta} \to 0.
		\end{equation*}
		This shows that $Q_{N,\epsilon} \to 0$ in $P$-probability, i.e., $X_N$ converges to zero in conditional probability.

		$\Leftarrow$.~~~For all $\epsilon, \delta > 0$, we have $P(Q_{N,\epsilon} > \delta) \to 0$. 
		Fix $\epsilon > 0$. Since $0 \le Q_{N,\epsilon} \le 1$, for any $\delta > 0$,
		\begin{equation*}
			P(\lVert X_N \rVert > \epsilon) = E\left\{ Q_{N, \epsilon} \right\} = E\left[ Q_{N, \epsilon} \cdot \left\{ \mathbf{1}(Q_{N, \epsilon} \le \delta) +  \mathbf{1}(Q_{N, \epsilon} > \delta) \right\} \right]  \le \delta + P\left( Q_{N, \epsilon} > \delta \right).
		\end{equation*}
		Letting $N \to \infty$ gives $\limsup_{N \to \infty} P(\lVert X_N \rVert > \epsilon) \le \delta$; since this holds for every $\delta > 0$, it follows that $P(\lVert X_N \rVert > \epsilon) \to 0$.
		Since this holds for every $\epsilon > 0$, we conclude that $\lVert X_N \rVert = o_P(1)$.
	\end{proof}

	\begin{restatable}{lemma}{lemmaslutskyconditional}
		\label{lemma:slutsky-conditional}
		Let $(\mathbb{E}, \lVert \cdot \rVert)$ be a normed vector space. Let $X_N$ and $Y_N$ be sequences of random elements in $\mathbb{E}$ defined on the product space of the observed data and bootstrap weights. If $X_N \overset{P}{\underset{W}{\leadsto}} \mathbb{G}$ and $\lVert Y_N \rVert = o_P(1)$, then $X_N + Y_N \overset{P}{\underset{W}{\leadsto}} \mathbb{G}$.
	\end{restatable}

	\begin{proof}
		Let $e(x, y) := \lVert x - y \rVert$ be the metric induced by the norm on $\mathbb{E}$. Fix any $h \in \operatorname{BL}_1$. By adding and subtracting $E_W h(X_N)$ and using the triangle inequality,
		\begin{align*}
			\left| E_W h(X_N + Y_N) - E h(\mathbb{G}) \right| & \le \left| E_W h(X_N) - E h(\mathbb{G}) \right| + E_W \left| h(X_N + Y_N) - h(X_N) \right|.
		\end{align*}
		Since $h$ is $1$-Lipschitz with respect to $e$ and satisfies $\lVert h \rVert_{\infty} \le 1$, we have
		\begin{equation*}
			\left| h(X_N + Y_N) - h(X_N) \right| \le e(X_N + Y_N, X_N) \wedge 2 = \lVert Y_N \rVert \wedge 2.
		\end{equation*}
		Therefore,
		\begin{equation*}
			\sup_{h \in \operatorname{BL}_1} \left| E_W h(X_N + Y_N) - E h(\mathbb{G}) \right| \le \sup_{h \in \operatorname{BL}_1} \left| E_W h(X_N) - E h(\mathbb{G}) \right| + E_W\left( \lVert Y_N \rVert \wedge 2 \right).
		\end{equation*}
		The first term on the right-hand side is $o_P(1)$ by the assumption that $X_N \overset{P}{\underset{W}{\leadsto}} \mathbb{G}$.

		For the second term, we have for any $\epsilon > 0$,
		\begin{equation*}
			E\left\{ \lVert Y_N \rVert \wedge 2 \right\} \le E\left[ (\lVert Y_N \rVert \wedge 2) \mathbf{1}(\lVert Y_N \rVert \le \epsilon) \right] + E\left[ (\lVert Y_N \rVert \wedge 2) \mathbf{1}(\lVert Y_N \rVert > \epsilon) \right] \le \epsilon + 2 P(\lVert Y_N \rVert > \epsilon),
		\end{equation*}
		so $E\{ \lVert Y_N \rVert \wedge 2 \} \to 0$.
		By the law of iterated expectations and Markov's inequality, for any $\eta > 0$,
		\begin{align*}
			P\left( E_W\left\{ \lVert Y_N \rVert \wedge 2 \right\} > \eta \right) \le \frac{E\left[ E_W\left\{ \lVert Y_N \rVert \wedge 2 \right\} \right]}{\eta} = \frac{E\left\{ \lVert Y_N \rVert \wedge 2 \right\}}{\eta} \to 0,
		\end{align*}
		so $E_W(\lVert Y_N \rVert \wedge 2) = o_P(1)$. Combining the previous displays yields
		\begin{equation*}
			\sup_{h \in \operatorname{BL}_1} \left| E_W h(X_N + Y_N) - E h(\mathbb{G}) \right| \overset{P}{\to} 0.
		\end{equation*}
		Finally, since $X_N$ and $Y_N$ are random elements on the product space, $h(X_N + Y_N)$ is measurable for each $h \in \operatorname{BL}_1$, so $h(X_N + Y_N)^* = h(X_N + Y_N)_*$ almost surely and the majorant/minorant condition in the definition of $\overset{P}{\underset{W}{\leadsto}}$ holds trivially. This proves $X_N + Y_N \overset{P}{\underset{W}{\leadsto}} \mathbb{G}$.
	\end{proof}

	\begin{restatable}{lemma}{lemmaconsistencygeneralpluginestimator}
		\label{lemma:consistency-general-plugin-estimator}
		Under the conditions (i) and (vi) of Theorem \ref{theorem:bootstrap-for-data-adaptive-estimands}, we have that $g \mapsto \Psi^{g}(P_0)$ is continuous at $g^*$. This in turn implies that $\Hat{\Psi}^{g_N}(P_N) \overset{P}{\to} \Psi^{g_N}(P_0)$ if $\Hat{\Psi}^{g_N}(P_N) \overset{P}{\to} \Psi^{g^*}(P_0)$ and $g_N \overset{P}{\to} g^*$.
	\end{restatable}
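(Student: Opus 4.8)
The plan is to prove the deterministic continuity statement first and then extract the stochastic consequence by a continuous-mapping argument. Recall that $\Psi^g(P_0) = \boldsymbol{\theta}_0^g$ is by definition the root of $\boldsymbol{\theta} \mapsto \Phi^g(\boldsymbol{\theta}) = P_0 \phi_{\boldsymbol{\theta}}^g$, so establishing continuity of $g \mapsto \Psi^g(P_0)$ at $g^*$ amounts to showing that $\boldsymbol{\theta}_0^{g_n} \to \boldsymbol{\theta}_0^{g^*}$ whenever $d(g_n, g^*) \to 0$; since $(\mathbb{D}, d)$ is a metric space, this sequential formulation is equivalent to continuity at $g^*$.

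First I would fix an arbitrary sequence $(g_n)$ in $\mathcal{G}$ with $d(g_n, g^*) \to 0$ and examine $\Phi^{g^*}(\boldsymbol{\theta}_0^{g_n})$. Because $\boldsymbol{\theta}_0^{g_n} \in \Theta$ is a root of $\Phi^{g_n}$, we have $\Phi^{g_n}(\boldsymbol{\theta}_0^{g_n}) = \boldsymbol{0}$, so condition (vi) applied at $\boldsymbol{\theta} = \boldsymbol{\theta}_0^{g_n}$ gives
\begin{equation*}
	\lVert \Phi^{g^*}(\boldsymbol{\theta}_0^{g_n}) \rVert_2 = \lVert \Phi^{g^*}(\boldsymbol{\theta}_0^{g_n}) - \Phi^{g_n}(\boldsymbol{\theta}_0^{g_n}) \rVert_2 \le L \cdot d(g^*, g_n) \to 0.
\end{equation*}
Thus $\Phi^{g^*}(\boldsymbol{\theta}_0^{g_n}) \to \boldsymbol{0}$, and invoking condition (i) with $g = g^*$ for the sequence $\boldsymbol{\theta}_0^{g_n}$ yields $\boldsymbol{\theta}_0^{g_n} \to \boldsymbol{\theta}_0^{g^*}$. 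This is precisely the continuity of $g \mapsto \Psi^g(P_0)$ at $g^*$.

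For the second claim, recall from the setup of Theorem \ref{theorem:bootstrap-for-data-adaptive-estimands} that $d(g_N, g^*) = o_P(1)$. Since $g \mapsto \Psi^g(P_0)$ is a deterministic map that is continuous at the non-random point $g^*$, the continuous mapping theorem (in its version for convergence in probability to a constant) gives $\Psi^{g_N}(P_0) \overset{P}{\to} \Psi^{g^*}(P_0)$. Combining this with the hypothesized $\Hat{\Psi}^{g_N}(P_N) \overset{P}{\to} \Psi^{g^*}(P_0)$ and the decomposition
\begin{equation*}
	\Hat{\Psi}^{g_N}(P_N) - \Psi^{g_N}(P_0) = \left\{ \Hat{\Psi}^{g_N}(P_N) - \Psi^{g^*}(P_0) \right\} + \left\{ \Psi^{g^*}(P_0) - \Psi^{g_N}(P_0) \right\},
\end{equation*}
both bracketed terms are $o_P(1)$, whence $\Hat{\Psi}^{g_N}(P_N) \overset{P}{\to} \Psi^{g_N}(P_0)$.

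The argument is short, and the only subtle points are bookkeeping rather than substance. The first is confirming that condition (i)—phrased for a sequence $\boldsymbol{\theta}_N$ at a fixed index $g$—is legitimately applied to the deterministic sequence $\boldsymbol{\theta}_0^{g_n}$ with $g = g^*$; this is immediate because condition (i) holds for every $g \in \mathcal{G}$ and imposes no restriction on how the sequence arises. The second is that the continuous mapping theorem is available precisely because the limit $g^*$ is non-random, so continuity at that single point (rather than almost-everywhere continuity) suffices. I expect the verification that $\boldsymbol{\theta}_0^{g_n}$ stays in $\Theta$ so that condition (vi) applies to be the only place where one must appeal to the theorem's standing assumptions, and this holds by the very definition of the map $g \mapsto \boldsymbol{\theta}_0^g$ on $\mathcal{G}$.
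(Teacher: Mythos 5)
Your proposal is correct and follows essentially the same route as the paper's own proof: both use condition (vi) at the root $\boldsymbol{\theta}_0^{g_n}$ (where $\Phi^{g_n}(\boldsymbol{\theta}_0^{g_n}) = \boldsymbol{0}$) to conclude $\Phi^{g^*}(\boldsymbol{\theta}_0^{g_n}) \to \boldsymbol{0}$, then invoke condition (i) at $g^*$ to get $\boldsymbol{\theta}_0^{g_n} \to \boldsymbol{\theta}_0^{g^*}$, and finally apply the continuous mapping theorem to pass to the stochastic statement. Your explicit decomposition of $\Hat{\Psi}^{g_N}(P_N) - \Psi^{g_N}(P_0)$ into two $o_P(1)$ terms just spells out the final step the paper states in one line.
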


	\begin{proof}
		If $\Psi^{g}(P_0)$ is continuous in $g$ at $g^*$, then it automatically follows that $\Psi^{g_N}(P_0) \overset{P}{\to} \Psi^{g^*}(P_0)$ whenever $g_N \overset{P}{\to} g^*$ by the continuous mapping theorem, and hence also $\Hat{\Psi}^{g_N}(P_N) \overset{P}{\to} \Psi^{g_N}(P_0)$ if $\Hat{\Psi}^{g_N}(P_N) \overset{P}{\to} \Psi^{g^*}(P_0)$.
		We thus only need to show that this continuity holds under conditions (i) and (vi) of Theorem \ref{theorem:bootstrap-for-data-adaptive-estimands}.

		Let $\boldsymbol{\theta}_0^{g^*} := \Psi^{g^*}(P_0)$ where, by definition of the estimand and condition (i) of Theorem \ref{theorem:bootstrap-for-data-adaptive-estimands}, $P_0 \phi_{\boldsymbol{\theta}_0^{g^*}}^{g^*} = 0$ and $P_0 \phi_{\boldsymbol{\theta}}^{g^*} \neq 0$ for $\boldsymbol{\theta} \ne \boldsymbol{\theta}_0^{g^*}$. 
		Let $g_N$ be a sequence of functions in $\mathcal{G}$ such that $d(g_N, g^*) \to 0$ and let $\boldsymbol{\theta}_N := \Psi^{g_N}(P_0)$ such that $P_0 \phi_{\boldsymbol{\theta}_N}^{g_N} = 0$. 
		We now have by condition (vi) of Theorem \ref{theorem:bootstrap-for-data-adaptive-estimands} that
		\begin{align*}
			\left| P_0 \phi^{g_N}_{\boldsymbol{\theta}_N} - P_0 \phi^{g^*}_{\boldsymbol{\theta}_N} \right| & \le C \cdot d(g_N, g^*) \to 0.
		\end{align*}
		Because $P_0 \phi^{g_N}_{\boldsymbol{\theta}_N} = 0$ for all $N$, we have that $P_0 \phi^{g^*}_{\boldsymbol{\theta}_N} \to 0$. By condition (i) of Theorem \ref{theorem:bootstrap-for-data-adaptive-estimands}, this implies that $\boldsymbol{\theta}_N \to \boldsymbol{\theta}_0^{g^*}$. This shows that $\Psi^{g_N}(P_0) \to \Psi^{g^*}(P_0)$ whenever $d(g_N, g^*) \to 0$; hence, $g \mapsto \Psi^{g}(P_0)$ is continuous at $g^*$.
	\end{proof}

	\appendixclearpage
	\section{Simulations}\label{appendix:simulation-results}

	\subsection{Data-Generating Mechanism}

	We consider two data-generating mechanisms: a proof-of-concept scenario and a vaccine scenario. The proof-of-concept scenario is meant to illustrate the potential advantage of using an estimated surrogate index over the surrogate endpoint itself, but is not very realistic. The vaccine scenario is meant to reflect a set of vaccine trials and is more realistic. We will show that our methods work in principle in the proof-of-concept scenario and that they work reasonably well in practice in the realistic scenario.

	\subsubsection{Proof-of-Concept Scenario}
	
	We first describe the data-generating mechanism for a fixed trial. Next, we describe how between-trial heterogeneity is simulated and how this relates to the comparability and surrogacy assumptions.
	
	\paragraph{Within-Trial Data-Generating Mechanism}
	
	We observe in each trial one binary baseline covariate $X_1$, the treatment indicator $Z$, the univariate continuous surrogate $S$, and the univariate continuous clinical endpoint $Y$.
	The within-trial observed-data distribution is determined by the following set of equations:
	\begin{equation}\label{eq:proof-of-concept-dgm-within-trial}
		\begin{cases}
			X_1 \sim \text{Bernoulli}(p) \\
			S = \beta_{S, Z} \cdot Z + \epsilon_S \text{ where } \epsilon_S \sim \mathcal{N}(0, 1) \\
			Y = - 0.25 \cdot S \cdot X_1 + S \cdot (1 - X_1) + \beta_{Y, Z} \cdot Z + \beta_{Y, S^2} \cdot S^2 + \epsilon_Y \text{ where } \epsilon_Y \sim \mathcal{N}(0, 1)
		\end{cases}
	\end{equation}
	where the first $n /2$ subjects are assigned to treatment $Z = 1$ and the last $n /2$ subjects to treatment $Z = 0$. We let $n = 2000$.
	Further note that the surrogate is negatively associated with the clinical endpoint if $X_1 = 1$, but positively associated if $X_1 = 0$. This is unrealistic: we expect the direction of this association to be the same for all participant subgroups for a good surrogate. Nonetheless, our methods can be applied in this artificial scenario.
		
	\paragraph{Trial-Level Random Parameters}
	
	In (\ref{eq:proof-of-concept-dgm-within-trial}), several parameters vary across trials and are treated as random parameters. We consider two scenarios for the distributions of these trial-level parameters: one with moderate violations of the surrogacy and comparability assumptions, and another with only slight violations.

	First, the trial-level parameter $p$, which determines the distribution of the baseline covariate $X_1$, is sampled from a uniform distribution: $p \sim \text{Uniform}(0.30, 0.70)$. This reflects differences in participant populations across trials.
	Second, the trial-level parameter $\beta_{S, Z}$, which determines the treatment effect on the surrogate, is sampled as follows: $\beta_{S, Z} \sim \mathcal{N}(0.5, 0.3^2)$. The variance of this distribution captures the between-trial variability in treatment effects on the surrogate.
	Third, the trial-level parameter $\beta_{Y, Z}$ determines the treatment effect on the clinical endpoint, adjusted\footnote{Because we have adjusted here for post-randomization covariates, this parameter lacks a causal interpretation.} for the surrogate and the baseline covariate. It is sampled as follows:
	$\beta_{Y, Z} \sim \mathcal{N}(0, \sigma^2_{\beta_{Y, Z}})$. This parameter directly relates to the surrogacy assumption, which holds if $\beta_{Y, Z} = 0$. The magnitude of the surrogacy violation depends on the variance $\sigma^2_{\beta_{Y, Z}}$.
	Finally, the trial-level parameter $\beta_{Y, S^2}$ represents a quadratic effect of the surrogate on the clinical endpoint. It is sampled as follows:
	$\beta_{Y, S^2} \sim \mathcal{N}(0, \sigma^2_{\beta_{Y, S^2}})$. This parameter directly relates to the comparability assumption, which holds if $\beta_{Y, S^2}$ (and $\beta_{Y, Z}$) are constant across all trials. The magnitude of the comparability violation depends on the variance $\sigma^2_{\beta_{Y, S^2}}$.

	In the first scenario, we set $\sigma^2_{\beta_{Y, Z}} = \sigma^2_{\beta_{Y, S^2}} = 0.1^2$, representing moderate violations of surrogacy and comparability. In the second scenario, we set $\sigma^2_{\beta_{Y, Z}} = \sigma^2_{\beta_{Y, S^2}} = 0.05^2$, representing slight violations of these assumptions.

	We illustrate the data-generating mechanism---and the implied violations of the comparability assumption---for the proof-of-concept scenario in Figure \ref{fig:proof-of-concept-scenario}.
	This figure is based on simulated individual-participant data from six trials for (i) moderate and (ii) slight violations of the surrogacy and comparability assumptions.
	In Figure \ref{fig:proof-of-concept-scenario} (a), the surrogate is plotted against the clinical endpoint separately by trial and by the degree of violations of the surrogacy and comparability assumptions.
	Smooth regression curves, stratified by $X_1$, were added to further illustrate that the sign of the association between the surrogate and the clinical endpoint depends on $X_1$.
	We show the smooth regression curves in a single plot in Figure \ref{fig:proof-of-concept-scenario} (b) to demonstrate more clearly that the comparability assumption is more severely violated in the first scenario than in the second scenario. 

	\begin{figure}
		\centering
		\includegraphics[width=0.8\textwidth]{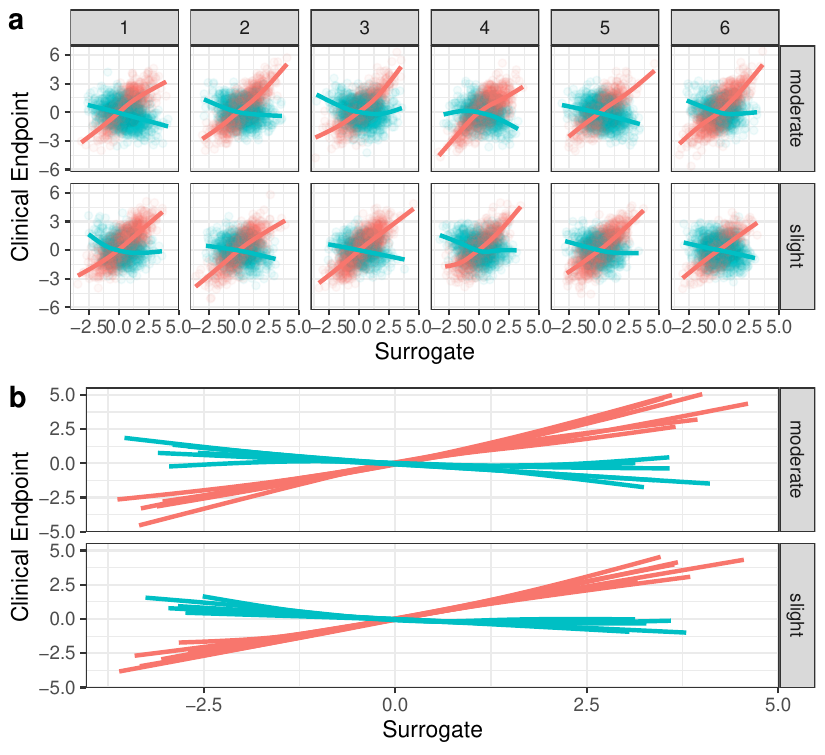}
		\caption{Proof-of-concept scenario: individual-participant data in six simulated trials under moderate and slight violations of the surrogacy and comparability assumptions. Dots and smooth regression curves are colored by the value for $X_1$.
		Panel a shows the simulated individual-participant data separately by trial (columns) and degree of violations of surrogacy and comparability (rows). Note that these plots pool observations across treatment groups.
		Smooth regression curves (full lines) are fitted to these data in each trial, stratified by $X_1$. Panel b shows the smooth regression curves from panel a in a single plot for each scenario.}
		\label{fig:proof-of-concept-scenario}
	\end{figure}

	\paragraph{Implementation}

	The surrogate index is estimated through a linear regression model with main effects for $S$ and $X_1$, the corresponding interaction, and a quadratic effect for $S$. The parameters are estimated by pooling data from the $N$ simulated trials in each replication of the simulation study. 
	The data-adaptive target parameter $\rho^{g_N}$ is approximated in each replication by sampling $2 \cdot 10^4$ trials with sample size $200$ and estimating $\rho^{g_N}_{\text{trial}}$ where $g_N$ is the previously estimated surrogate index and where the trial-level treatment effects are estimated by corresponding sample mean differences\footnote{Sample mean differences are unbiased estimators of the population mean differences; the covariance estimator based on the sample covariance is also unbiased. So, there is no need for large within-trial sample sizes when approximating $\rho^{g_N}_{\text{trial}}$ through Monte Carlo integration.}. 

	The trial-level treatment effects on the surrogate, the estimated surrogate index, and the clinical endpoint are estimated through fitting a linear regression model for $S$ and $Y$ with main effects for $X_1$ and $Z$.
	The estimated coefficients for the treatment indicator are consistent and asymptotically normal estimators of the treatment effects $\alpha^g$ and $\beta$ in terms of mean differences. The covariance matrices of $(\Hat{\alpha}^g, \Hat{\beta})'$ and $(\Hat{\alpha}^{g_N}, \Hat{\beta})'$ are estimated through the sandwich estimator using the \texttt{vcovHC()} function from the \textit{sandwich} R package with \texttt{type = "HC0"} \parencite{zeileis2004, sandwich}. This procedure is generally more efficient than simple mean differences but valid even if the linear model is misspecified \parencite{van2024covariate}. The sandwich estimator for the covariance matrix is consistent but not finite-sample unbiased; therefore, our justification relies on the alternative asymptotic regime described in Appendix \ref{appendix:alternative-asymptotic-framework}.

	\subsubsection{Vaccine Scenario}

	We first describe the data-generating mechanism for a fixed trial. Next, we describe how between-trial heterogeneity is simulated and how this relates to the comparability and surrogacy assumptions.
	
	\paragraph{Within-Trial Data-Generating Mechanism}
	
	We observe in each trial three baseline covariates: $X_1$ is an ``internal'' measure of exposure (e.g., the average number of contacts with persons with transmittable virus per week), $X_2$ is an external covariate associated with the amount of exposure (e.g., calendar time), and $X_3$ indicates whether the subject is naive (i.e., whether they have been previously infected). The treatment indicator $Z$ is as before. Furthermore, $S$ is the log10 neutralizing antibody titer measured 20 days after vaccination, and $Y$ is the infectious disease outcome 80 days after measuring $S$ (e.g., confirmed symptomatic infection).
	The within-trial observed-data distribution is determined by the following set of equations:
	\begin{equation}\label{eq:realistic-dgm-within-trial}
		\begin{cases}
			X_1 \sim \mathcal{N}(\mu_1, 1) \\
			X_2 \sim \mathcal{N}(\mu_2, 1) \\
			X_3 \sim \text{Bernoulli}(p) \\
			S = \beta_{S, Z} \cdot Z + \epsilon_S \text{ where } \epsilon_S \sim \mathcal{N}(0, 1) \\
			P(Y = 1 \mid \cdot) = \text{expit} \left\{ \beta_{Y, Z} \cdot Z + (1 + \beta_{Y, S} + X_3) \cdot S + X_1 - 2 \cdot X_3 \right\} \cdot 0.10 \cdot \exp \left(\sin(X_2) + 1 \right)
		\end{cases}
	\end{equation}
	where the first $n /2$ subjects are assigned to treatment $Z = 1$ and the last $n /2$ subjects to treatment $Z = 0$. We let $n = 5000$.
	
	\paragraph{Trial-Level Random Parameters}
	
	As for the proof-of-concept scenario, we consider two scenarios for the distributions of the trial-level parameters: one with moderate violations of the surrogacy and comparability assumptions and another with only slight violations.

	First, the trial-level parameters $\mu_1$, $\mu_2$, and $p$ reflect population differences across trials. These parameters are sampled as follows:
	\begin{equation*}
		\mu_1 \sim \text{Uniform}(-1, 1), \quad \mu_2 \sim \text{Uniform}(-1, 1), \quad p \sim \text{Uniform}(0.25, 0.75).
	\end{equation*}
	Second, the trial-level parameter $\beta_{S, Z}$, which determines the treatment effect on the surrogate, is sampled as follows: $\beta_{S, Z} \sim \mathcal{N}(1, 0.5^2)$.
	Third, the trial-level parameter $\beta_{Y, Z}$ determines the treatment effect on the clinical endpoint, adjusted for the surrogate and baseline covariates, and is sampled as follows: $\beta_{Y, Z} \sim \mathcal{N}(0, \sigma^2_{\beta_{Y, Z}})$. The surrogacy assumption holds if $\beta_{Y, Z} = 0$.
	Fourth, the trial-level parameter $\beta_{Y, S}$ corresponds to the association between the surrogate and the clinical endpoint and is sampled as $\beta_{Y, S} \sim \mathcal{N}(0, \sigma^2_{\beta_{Y, S}})$. This parameter directly relates to the comparability assumption, which holds if $\beta_{Y, S}$ (and $\beta_{Y, Z}$) are constant across all trials.

	In the first scenario, we set $\sigma^2_{\beta_{Y, Z}} = \sigma^2_{\beta_{Y, S}} = 0.30^2$, representing moderate violations of surrogacy and comparability. In the second scenario, we set $\sigma^2_{\beta_{Y, Z}} = \sigma^2_{\beta_{Y, S}} = 0.15^2$, representing slight violations of these assumptions.

	\begin{remark}
		In these simulations, we assume that $S$ is measured for all subjects; however, in practice, this may not always be the case. Indeed, $S$ is often measured some time after randomization, and if a subject is infected beforehand, $S$ is generally no longer well defined. A similar issue arises in the COVID-19 trials we analyze in Section \ref{sec:data-application} of the main text.
	\end{remark}

	We illustrate the data-generating mechanism---and the implied violations of the comparability assumption---for the vaccine scenario in Figure \ref{fig:vaccine-scenario}. This figure is based on individual-participant data from six simulated trials for (i) moderate and (ii) slight violations of the surrogacy and comparability assumptions.
	In Figure \ref{fig:vaccine-scenario} (a), the estimated surrogate index is plotted against the clinical endpoint separately by trial and by the degree of violations of the surrogacy and comparability assumptions. Smooth regression curves were added: under the surrogacy and comparability assumptions, these curves should be close to the identity line for all trials; deviations from this indicate a violation of comparability. 
	We show the smooth regression curves in a single plot in Figure \ref{fig:vaccine-scenario} (b) to demonstrate more clearly that the comparability assumption is more severely violated in the first scenario than in the second scenario. Note that variability in the curves for the $[0, 0.2]$-range of the estimated surrogate index is most important because most values of the estimated surrogate index are in this range.

	\begin{figure}
		\centering
		\includegraphics[width=0.8\textwidth]{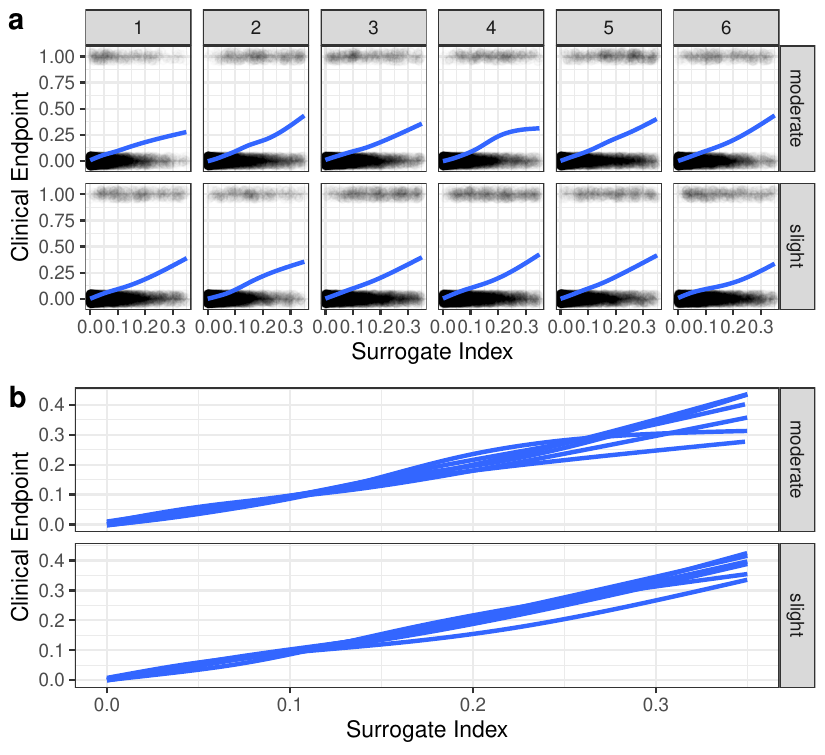}
		\caption{Vaccine scenario: individual-participant data in six simulated trials under moderate and slight violations of the surrogacy and comparability assumptions.
		Panel a shows scatter plots for the estimated surrogate index against the clinical endpoint, separately by trial (columns) and degree of violations of surrogacy and comparability (rows). Note that vertical jittering is used to avoid overplotting; in reality the clinical endpoint is either 0 or 1. 
		Smooth regression curves are fitted to these data in each trial. Panel b shows the smooth curves from panel a in the same plot.}
		\label{fig:vaccine-scenario}
	\end{figure}

	\paragraph{Implementation}

	The treatment effect on the clinical endpoint is defined---as in the data application---as the log relative risk and similarly for the treatment effect on the surrogate index. The treatment effect on the original surrogate is the mean difference. 
	These treatment effects are estimated using sample means and the covariance matrix is estimated through the delta method. 
	The surrogate index is estimated (pooling all data from the simulated $N$ trials in each replication) through the SuperLearner with several logistic regression models and a generalized additive logistic regression model as candidate learners \parencite{van2007super}. The meta-learner uses leave-one-trial-out cross-validation with the binomial log-likelihood as the loss function. The SuperLearner is implemented using the \textit{sl3} R package \parencite{coyle2021sl3-rpkg}. 

	The data-adaptive target parameter $\rho^{g_N}_{\text{trial}}$ is approximated by sampling $5 \cdot 10^3$ trials\footnote{We had to limit the number of trials here to ensure that the simulations remain computationally feasible.} with sample size $3000$ and estimating $\rho^{g_N}_{\text{trial}}$ where $g_N$ is the previously estimated surrogate index and where the trial-level treatment effects are estimated using sample means and the corresponding covariance matrices through the delta method.
	
	\subsection{Results}

	In this section, we present the results of the simulations. In each scenario, we simulated $500$ data sets. This number of replications is on the lower end of what is typically used in simulation studies, but it is sufficient to illustrate the main points of this paper. More replications would have been computationally too expensive because each replication requires us to generate individual-participant data from multiple trials and, in each replication, we have to numerically approximate the data-adaptive estimand $\rho^{g_N}_{\text{trial}}$ through Monte Carlo integration.

	We focus on estimation of and inference about the $\rho_{\text{trial}}$ ($\rho^{g_N}_{\text{trial}}$) parameter (the Pearson correlation between $\alpha$ ($\alpha^{g_N}$) and $\beta$). This parameter is a function of the covariance matrix parameters; hence, its estimate is computed as this function evaluated in the covariance parameter estimates. The standard errors for the estimated correlation parameters are obtained through the delta method using the sandwich estimator. Wald confidence intervals are obtained by using the delta method on Fisher's Z scale and transforming the limits back to the original scale (which ensures confidence limits in the unit interval). If the estimate is larger than 0.999, we use the Wald confidence interval on the $\rho$ scale (so without any transformation) because Fisher's Z-transformation is not defined for values larger than 1 and may be unstable for values very close to 1.
	Another set of confidence intervals is obtained through the BCa confidence intervals based on the multiplier bootstrap where the weights are sampled from a unit-exponential distribution. Throughout, we aim for a nominal coverage of 95\%.

	As mentioned in Section \ref{sec:simulations}, two finite-sample adjustments are always used in the simulations. First, in the expression for the plug-in estimator (\ref{eq:plug-in-estimator}), the variance of $(\Hat{\alpha}^g, \Hat{\beta})'$ is estimated by dividing by $N - 1$ instead of $N$. Second, the sandwich estimator in (\ref{eq:sandwich-estimator}) is adjusted by multiplying it with $(N - 1) / N$ and we use the $t$ distribution with $N - 1$ degrees of freedom for the Wald confidence intervals (instead of the standard normal distribution).

	The variance matrix estimator in (\ref{eq:plug-in-estimator}) may yield estimates that are not positive definite due to the measurement error correction. As a result, the corresponding estimated trial-level correlation might fall outside the valid range of $[-1, 1]$. To address this, we consider an additional finite-sample correction: if the estimated covariance matrix is not positive definite, we replace it with the nearest positive definite matrix in the Frobenius norm. This is achieved using the \texttt{nearPD()} function from the \textit{Matrix} R package. 

	The simulations also include coverage results for the Bayesian credible intervals based on the Bayesian model described in Section \ref{appendix:bayesian}. The Bayesian analyses were performed using the \textit{rstan} package with 1 chain and 10,000 iterations \parencite{RStan}. The first 5,000 iterations were discarded as burn-in.
	As already noted before, this Bayesian approach is asymptotically invalid (i.e., the posterior distribution may not concentrate around $\rho^{g_N}_{\text{trial}}$ or $\rho^{g^*}_{\text{trial}}$ as $N \to \infty$), but is included in the simulations because it is expected to be more stable for small $N$ and may outperform the asymptotically valid (but finite-sample unstable) non-parametric estimator and corresponding confidence intervals.

	\paragraph{Data-Adaptive Estimands}

	The distributions of the data-adaptive estimands are shown in Figure \ref{fig:distribution-estimands}. These figures show that there is little variability in the data-adaptive estimands in each setting. Furthermore, the data-adaptive estimands are close to 1 in the scenarios with only slight violations of the surrogacy and comparability assumptions. In all scenarios, the data-adaptive estimands are also higher than their non-data-adaptive counterparts. This illustrates the potential advantage of using an estimated surrogate index over the univariate surrogate $S$.

	\begin{figure}
		\centering
		\includegraphics[width=0.8\textwidth]{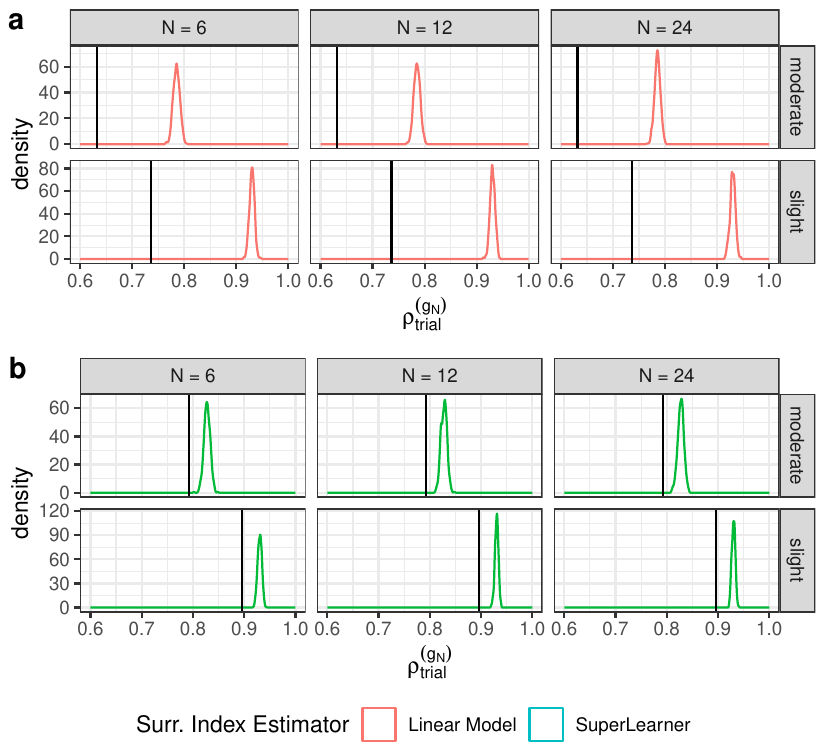}
		\caption{Distribution of the data-adaptive estimands $\rho_{\text{trial}}^{g_N}$ for (a) the proof-of-concept scenario and (b) the vaccine scenario. The vertical lines represent the values for the corresponding $\rho_{\text{trial}}$ for the original surrogate. The columns correspond to the different numbers of independent trials $N$ and the rows correspond to the degree of violation of the surrogacy and comparability assumptions, moderate or slight violations. Note that some of the apparent variability in $\rho_{\text{trial}}^{g_N}$ in these plots is due to numerical error in the Monte Carlo integration.}
		\label{fig:distribution-estimands}
	\end{figure}

	\paragraph{Bias and Estimation Error}

	Figure \ref{fig:mse-mean-bias} (a) shows the median bias of the estimated trial-level correlation coefficients for the corresponding estimands (which are not data-adaptive for the original surrogate and data-adaptive for the estimated surrogate indices). This figure shows that the estimator is median biased, but the bias decreases as $N$ increases. 

	\begin{figure}
		\centering
		\includegraphics[width=0.8\textwidth]{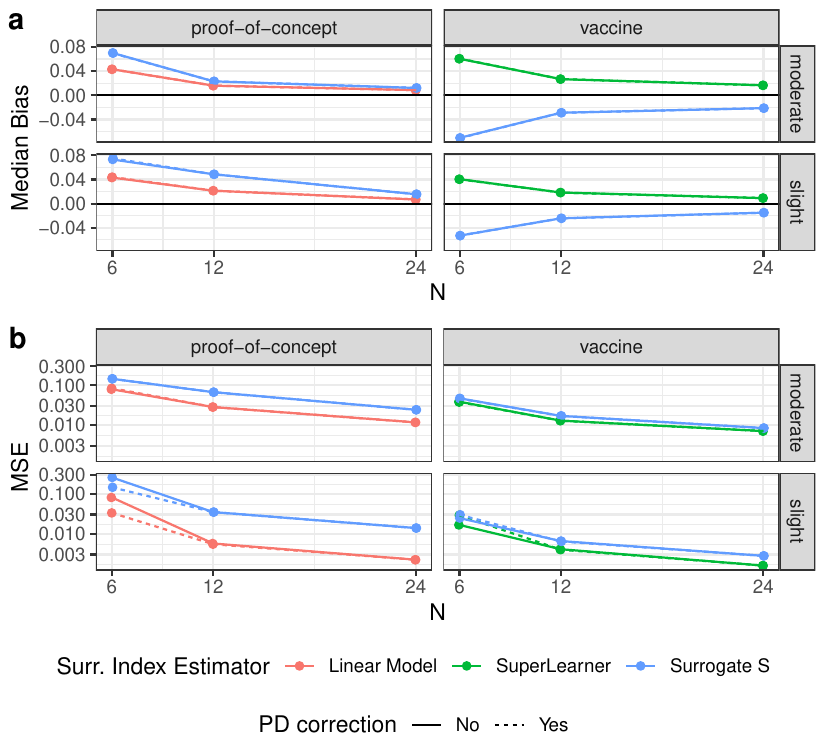}
		\caption{Median bias and mean squared error (MSE) of the estimated trial-level correlation coefficients for the corresponding data-adaptive or non-data-adaptive estimands as a function of the number of independent trials $N$. The columns correspond to the simulation scenario (proof of concept or vaccine) and the rows correspond to the degree of violation of the surrogacy and comparability assumptions, moderate or slight violations. PD: positive definite.}
		\label{fig:mse-mean-bias}
	\end{figure}

	Figure \ref{fig:mse-mean-bias} (b) shows the mean squared error (MSE) of the estimated trial-level correlation coefficients for the corresponding estimands. This figure shows that, as expected, the MSE decreases as $N$ increases. Even for $N = 6$ the MSE is small enough to consider the estimator as useful in practice. For a larger number of trials, the MSE is very small; hence, $\rho_{\text{trial}}^{g_N}$ can be estimated with high precision. Note, however, that this high precision is related to the true value being close to 1 in some scenarios. For instance, in the proof-of-concept scenario with moderate violations of the surrogacy and comparability assumptions, the MSE with the original surrogate is considerably larger than the MSE in other settings where the estimand is closer to one. 

	Figure \ref{fig:mse-mean-bias} (b) also shows that correcting the estimated covariance matrix to the closest positive definite matrix leads to a smaller MSE in some settings, but overall makes little difference. 
	
	\paragraph{Coverage and Inference}

	Figure \ref{fig:coverage-combined} (a) and (b) show the coverage of the confidence intervals for the estimated trial-level correlation coefficients for the corresponding estimands, respectively, for the Wald and BCa confidence intervals. 
	For $N = 6$, the empirical coverage is considerably lower than the nominal 95\% for both the Wald and BCa intervals. Which one of these two has closer-to-nominal coverage for $N = 6$ depends on the scenario.
	These deviations from nominal coverage are expected because (i) $N = 6$ is a very small sample size, (ii) the estimator is non-parametric, and (iii) the correlation parameter is a non-linear function of the covariance parameters.
	As $N$ increases to $24$, the empirical coverage approaches the nominal 95\% coverage for both the Wald intervals, but the BCa intervals still have some undercoverage.

	\begin{figure}
		\centering
		\includegraphics[width=0.8\textwidth]{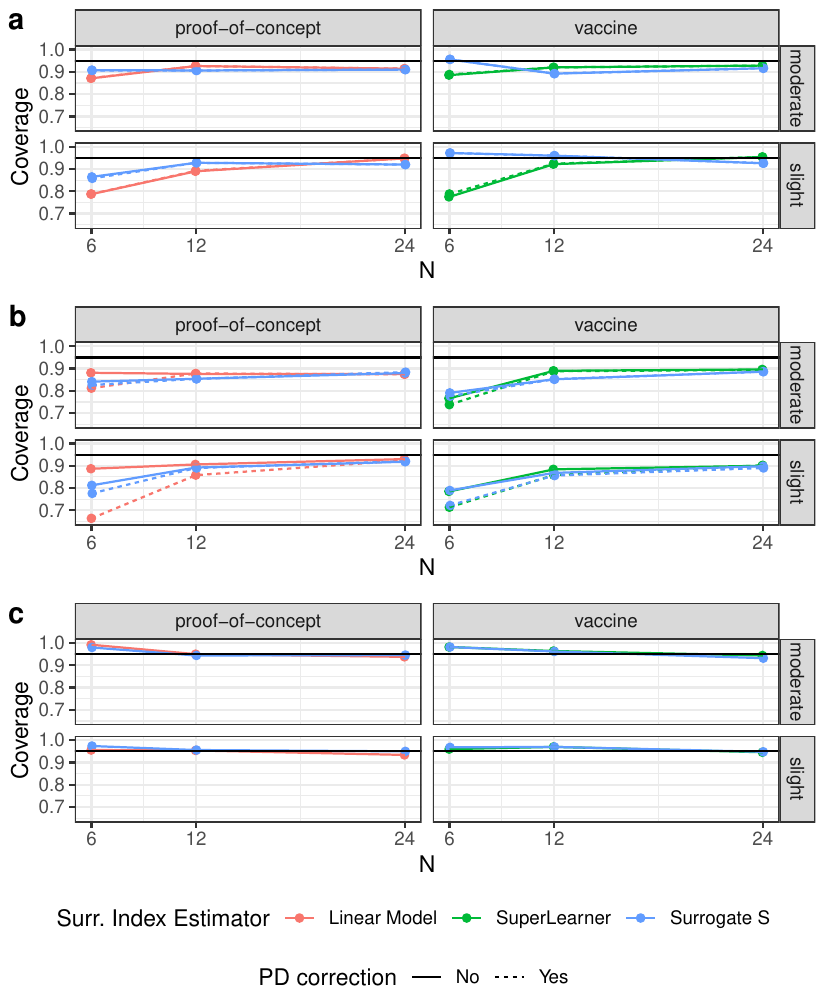}
		\caption{Empirical coverage of the confidence intervals for (a) the Wald, (b) BCa and (c) credible intervals, for the trial-level correlation coefficients as a function of the number of independent trials $N$. The horizontal line in each subfigure represents the nominal 95\% coverage. 
		The columns correspond to the simulation scenario (proof of concept or vaccine) and the rows correspond to the degree of violation of the surrogacy and comparability assumptions, moderate or slight violations. PD: positive definite.}
		\label{fig:coverage-combined}
	\end{figure}

	Figure \ref{fig:coverage-combined} (c) shows the empirical coverage of the Bayesian credible intervals. This figure shows that the Bayesian credible intervals have almost nominal coverage across all scenarios and sample sizes. As discussed before, the Bayesian approach is not asymptotically valid and the coverage will, therefore, converge to zero as $N \to \infty$. However, this does not preclude the Bayesian approach from being practically useful in small-sample size settings (e.g., for $N = 6$) where non-parametric methods may be unstable.



	\subsection{Verification of the Asymptotics and its Failure}\label{appendix:asymptotic-failure-simulations}

	To illustrate the alternative asymptotic framework of Appendix \ref{appendix:alternative-asymptotic-framework} and to verify the theoretical asymptotic results, we considered an additional simulation study where the number of independent trials $N$ and the within-trial sample sizes $n$ increase, at varying rates. We consider the same data-generating mechanism as in the proof-of-concept scenario (with moderate violations of surrogacy and comparability), but we set $N$ and $n$ to values in $\{100, 500, 1000, 2000, 5000\}$. We only consider the non-data-adaptive estimand here; so, the trial-level surrogacy for the original surrogate.
	We use $100$ MC replications for each setting.

	The trial-level treatment effects on the surrogate and the clinical endpoint are estimated, as before, through fitting a linear regression model for $S$ and $Y$ with main effects for $X_1$ and $Z$.
	However, we now add a flexible term for an extra random noise variable \texttt{X2} to the linear regression model, where the noise is generated independently. 
	More specifically, we add \texttt{bs(X2, df = 8)} to the linear regression model for the surrogate and clinical endpoints. This introduces many additional parameters in the linear regression model, which are not needed. We further use the same unadjusted sandwich covariance estimator as before. By adjusting for additional noise, we expect this covariance estimator to be downwardly biased in finite samples, but it will still be consistent. 

	\begin{remark}
		We have considered a bad covariance estimator to illustrate the failure of the fixed $n$, increasing $N$ asymptotic framework used in the main text. This covariance estimator (i.e., the unadjusted sandwich covariance estimator) should not be used in practice for small sample sizes because there are many finite-sample adjustments readily available \parencite{zeileis2004}. With those finite-sample adjustments, the covariance estimators are almost unbiased, even in small samples, and inference ``still works'' in practice when $N \gg n$. This just shows that the rates derived in Appendix \ref{appendix:alternative-asymptotic-framework} are probably not relevant in most realistic scenarios because most reasonable estimators that are (only) asymptotically valid, tend to also have a small bias in finite samples. 
	\end{remark}

	We first consider the z-statistic computed as $z = (\Hat{\rho}_{\text{trial}} - \rho_{\text{trial}}) / \widehat{\mathrm{SE}}$, where $\widehat{\mathrm{SE}}$ is the standard error of the estimated trial-level correlation coefficient. If the asymptotics work, then this statistic should follow a standard normal distribution. If the asymptotics fail, which we expect if the trial-level treatment effect estimators and their covariance estimators are biased, and the within-trial sample size $n$ does not increase sufficiently fast as a function of the number of trials $N$, then the z-statistic should not follow a standard normal distribution.
	The estimated distribution of the z-statistic is shown in Figure \ref{fig:failure-asymptotics} for all combinations of $n$ and $N$. This figure shows that the z-statistic is approximately standard normal in all scenarios except when $N \gg n$. 

	\begin{figure}
		\centering
		\includegraphics[width=0.8\textwidth]{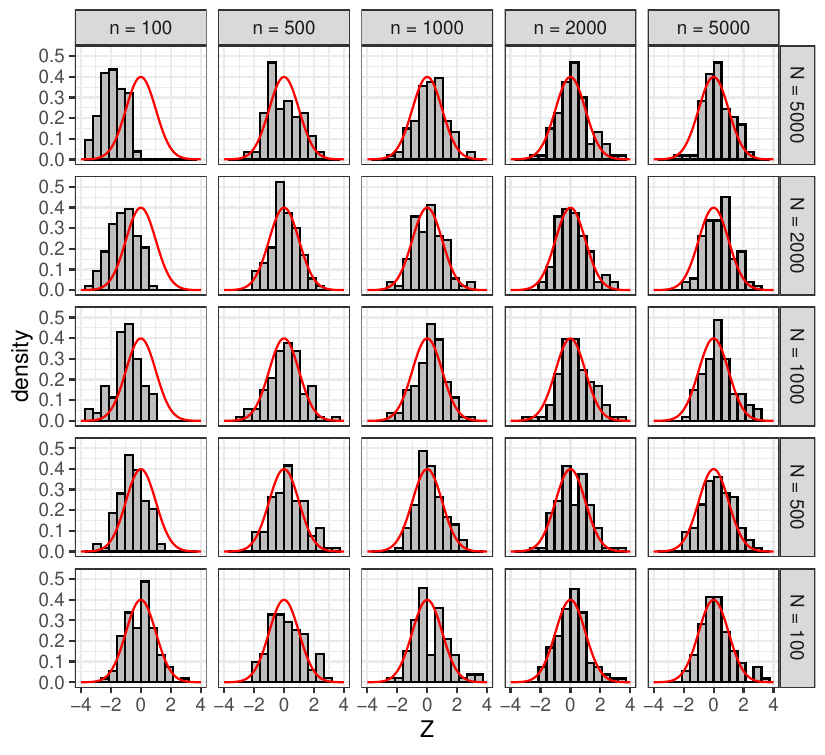}
		\caption{Distribution of the z-statistic for different combinations of the number of independent trials $N$ and the within-trial sample sizes $n$. This statistic is defined as $z = (\Hat{\rho}_{\text{trial}} - \rho_{\text{trial}}) / \widehat{\mathrm{SE}}$, where $\widehat{\mathrm{SE}}$ is the standard error of the estimated trial-level correlation coefficient. The red solid line represents the standard normal density.}
		\label{fig:failure-asymptotics}
	\end{figure}

	The failure of the z-statistic to follow a standard normal distribution when $N \gg n$ also leads to poor coverage of confidence intervals, whether they are indirectly based on the z-statistic (i.e., Wald confidence intervals) or constructed through the bootstrap. This is illustrated in Figure \ref{fig:failure-asymptotics-coverage} (a) and (b) for the Wald and BCa bootstrap confidence intervals, respectively. For instance, for $n = 100$, coverage is near nominal when $N = 100$ but decreases as $N$ increases to 5000. For $n = 500$ and larger, the coverage is nominal from $N = 100$ to $N = 5000$; however, if we would increase $N$ further, we expect the empirical coverage to decrease for any fixed $n$.

	\begin{figure}
		\centering
		\includegraphics[width=0.8\textwidth]{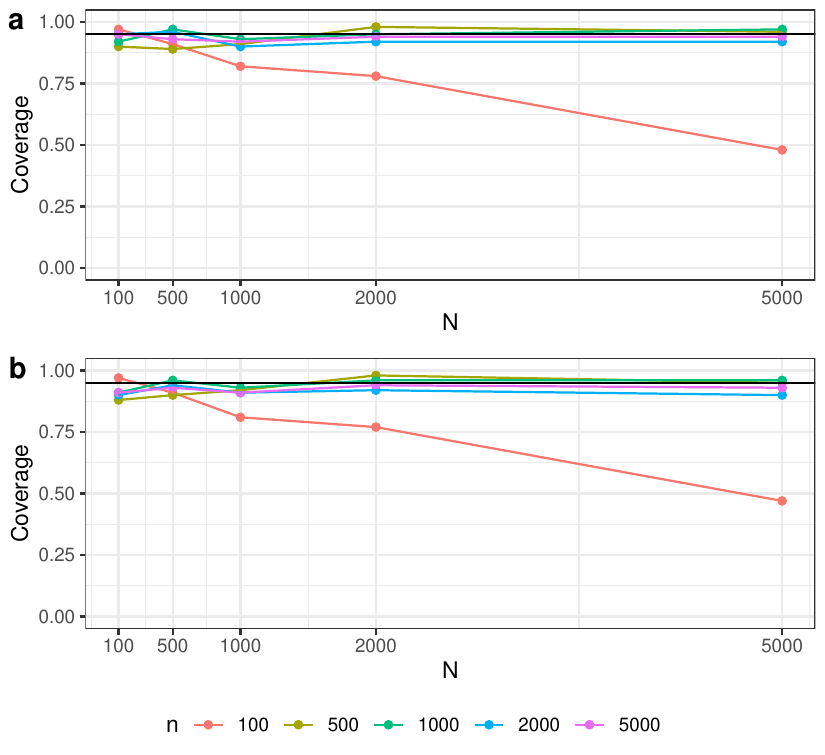}
		\caption{Empirical coverage of the confidence intervals for (a) the Wald and (b) BCa intervals, for the trial-level correlation coefficients as a function of the number of independent trials $N$ for varying within-trial sample sizes $n$. The horizontal line in each subfigure represents the nominal 95\% coverage. }
		\label{fig:failure-asymptotics-coverage}
	\end{figure}

	These results are of little practical relevance to the meta-analysis setting, however, as most meta-analyses have $n \gg N$ and $N$ is anyhow often very small. Still, these results may be useful beyond the meta-analysis setting where the observations are empirical distributions and the target parameters are related to the distribution of the ``true'' distributions (i.e., $P^T$) from which these empirical distributions (i.e., $\mathbb{P}^T_n$) are approximations. See, for instance, \textcite{lin2023causal} for a discussion of causal inference with distributional outcomes and \textcite{peng2026estimating} for explicit corrections for this approximation error.

	\appendixclearpage
	\section{Data Application}\label{appendix:data-application}

	This section contains additional results for the data application and provides additional details on the methodological choices. 
	We begin by exploring the data. Next, we describe how the neutralization titers are adjusted to the circulating strains.
	We then detail how the surrogate indices were estimated and evaluate their predictive performance.
	Finally, we provide additional results for the meta-analyses, including results from Bayesian meta-analyses.

	\subsection{Antibody Markers in Naive Subjects}\label{appendix:naive-subject-markers}

	SARS-CoV-2-naive subjects cannot have antibodies against the virus prior to exposure. Therefore, all naive placebo recipients are assumed to have the lowest possible antibody value, defined as the highest assay lower limit across the trials (10.84 BAU/ml for IgG Spike and 2.61 IU50/ml for nAb ID50) divided by two. Any vaccine recipient with an antibody measurement falling below these harmonized thresholds was also assigned this same standardized value.

	\subsection{Data Exploration}\label{appendix:data-description}

	Figure \ref{fig:distribution-risk-score} displays the distribution of the risk score (defined in the figure caption) by trial and treatment group. This figure shows that there are substantial differences in the distribution of the risk score across trials, indicating that the probability of COVID-19 varies considerably between trials. These between-trial differences underscore the importance of accounting for trial-specific baseline risks when estimating the surrogate index.
	We initially included the risk score in the surrogate index estimation but later decided against it because it obscured the interpretation: the risk score is estimated from the placebo-arm data (including the clinical endpoint) in each trial separately and hence would only be known in a new trial after observing the clinical endpoint.

	\begin{figure}
		\centering
		\includegraphics[width=0.8\textwidth]{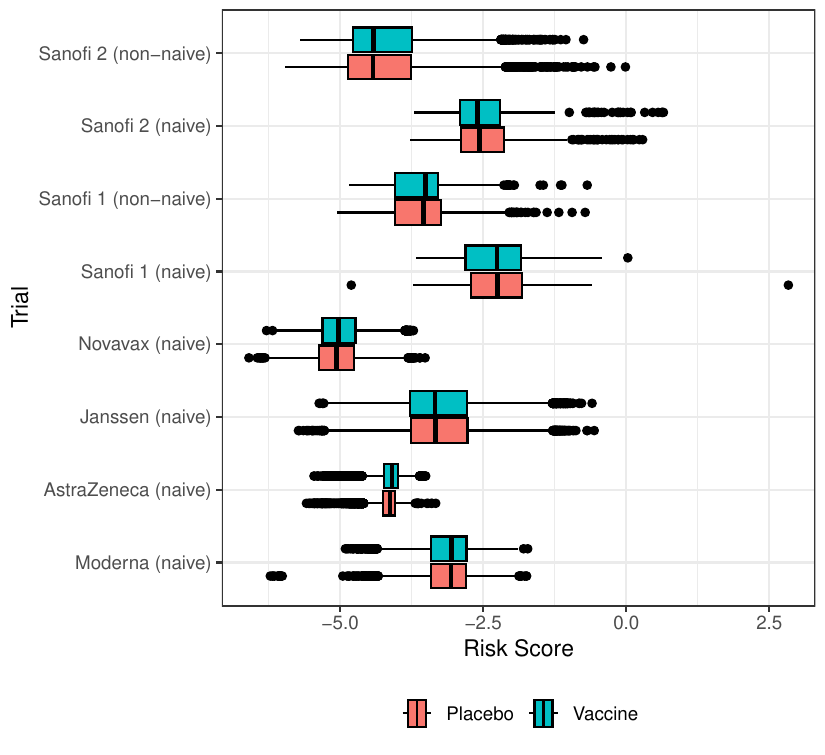}
		\caption{Distribution of the risk score by trial and treatment group. The risk score is defined as the logit of the predicted risk, based on an ensemble statistical learning algorithm trained on the placebo arm \parencite{Team2022}. The risk score was built for each trial separately except for Sanofi where two risk scores were developed, both pooling data over Stage 1 and Stage 2, one for naive and one for non-naive.}
		\label{fig:distribution-risk-score}
	\end{figure}

	Figure \ref{fig:distribution-titers} displays the distribution of IgG Spike, nAb ID50, and adjusted nAb ID50 by trial and treatment group, revealing substantial differences in the distribution of the antibody markers across trials. Furthermore, the distributions of these markers do not completely overlap between trials. This lack of overlap makes trial-level overfitting more likely, as further discussed in Appendix \ref{appendix:overfitting-of-the-surrogate-index-estimator}.


	\begin{figure}
		\centering
		\includegraphics[width=0.8\textwidth]{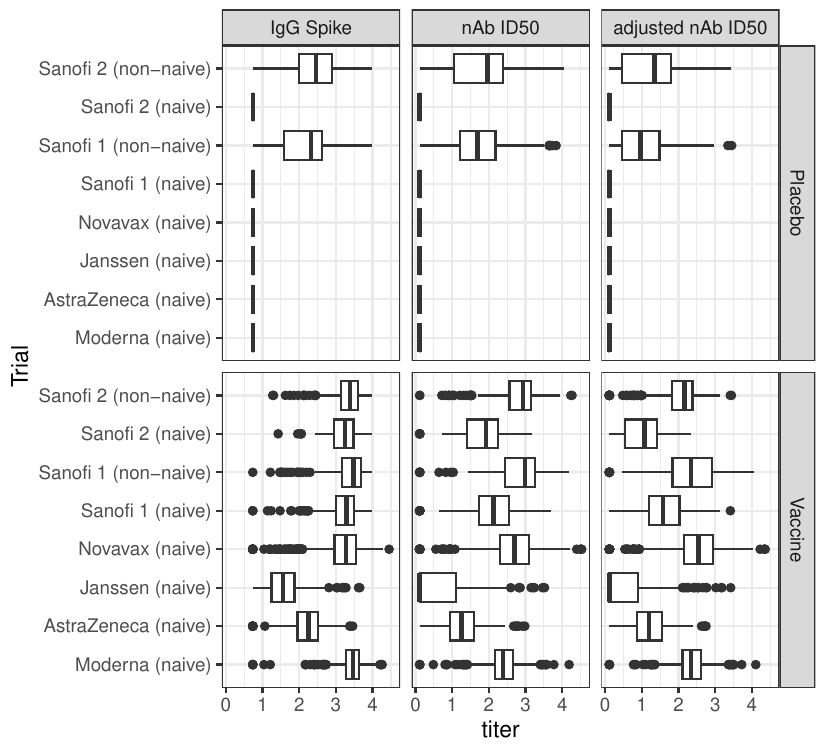}
		\caption{Distribution of IgG Spike, nAb ID50, and adjusted nAb ID50 by trial and treatment group. Antibody markers in the placebo groups of Moderna, AstraZeneca, Janssen, Novavax, and the naive Sanofi trials are set to the assay's lower limit of detection divided by 2, as these participants are SARS-CoV-2 naive and are not expected to have antibodies against SARS-CoV-2, as was verified by measurement of antibodies from random samples. The boxplots are weighted by the case-cohort sampling weights.}
		\label{fig:distribution-titers}
	\end{figure}

	Figure \ref{fig:cumulative-incidence} displays the estimated cumulative incidence functions of virologically confirmed symptomatic SARS-CoV-2 infection by trial and treatment group. This figure shows that there are substantial differences in cumulative incidence functions across trials and treatment groups. There is also substantial variability in the treatment effects (i.e., VE for symptomatic infection by 80 days) between trials. This variability is desirable, as between-trial differences in treatment effects on the clinical endpoint are required for a meaningful trial-level surrogacy analysis. Indeed, the trial-level correlation quantifies the amount of variability in the trial-level treatment effects on the clinical endpoint that is explained by a linear regression on the treatment effects on the surrogate (index). If this variability is minimal, there is little variability to explain. In the limiting case where there is no variability in the trial-level treatment effects on the clinical endpoint, the trial-level correlation is ill defined.

	\begin{figure}
		\centering
		\includegraphics[width=0.8\textwidth]{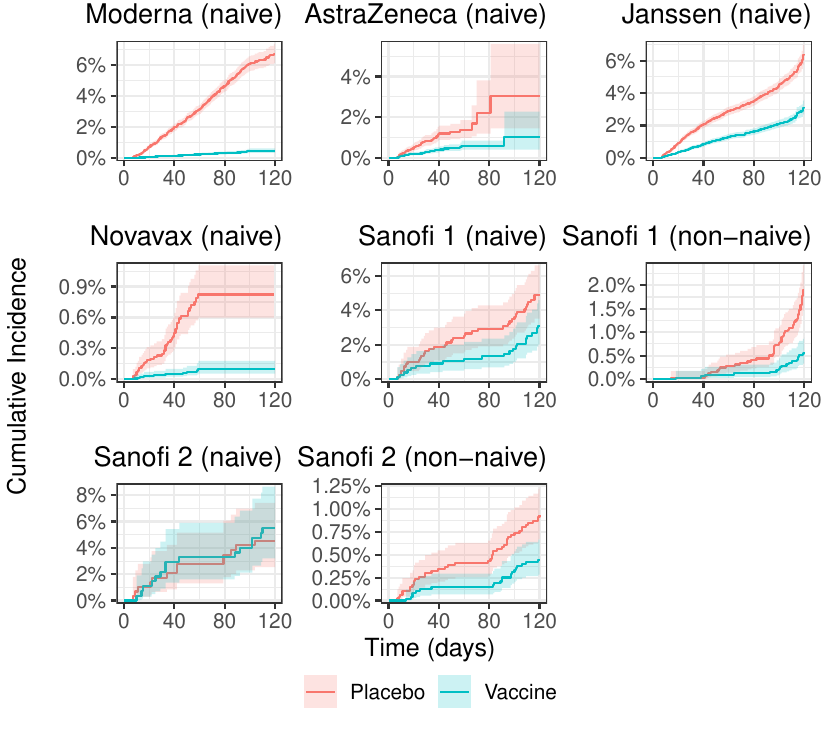}
		\caption{Estimated cumulative incidence functions (solid lines) of COVID-19 by trial and treatment group. The cumulative incidence functions are estimated using the Kaplan--Meier estimator. Shaded areas represent pointwise 95\% confidence intervals. Note that the cumulative incidences are shown up to 120 days post marker measurement visit, even though some trials have substantially longer follow-up.}
		\label{fig:cumulative-incidence}
	\end{figure}

	\subsection{Adjustment of Neutralization Titer to Circulating Strains}\label{appendix:adjustment-neutralization-titer}

	The neutralization titer measured in the vaccine trials quantifies neutralization against the D614G reference strain. However, as detailed in Section \ref{sec:data-description}, the D614G strain was not representative of the strains circulating in some trials. For example, during the Sanofi trials, the dominant strain was the Omicron strain.
	When there is such a mismatch between the measured neutralization titer and the circulating strains, the practical relevance of trial-level surrogacy evaluation becomes questionable. To address this, we adjust the neutralization titer measured against the reference strain to approximate the neutralization titer against the circulating strains. This adjustment allows us to evaluate whether the neutralization titer against the circulating strains serves as a meaningful trial-level surrogate for COVID-19 caused by symptomatic infections with those strains.

	In what follows, we describe the neutralization titer adjustment for data from a single trial. The same approach can be repeated for each trial separately.
	Let $s_{\text{ref}, i}$ be the neutralization titer (on the ID50 scale, so not log-transformed) against the reference strain for subject $i$ and $s_{v, i}$ the neutralization titer against strain $v$ for subject $i$. We assume a vaccine- and variant-specific reduction in titer against non-reference strains, constant across subjects:
	\begin{equation}\label{eq:adjustment-neutralization-titer}
		s_{v, i} = \frac{1}{c_v} \cdot s_{\text{ref}, i}
	\end{equation} 
	where $c_v$ is the \textit{abrogation coefficient}. Although $c_v$ is unknown, it can be estimated using observations of $s_{v, i}$ and $s_{\text{ref}, i'}$ as the geometric mean titer (GMT) ratio\footnote{Note that we are using indices $i$ and $i'$ here because the GMT ratio can be estimated even if $s_{v, i}$ and $s_{\text{ref}, i'}$ are not measured on the same subjects.}:
	\begin{equation*}
		\frac{\text{GMT}(s_{\text{ref}})}{\text{GMT}(s_v)}= \frac{e^{E\left( \log s_{\text{ref}, i'} \right)}}{e^{E \left( \log s_{v, i} \right)}} = \frac{e^{E\left( \log s_{\text{ref}, i'} \right)}}{e^{E \left( \log \left( s_{\text{ref}, i} \cdot \frac{1}{c_v} \right) \right)}} = e^{E \left( \log s_{\text{ref}, i'} \right) - E \left( \log s_{\text{ref}, i} \right) + \log c_v} = e^{\log c_v} = c_v
	\end{equation*}
	We use the estimated GMT ratios reported in the literature, which are summarized in Table \ref{tab:GMT-ratios} for each trial. For the Sanofi trials, we consider the GMT ratios for naive and non-naive participants separately.

	\begin{table}[htb]
		\centering
		\rotatebox{90}{
		\begin{minipage}{\textheight}
		\caption{Estimated GMT ratios for the neutralization titer against the reference strain and the circulating strains.}
		\label{tab:GMT-ratios}
		\begin{tabular}{ccccccccccccccc}
			\toprule
			Trial & Ref. & Epsilon & Gamma & Zeta & Alpha & Delta & Lambda & Iota & Beta & Mu & Omicron & BA.1 & BA.2 & BA.4.5 \\
			\midrule
			Moderna & 1 & 2.1 & 3.2 & 3.2 & - & - & - & - & - & - & - & - & - & - \\
			AstraZeneca & 1 & 1.55 & 3.4 & - & 0.9 & 1.55 & 1.55 & - & - & - & - & - & - & - \\
			Janssen & 1 & 1.55 & 3.4 & 2.2 & 0.9 & 1.55 & 1.55 & 0.9 & 3.6 & 3.6 & - & - & - & - \\
			Novavax & 1 & 2.1 & 3.2 & - & 1.2 & - & - & 2.3 & 7.4 & - & - & - & - & - \\
			Sanofi 1 Naive & 1 & - & - & - & - & 2 & - & - & - & - & 4.2 & 4.9 & 3.3 & 4 \\
			Sanofi 1 Non-Naive Vaccine & 1 & - & - & - & - & 0.59 & - & - & - & - & 4.7 & 7.3 & 4.3 & 6.5 \\
			Sanofi 1 Non-Naive Placebo & 1 & - & - & - & - & 0.43 & - & - & - & - & 5.2 & 11.8 & 9.1 & 11.4 \\
			Sanofi 2 Naive & 1 & - & - & - & - & 1.4 & - & - & - & - & 3.3 & 6.3 & 5.8 & 9.1 \\
			Sanofi 2 Non-Naive Vaccine & 1 & - & - & - & - & 1.1 & - & - & - & - & 4 & 7.7 & 5.2 & 6.6 \\
			Sanofi 2 Non-Naive Placebo & 1 & - & - & - & - & 1.1 & - & - & - & - & 4.4 & 4.6 & 3.7 & 4.3 \\
			\bottomrule
		\end{tabular}
		\end{minipage}
		}
	\end{table}

	Since multiple strains circulated in each trial, computing $s_v$ for a single strain is insufficient. Instead, we compute the weighted average neutralization titer against all circulating strains. The weights for each strain are subject-specific. Specifically, the weight for strain $v$ and subject $i$ is denoted by $\pi_{v, i}$ and equals the proportion of the circulating strains that is strain $v$ for individual $i$, averaged over the follow-up period for individual $i$\footnote{Suppose that there are only two strains $v_1$ and $v_2$, and participant $i$ is followed for 3 days where $30\%$, $40\%$, and $50\%$ of the COVID-19 cases were due to strain $v_1$ on day 1, 2, and 3, respectively. Then $\pi_{v_1, i} = \frac{1}{3} \left( 0.3 + 0.4 + 0.5 \right)$.}. These averaged proportions were obtained using circulating SARS-CoV-2 strain empirical frequencies from GISAID (using sequences from a subject's local geography) for all subjects in the vaccine trials \parencite{gisaid_variants,shu2017gisaid}.
	The adjusted neutralization titer $\tilde{s}_i$ can then be expressed as:
	\begin{equation*}
		\tilde{s}_i = \sum_{v} \pi_{v, i} \cdot s_{v, i} = \sum_{v} \pi_{v, i} \cdot \frac{1}{c_v} \cdot s_{\text{ref}, i} = s_{\text{ref}, i} \cdot \sum_{v} \frac{\pi_{v, i}}{c_v}.
	\end{equation*}
	Thus, for each individual $i$, the adjustment factor $\sum_{v} \frac{\pi_{v, i}}{c_v}$ is computed and multiplied by the original titer $s_{\text{ref}, i}$ to obtain $\tilde{s}_i$, the titer adjusted to the circulating strains.

	\begin{remark}[missing GMT ratios]
		The estimated GMT ratios in Table \ref{tab:GMT-ratios} cover the most prevalent strains in each trial. However, for some strains that circulated in some of the trials, we do not have estimated GMT ratios. This is only a minor practical issue because the missing strains had low prevalence. 
		To avoid any further problems caused by the missing estimated GMT ratios, we set the variant-specific weights $\pi_{v, i}$ to zero for the missing strains.
	\end{remark}

	The nAb ID50 titers are plotted against the adjusted nAb ID50 titers in Figure \ref{fig:scatterplot-titer-adjustment}. This figure shows that, as expected, the adjustment is larger for the Sanofi trials because the circulating strains in these trials were more distant from the reference strain than in the other trials.

	\begin{figure}
		\centering
		\includegraphics[width=0.8\textwidth]{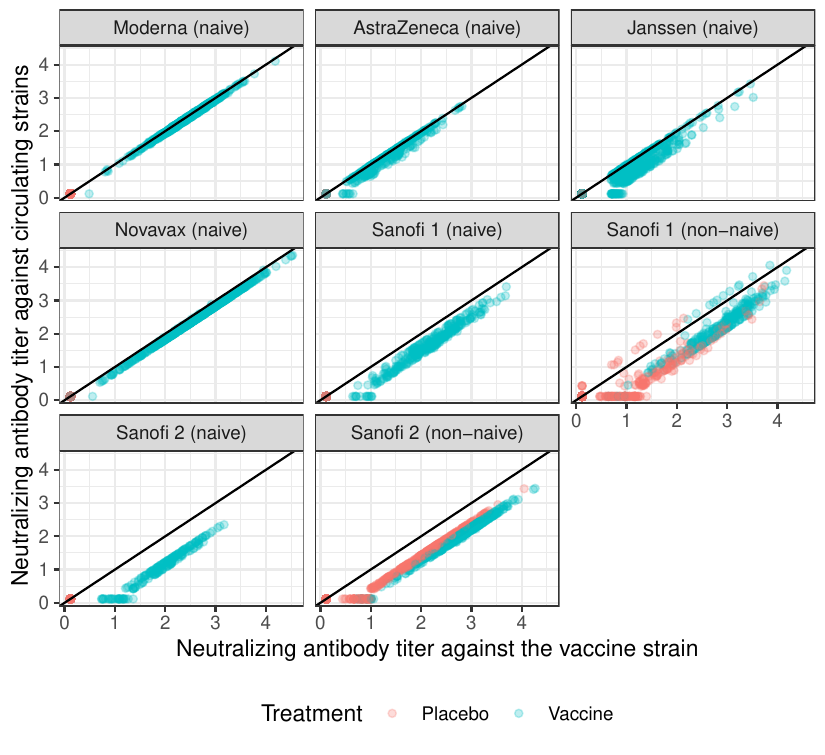}
		\caption{nAb ID50 titers against the adjusted nAb ID50 titers.}
		\label{fig:scatterplot-titer-adjustment}
	\end{figure}

	\paragraph{Limitations}

	The adjustment described above has important limitations. We highlight a few of them here.

	First, the adjustment is based on the assumption that there is a constant multiplicative reduction in neutralization titer against the non-reference strains across subjects (i.e., the constant $c_v$ is the same for all subjects from the same trial). This assumption may not hold in practice, as the reduction in neutralization titer may vary across individuals. Still, we expect this assumption to hold approximately in practice because neutralization titers against different strains are often strongly correlated with Spearman rank correlations $> 0.9$.

	Second, the adjustment is based on estimated GMT ratios. We do not take the sampling variability of these estimates into account, so if the GMT ratio estimates are imprecise, the adjustment may not be accurate.

	Third, we use a weighted average of the neutralization titers against the circulating strains. This summary loses information. 
	For example, a given subject may have a high neutralization titer against all circulating strains except one. In this case, the subject is still at risk of infection with the strain against which they have a low neutralization titer, even though their weighted average neutralization titer is high (and thus suggests that the subject is at low risk of COVID-19 caused by the circulating strains).
	However, we expect this to be a minor issue because neutralization titers against different strains are generally strongly correlated.

	\subsection{Surrogate Index Estimators}\label{appendix:surrogate-index-estimators}

	To estimate the surrogate index, we consider the following baseline covariates:
	\begin{itemize}
		\item \texttt{age}. Age in years measured at the time of enrollment in the trial.
		\item \texttt{BMI}. Body mass index (BMI) divided into four categories: underweight (BMI < 18.5), normal weight (18.5 $\le$ BMI < 25), overweight (25 $\le$ BMI < 30), and obese (BMI $\ge$ 30).
		\item \texttt{sex}. Biological sex.
		\item \texttt{high\_risk}. A binary variable indicating whether the subject is at high risk of severe COVID-19. 
		\item \texttt{logit\_prob\_infection}. The logit of the estimated cumulative incidence of COVID-19 at the 80 days post marker-measurement visit in the placebo arm of the participant's trial.
	\end{itemize}
	We further define \texttt{time\_to\_event} as the time from 6 days post marker-measurement visit until the infection event or censoring, and \texttt{event} as the event indicator. \texttt{infection\_80d} denotes whether the infection event was observed within 80 days post marker-measurement visit; subjects censored before 80 days have \texttt{infection\_80d = NA}. 
	We let \texttt{marker} refer to the base-10 logarithm of IgG Spike, nAb ID50, or adjusted nAb ID50.
	
	We consider the following two approaches to estimate the surrogate index: 
	\begin{itemize}
		\item \textbf{Cox proportional hazards model.} The surrogate index is estimated using a Cox proportional hazards model with the following form: \texttt{Surv(time\_to\_event, event) $\sim$ sex + high\_risk + BMI + bs(age) + strata(trial) + ns(marker, knots = c(1.5, 2.5), Boundary.knots = c(0, 4))}. We truncate follow-up at 120 days post marker-measurement visit; hence, all observations are censored at 120 days (even if longer follow-up is available for some subjects). Observations are weighted by the case-cohort weights. 
		Probability of COVID-19 by 80 days post marker-measurement visit is estimated by using the estimated regression coefficients and the estimated baseline hazard function from the Janssen trial. This estimated probability is a prediction of \texttt{infection\_80d}.
		\item \textbf{SuperLearner.} The surrogate index is estimated using the SuperLearner algorithm implemented in the \textit{sl3} R package \parencite{van2007super, coyle2021sl3-rpkg} using \texttt{infection\_80d} as the outcome. We use the following library of candidate algorithms: 
		\begin{enumerate}
			\item Logistic regression with main effects for all covariates and the surrogate.
			\item Logistic regression with main effects for only \texttt{logit\_prob\_infection} and the surrogate.
			\item Logistic regression with main effects for all covariates and the following natural cubic splines for the surrogate: \texttt{ns(marker, knots = c(1.5, 2.5), Boundary.knots = c(0, 4))}.
			\item Logistic regression with main effects for all covariates and the surrogate, and an interaction term between the \texttt{marker} and \texttt{sex}.
			\item Logistic regression combining all terms from the previous two models.
			\item Generalized additive model (GAM) with smooth terms for all continuous covariates and the surrogate. We use the defaults of \texttt{Lrnr\_gam}.
		\end{enumerate}	
		We use the meta-learner as implemented in \texttt{Lrnr\_solnp} with the binomial log-likelihood loss function, and we use leave-one-trial-out cross-validation instead of the default.		
		The observations with a missing \texttt{infection\_80d} or \texttt{marker} are excluded; the remaining observations are weighted by the product of the inverse probability of censoring weights (IPCWs) and the case-cohort weights. The IPCWs are estimated using the Kaplan--Meier estimator stratified by trial and treatment arm.
	\end{itemize}

	We have estimated the surrogate index by pooling the individual-participant data from all trials except the non-naive Sanofi trials. Receiver operating characteristic (ROC) curves for predicting \texttt{infection\_80d} using the estimated surrogate indices are shown in Figures \ref{fig:roc-pseudoneutid50-naive-only}--\ref{fig:roc-bindspike-naive-only}, stratified by trial. The ROC curves for trials whose data were not used for estimating the surrogate index (Sanofi 1 (non-naive) and Sanofi 2 (non-naive)) serve as an external validation of predictive performance.
	These ROC curves are based on subjects with observed \texttt{infection\_80d} and \texttt{marker}, weighted by the product of the IPCWs and the case-cohort weights.

	\begin{figure}
		\centering
		\includegraphics[width=0.8\textwidth]{"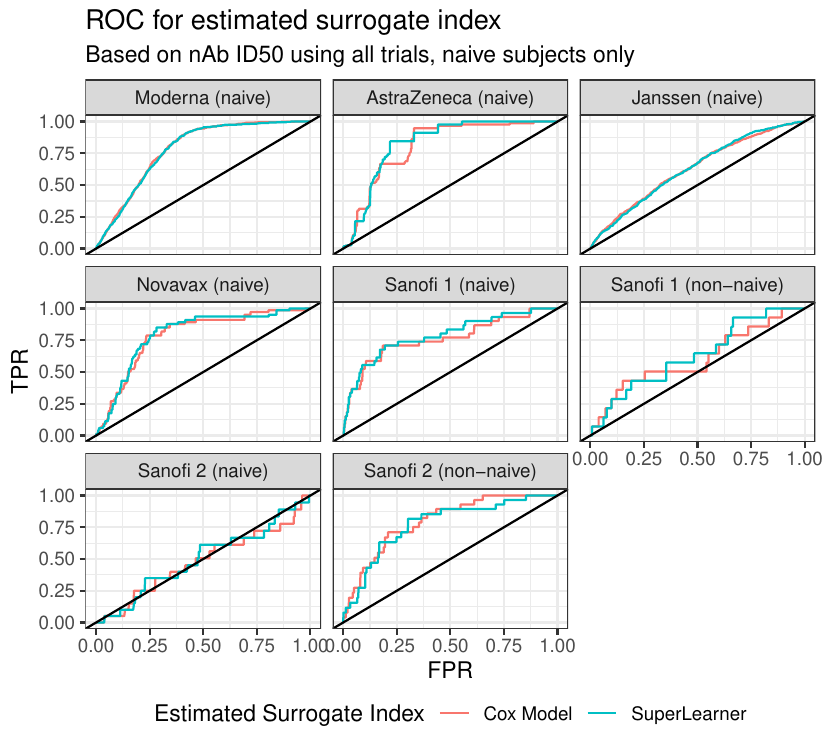"}
		\caption{Receiver operating characteristic (ROC) curves for predicting \texttt{infection\_80d} with the estimated surrogate index based on nAb ID50. The surrogate index was estimated using all trials with only naive subjects. }
		\label{fig:roc-pseudoneutid50-naive-only}
	\end{figure}

	\begin{figure}
		\centering
		\includegraphics[width=0.8\textwidth]{"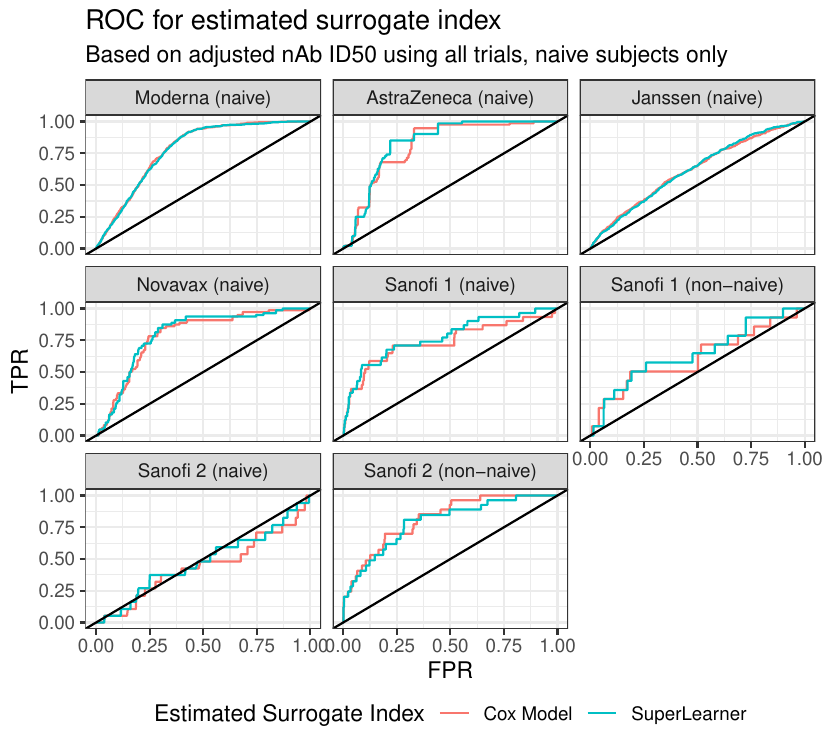"}
		\caption{Receiver operating characteristic (ROC) curves for predicting \texttt{infection\_80d} with the estimated surrogate index based on the adjusted nAb ID50. The surrogate index was estimated using all trials with only naive subjects.}
		\label{fig:roc-pseudoneutid50-adjusted-naive-only}
	\end{figure}

	\begin{figure}
		\centering
		\includegraphics[width=0.8\textwidth]{"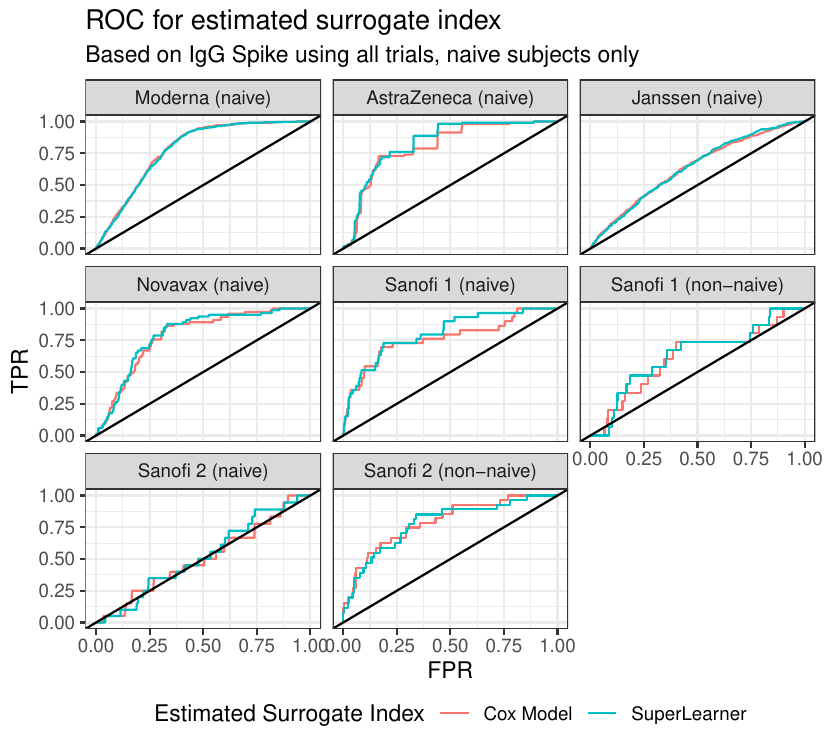"}
		\caption{Receiver operating characteristic (ROC) curves for predicting \texttt{infection\_80d} with the estimated surrogate index based on IgG Spike. The surrogate index was estimated using all trials with only naive subjects. }
		\label{fig:roc-bindspike-naive-only}
	\end{figure}

	\subsection{Additional Results}

	\subsubsection{Additional Non-Parametric Meta-Analysis Results}

	Tables \ref{table:trial-level-correlations-naive-only-sandwich} and \ref{table:trial-level-correlations-pseudo-real-sandwich} present the same results as Tables \ref{table:trial-level-correlations-naive-only} and \ref{table:trial-level-correlations-pseudo-real} in the main text, but using Wald-type confidence intervals based on the sandwich covariance estimator and a $t$ distribution with $N - 1$ degrees of freedom (instead of the BCa bootstrap confidence intervals).
	These confidence intervals are narrower than the BCa bootstrap confidence intervals, but as suggested by the simulations, they may have undercoverage. 

\begin{table}
        \centering
	\caption{Estimated trial-level correlations with 95\% Wald confidence intervals based on the sandwich estimator for the standard errors for the meta-analysis that includes all naive trials.
	The Antibody Marker column corresponds to the meta-analyses where the antibody marker (without any transformation) is evaluated as a trial-level surrogate endpoint.}
        \label{table:trial-level-correlations-naive-only-sandwich}
        \begin{tabular}{lccc}
            \toprule
            Surrogate & Antibody Marker & \multicolumn{2}{c}{Est.~Surrogate Index} \\
            & & Cox Model & SuperLearner \\
            \midrule
            IgG Spike & 0.33 (-0.50, 0.84) & 0.34 (-0.56, 0.87) & 0.33 (-0.53, 0.86) \\
            nAb ID50 & 0.47 (-0.21, 0.84) & 0.56 (-0.10, 0.88) & 0.57 (-0.07, 0.88) \\
            Adjusted nAb ID50 & 0.77 (0.34, 0.93) & 0.88 (0.66, 0.96) & 0.87 (0.60, 0.96) \\
            \bottomrule
        \end{tabular}
    \end{table} 

\begin{table}[htb]
        \centering
	\caption{Estimated trial-level correlations with 95\% Wald confidence intervals based on the sandwich estimator for the standard errors for the meta-analysis with the pseudo-real data.
	The Antibody Marker column corresponds to the meta-analyses where the antibody marker (without any transformation) is evaluated as a trial-level surrogate endpoint.}
        \label{table:trial-level-correlations-pseudo-real-sandwich}
        \begin{tabular}{llccc}
            \toprule
            Pseudo-Real Data & Surrogate & Antibody Marker & \multicolumn{2}{c}{Est.~Surrogate Index} \\
            & & & Cox Model & SuperLearner \\
            \midrule
            \multirow{3}{*}{Increased Precision} 
                & IgG Spike & 0.32 (-0.48, 0.83) & 0.33 (-0.54, 0.86) & 0.31 (-0.51, 0.84) \\
                & nAb ID50 & 0.44 (-0.21, 0.82) & 0.54 (-0.11, 0.86) & 0.54 (-0.08, 0.86) \\
                & Adjusted nAb ID50 & 0.73 (0.31, 0.91) & 0.84 (0.58, 0.94) & 0.84 (0.54, 0.94) \vspace{0.1cm}\\
            \multirow{3}{*}{Increased Number of Trials} 
                & IgG Spike & 0.34 (0.01, 0.60) & 0.35 (-0.01, 0.63) & 0.33 (-0.01, 0.61) \\
                & nAb ID50 & 0.47 (0.23, 0.65) & 0.57 (0.35, 0.73) & 0.58 (0.37, 0.73) \\
                & Adjusted nAb ID50 & 0.77 (0.65, 0.86) & 0.89 (0.83, 0.93) & 0.88 (0.81, 0.92) \\
            \bottomrule
        \end{tabular}
    \end{table}

	\subsubsection{Bayesian Meta-Analysis}\label{appendix:bayesian-analysis}

	We consider a Bayesian analysis based on the bivariate normal model as described in Appendix \ref{appendix:bayesian} and studied in the simulations. We use the same estimated surrogate indices as in the non-parametric frequentist analysis and ignore the fact that $g_N$ depends on the observed data. 
	The Bayesian analyses were performed using \textit{rstan} with 4 chains and 20,000 iterations per chain \parencite{RStan}. The first 5,000 iterations were discarded as burn-in. The posterior distributions for $\rho_{\text{trial}}^{g_N}$ are shown in Figures \ref{fig:posterior-rho-trial} and \ref{fig:posterior-rho-trial-prop-regression}.

	\begin{figure}
		\centering
		\includegraphics[width=0.8\textwidth]{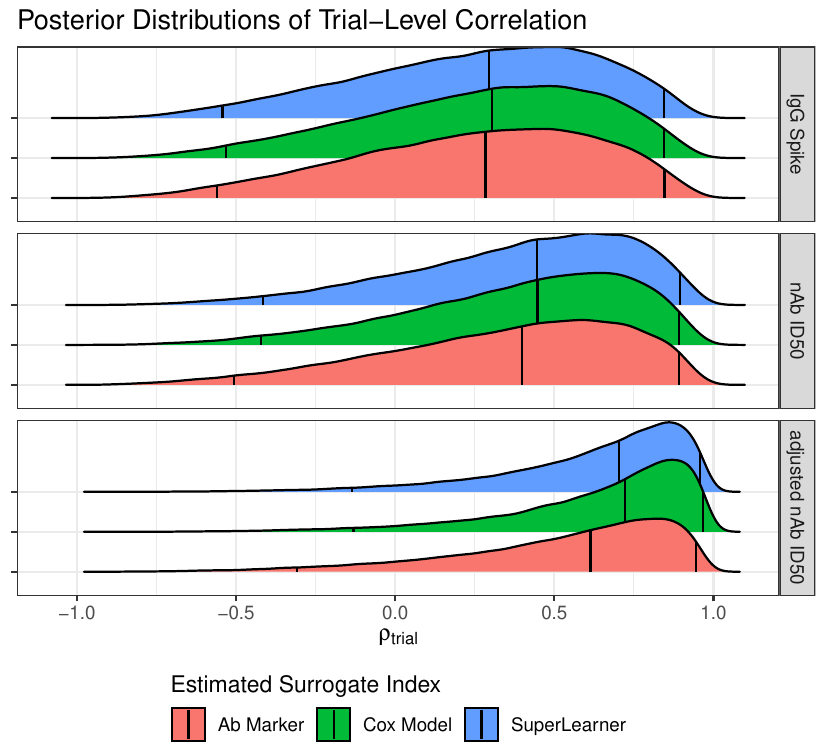}
		\caption{Posterior distributions for $\rho_{\text{trial}}^{g_N}$ from the Bayesian analysis assuming a bivariate normal model. The posterior distributions are shown for the original surrogate and the estimated surrogate indices.
		The first and third vertical lines represent the 95\% equal-tailed credible interval limits, and the middle vertical line represents the posterior median.}
		\label{fig:posterior-rho-trial}
	\end{figure}

	\begin{figure}
		\centering
		\includegraphics[width=0.8\textwidth]{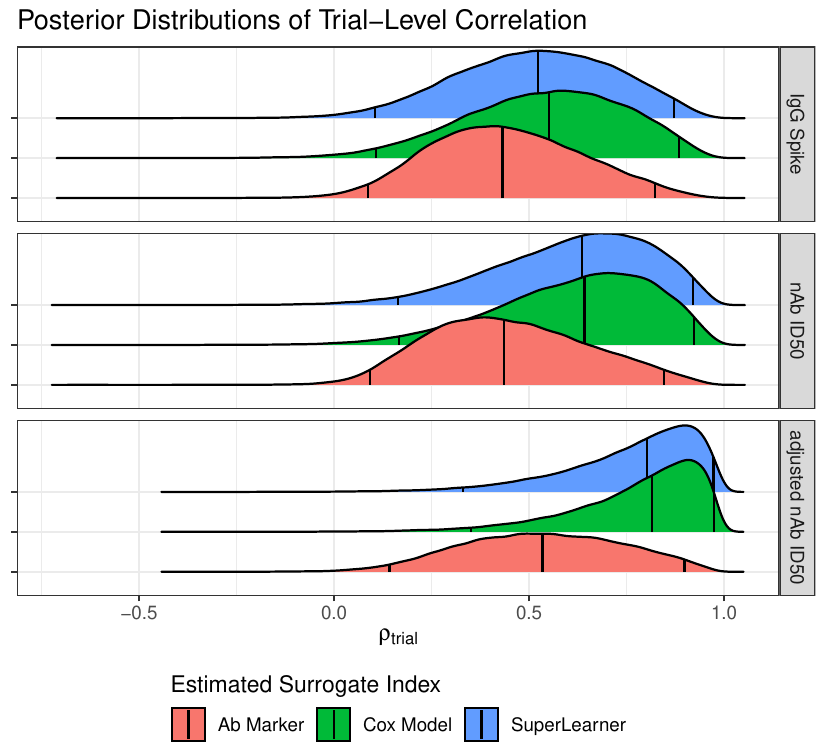}
		\caption{Posterior distributions for $\rho_{\text{trial}}^{g_N}$ from the Bayesian analysis assuming a bivariate normal model with proportional regression of $\beta$ on $\alpha^g$, as described in Appendix \ref{appendix:bayesian}. The posterior distributions are shown for the original surrogate and the estimated surrogate indices.
		The first and third vertical lines represent the 95\% equal-tailed credible interval limits, and the middle vertical line represents the posterior median.}
		\label{fig:posterior-rho-trial-prop-regression}
	\end{figure}
	
\clearpage
\includepdf[pages={-}]{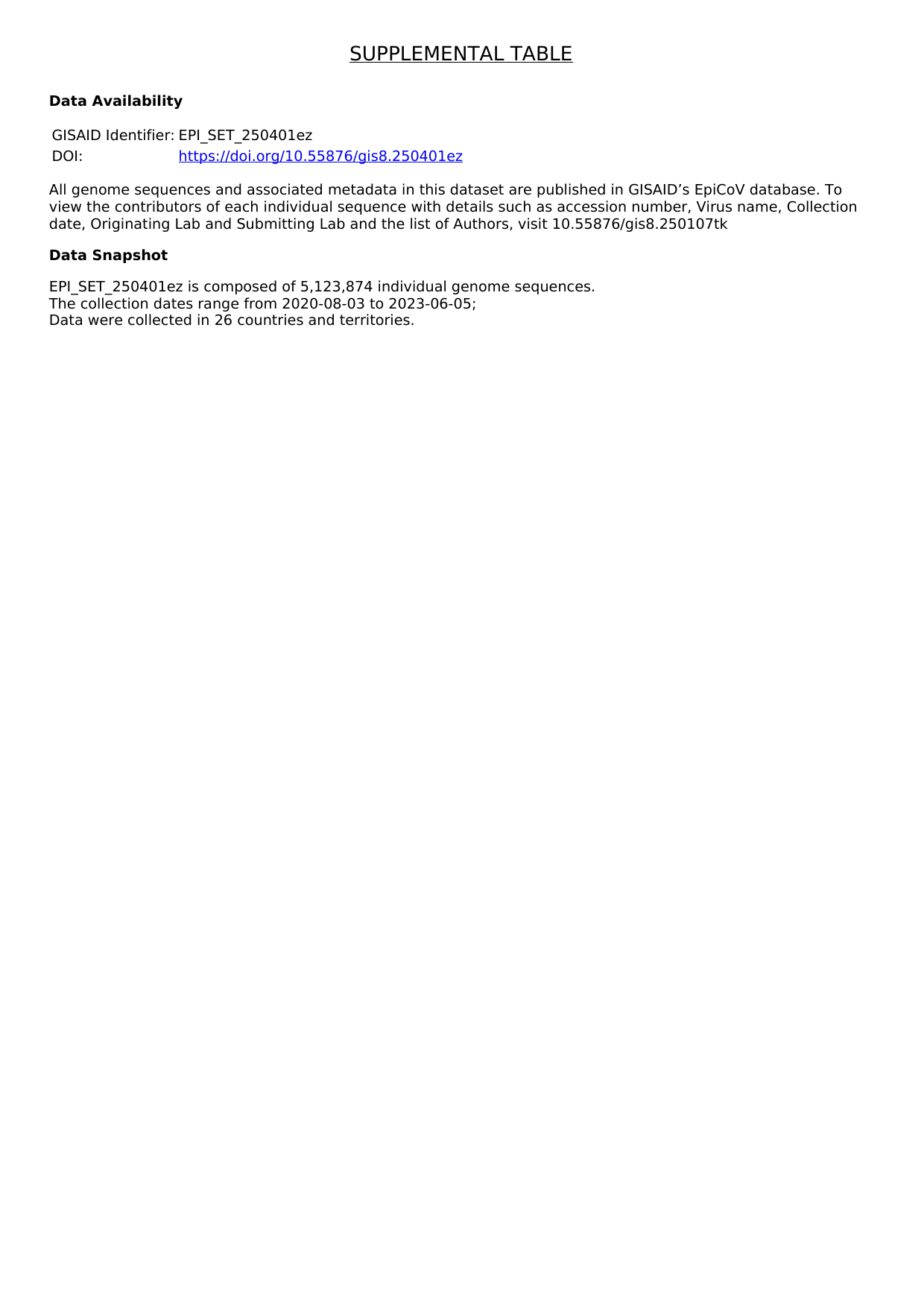}
	
\end{document}